\newcommand{\openone}{\leavevmode\hbox{\small1\normalsize\kern-.33em1}}
\def\UrlSpecials{\do\~{\kern -.15em\lower .7ex\hbox{~}\kern .04em}} \catcode`~=13 
\newcommand{\nn}{\nonumber}
\newcommand{\calA}{\mathcal{A}}
\newcommand{\calB}{\mathcal{B}}
\newcommand{\calD}{\mathcal{D}}
\newcommand{\calE}{\mathcal{E}}
\newcommand{\calF}{\mathcal{F}}
\newcommand{\calN}{\mathcal{N}}
\newcommand{\calT}{\mathcal{T}}
\newcommand{\bJ}{\mathbf{J}}
\newcommand{\bs}{\mathbf{s}}
\newcommand{\bS}{\mathbf{S}}
\newcommand{\bu}{\mathbf{u}}
\newcommand{\bU}{\mathbf{U}}
\newcommand{\bV}{\mathbf{V}}
\newcommand{\bx}{\mathbf{x}}
\newcommand{\bX}{\mathbf{X}}
\newcommand{\rmc}{\mathrm{c}}
\newcommand{\rmC}{\mathrm{C}}
\newcommand{\rmd}{\mathrm{d}}
\newcommand{\rme}{\mathrm{e}}
\newcommand{\rmP}{\mathrm{P}}
\newcommand{\rmQ}{\mathrm{Q}}
\newcommand{\rmR}{\mathrm{R}}
\newcommand{\rms}{\mathrm{s}}
\newcommand{\rmT}{\mathrm{T}}
\newcommand{\rmV}{\mathrm{V}}
\newcommand{\bbE}{\mathsf{E}}
\newcommand{\bbN}{\mathbb{N}}
\newcommand{\bbR}{\mathbb{R}}
\DeclareMathAlphabet{\mathbsf}{OT1}{cmss}{bx}{n}
\DeclareMathAlphabet{\mathssf}{OT1}{cmss}{m}{sl}
\DeclareSymbolFont{bsfletters}{OT1}{cmss}{bx}{n}  
\DeclareSymbolFont{ssfletters}{OT1}{cmss}{m}{n}
\DeclareMathSymbol{\bsfGamma}{0}{bsfletters}{'000}
\DeclareMathSymbol{\ssfGamma}{0}{ssfletters}{'000}
\DeclareMathSymbol{\bsfDelta}{0}{bsfletters}{'001}
\DeclareMathSymbol{\ssfDelta}{0}{ssfletters}{'001}
\DeclareMathSymbol{\bsfTheta}{0}{bsfletters}{'002}
\DeclareMathSymbol{\ssfTheta}{0}{ssfletters}{'002}
\DeclareMathSymbol{\bsfLambda}{0}{bsfletters}{'003}
\DeclareMathSymbol{\ssfLambda}{0}{ssfletters}{'003}
\DeclareMathSymbol{\bsfXi}{0}{bsfletters}{'004}
\DeclareMathSymbol{\ssfXi}{0}{ssfletters}{'004}
\DeclareMathSymbol{\bsfPi}{0}{bsfletters}{'005}
\DeclareMathSymbol{\ssfPi}{0}{ssfletters}{'005}
\DeclareMathSymbol{\bsfSigma}{0}{bsfletters}{'006}
\DeclareMathSymbol{\ssfSigma}{0}{ssfletters}{'006}
\DeclareMathSymbol{\bsfUpsilon}{0}{bsfletters}{'007}
\DeclareMathSymbol{\ssfUpsilon}{0}{ssfletters}{'007}
\DeclareMathSymbol{\bsfPhi}{0}{bsfletters}{'010}
\DeclareMathSymbol{\ssfPhi}{0}{ssfletters}{'010}
\DeclareMathSymbol{\bsfPsi}{0}{bsfletters}{'011}
\DeclareMathSymbol{\ssfPsi}{0}{ssfletters}{'011}
\DeclareMathSymbol{\bsfOmega}{0}{bsfletters}{'012}
\DeclareMathSymbol{\ssfOmega}{0}{ssfletters}{'012}
\newcommand{\hati}{\hat{i}}
\newcommand{\hatI}{\hat{I}}
\newcommand{\tili}{\tilde{i}}
\newcommand{\hatj}{\hat{j}}
\newcommand{\hatJ}{\hat{J}}
\newcommand{\tilj}{\tilde{j}}
\newcommand{\hats}{\hat{s}}
\newcommand{\hatS}{\hat{S}}
\newcommand{\hatX}{\hat{X}}
\newcommand{\tilX}{\tilde{X}}
\newcommand{\bari}{\bar{i}}
\newcommand{\barj}{\bar{j}}
\newcommand{\barx}{\bar{x}}
\newcommand{\barX}{\bar{X}}
\DeclareMathOperator*{\argmax}{arg\,max}
\DeclareMathOperator*{\argmin}{arg\,min}
\newtheorem{theorem}{Theorem} 
\newtheorem{lemma}[theorem]{Lemma}
\newtheorem{definition}{Definition}
\newcommand{\NT}{N}
\newcommand{\blue}[1]{\textcolor{blue}{#1}} 
\begin{document}

\title{{\huge The   Dispersion of Mismatched Joint~Source-Channel Coding for Arbitrary Sources and Additive Channels} }
\author{\IEEEauthorblockN{Lin Zhou, Vincent Y.~F.~Tan and Mehul Motani} \thanks{The authors are with the Department of Electrical and Computer Engineering (ECE), National University of Singapore (Emails: lzhou@u.nus.edu, vtan@nus.edu.sg, motani@nus.edu.sg). Vincent Y.~F.~Tan is also with the Department of Mathematics, National University of Singapore.}}
\maketitle

\begin{abstract}
We consider a joint source channel coding (JSCC) problem in which we desire to transmit an arbitrary memoryless source over an arbitrary additive  channel. We propose a mismatched    coding architecture that consists of Gaussian codebooks for both the source reproduction sequences and channel codewords. The natural nearest neighbor encoder and decoder, however, need to be judiciously modified to obtain the highest communication rates at finite blocklength. In particular, we consider a unequal error protection (UEP) scheme in which all sources are partitioned into disjoint power type classes. We also regularize the nearest neighbor decoder so that an appropriate measure of the size of each power type class  is taken into account in the decoding strategy. For such an architecture, we derive ensemble-tight second-order and moderate deviations results. Our first-order (optimal bandwidth expansion ratio) result generalizes the seminal results by Lapidoth (1996, 1997). The dispersion of our JSCC scheme is a linear combination of the mismatched dispersions for the channel coding saddle-point problem by Scarlett, Tan and Durisi (2017) and the rate-distortion saddle-point problem by the present authors, thus also generalizing these results.
\end{abstract}

\begin{IEEEkeywords}
Gaussian codebooks, Joint source-channel coding, Nearest neighbor, Ensemble-tightness, Mismatched decoding, Second-order asymptotics, Moderate deviations, Dispersion, Finite blocklength, Unequal error protection
\end{IEEEkeywords}

\section{Introduction}
In joint source-channel coding~\cite{shannon1959coding}, one seeks to find  a necessary and sufficient  condition such that a source sequence of length $k$ can be reliably transmitted over a channel in $n$ channel uses in the sense that the excess-distortion probability for a given distortion level $D$ vanishes. This condition is captured by the maximum attainable ratio of $k$ and $n$, also known as {\em rate}. For discrete memoryless systems, Shannon~\cite{shannon1959coding} showed that this maximum attainable rate is ${C}/{R(D)}$, where $C$ is the capacity of a discrete memoryless channel (DMC) and $R(D)$ is the rate-distortion function of a discrete memoryless source (DMS). Shannon showed that, surprisingly, a separation scheme is optimal in this first-order fundamental limit sense. That is, {\em separately} designing a reliable lossy data compression system (source code) and data transmission system (channel code) is optimal. Often, for simplicity, one assumes that these codes are tailored to the source and channel statistics. However, in practice, codes that do not depend on the statistics of the source and channel are of paramount importance. Such codes form the central focus of the present work.

We are primarily inspired by two of Lapidoth's seminal works~\cite{lapidoth1996,lapidoth1997}. In \cite{lapidoth1996}, he showed that for a  channel coding system, if the codebook is Gaussian and the decoder is constrained to be a nearest neighbor or minimum Euclidean distance decoder, regardless of the statistics of the additive noise, the maximum coding rate one can attain is the Gaussian capacity function. This constitutes a {\em robust} communication system because the rate that one attains is at least as good  (i.e., large) as if the noise is Gaussian as long as the code is so designed. In \cite{lapidoth1997}, Lapidoth considered the rate-distortion counterpart of the same problem and showed that the minimum compression rate one can attain for an arbitrary source  is the Gaussian rate-distortion function  if one  uses minimum Euclidean distance encoding and the codebook is Gaussian. Note that for both the source and channel coding systems, the codes are incognizant of the source and channel laws. These problems are also respectively termed as {\em saddle-point} problems because they characterize the extremal input distribution-noise pair (for channel coding) and the source-test channel pair (for source coding). 

We extend these two works of Lapidoth~\cite{lapidoth1996,lapidoth1997} in two distinct directions. First, we consider a joint source-channel coding (JSCC) setup.  In our JSCC scheme, analogously to~\cite{lapidoth1996,lapidoth1997},  one is constrained to use two random Gaussian codebooks, one for the reproduced source sequences  and one for the channel codewords. However, both minimum Euclidean distance encoding and decoding schemes need to be judiciously modified  to ensure that the best (highest) rates are attained. We describe these modifications in greater detail in Section~\ref{sec:main_contri}. We refer  to the encoding and decoding schemes as   modified minimum distance and modified nearest neighbor schemes respectively. The joint scheme is termed the NN-JSCC scheme (NN stands for ``nearest neighbor'').  Second, instead of focusing solely on the first-order asymptotics (capacity and rate-distortion function), we examine the fundamental limits of such a mismatched decoding setup via a more refined lens. Specifically, we study the second-order and moderate deviation asymptotics of the problem. Our results recover the classical results by Lapidoth~\cite{lapidoth1996,lapidoth1997} and more recent works on second-order asymptotics for the saddle-point problems for channel and source coding studied by Scarlett, Tan and Durisi~\cite{scarlett2017mismatch} and the present authors~\cite{zhou2017refined}.

\subsection{Main Contributions and Related Works} \label{sec:main_contri}
Our main contributions are summarized as follows:

\begin{enumerate}
\item We propose a JSCC architecture using Gaussian codebooks, with modified minimum distance encoding and decoding, to transmit an arbitrary memoryless source over an arbitrary additive memoryless channel. \blue{We argue in Section \ref{sec:remark} that this architecture generalizes and unifies works by Lapidoth~\cite{lapidoth1996,lapidoth1997}.} While the Gaussian codebooks are similar to those in~\cite{lapidoth1996,lapidoth1997}, our encoding  and decoding schemes differ somewhat. To capture the JSCC nature of the problem, we draw inspiration from works by Csisz\'ar~\cite{csiszar1980joint} and Wang, Ingber and Kochman~\cite{wang2011dispersion} who respectively established the error exponent and second-order asymptotics for sending a DMS over a DMC. The authors employed the method of types and an unequal error protection (UEP) scheme (cf.\ Shkel, Tan and Draper~\cite{shkel2015unequal}). In our work, we introduce a natural partition of the source sequences into types; however, the notion of types has to be defined carefully since the source need not be discrete. We also regularize  the nearest neighbor decoder~\cite{lapidoth1996} so that an appropriate measure of the size of each type class  is carefully taken into account in the decoding strategy. Our architecture (which is shown in Figure~\ref{systemmodel}) and subsequent analyses allow us to show that the maximum attainable rate is the ratio between the Gaussian capacity and Gaussian rate-distortion function. 

\item The main contribution, however, is the derivation of  ensemble-tight second-order coding rates and moderate deviations constants for the architecture so described. By allowing a non-vanishing ensemble excess-distortion probability, we shed light on the backoff from the maximum attainable rate at finite blocklengths. This complements the results of  Kostina and Verd\'u~\cite{kostinajscc} who also derived the dispersion of transmitting a Gaussian memoryless source  (GMS) over an additive white Gaussian noise (AWGN) channel. We show that the mismatched dispersion for our NN-JSCC scheme is a linear combination of the mismatched dispersions in the channel coding saddle-point problem by Scarlett, Tan and Durisi~\cite{scarlett2017mismatch} and the rate-distortion saddle-point problem by the present authors~\cite{zhou2017refined}. For these refined results, there are some intricacies pertaining to what one means by {\em Gaussian codebook}. We consider spherical and i.i.d.\ Gaussian codebooks for both the source reproduction sequences and channel codewords and discuss some subtleties of the  second-order results. 

\item Finally, for both the second-order and moderate deviations asymptotic regimes, we show that the separate source-channel coding scheme by combining the corresponding refined asymptotic results in \cite{scarlett2017mismatch} and \cite{zhou2017refined} for channel-coding and rate-distortion saddle-point problems~\cite{lapidoth1996,lapidoth1997} is strictly sub-optimal compared to the newly proposed NN-JSCC scheme. By combining Lapidoth's results in~\cite{lapidoth1996,lapidoth1997} it is, however, easy to see that separation is first-order optimal.
\end{enumerate}

\subsection{Organization of the Rest of the Paper}
The rest of the paper is organized as follows. In Section \ref{sec:model}, we set up the notation, present our joint source-channel coding system and formulate our problems explicitly. In Section \ref{sec:mainresults}, we present our main results and provide corresponding remarks. The proofs of each of the asymptotic results (second-order and moderate deviations) are provided in Sections \ref{proof:jscc:second} and \ref{proof:jscc:mdc} respectively. Technical results that are not central to  the main exposition are relegated to the Appendices.

\section{The Joint Source-Channel Coding Setup}
\label{sec:model}
\subsection{Notation}
\label{sec:notation}
Random variables and their realizations are in upper (e.g.,\ $X$) and lower case (e.g.,\ $x$) respectively. All sets are denoted in calligraphic font (e.g.,\ $\mathcal{X}$). For any two natural numbers $a$ and $b$ we use $[a:b]$ to denote the set of all natural numbers between $a$ and $b$ (inclusive). We let $\exp\{x\}=e^x$. All logarithms are with respect to base $e$. We use $\rmQ(\cdot)$ to denote the   Gaussian complementary cumulative distribution function (cdf) and $\rmQ^{-1}(\cdot )$ its inverse. Let $X^n:=(X_1,\ldots,X_n)$ be a random vector of length $n$ and $x^n=(x_1,\ldots,x_n)$ be a  realization. We use $\|x^n\|=\sqrt{\sum_i x_i^2}$ to denote the $\ell_2$ norm of a vector $x^n\in\bbR^n$. Given two vectors $x^n$ and $y^n$, the (normalized) quadratic distortion measure is defined as $d(x^n,y^n):=\frac{1}{n}\|x^n-y^n\|^2$.  For any random variable $X$, we use $\Lambda_X(\lambda)$ to denote the cumulant generating function $\lambda\in\bbR\mapsto\log \bbE[\exp \{\lambda X\}]$. For any two sequences $\{a_n\}_{n\geq 1}$ and $\{b_n\}_{n\geq 1}$, we write $a_n\sim b_n$ to mean $\lim_{n\to\infty}{a_n}/{b_n}=1$. We use standard asymptotic notations such as $O(\cdot)$, $o(\cdot)$ and~$\Theta(\cdot)$.

\subsection{System Model}

Consider an arbitrary source  $S$ with probability mass function (PMF) or probability density function (PDF)  $f_S$ satisfying
\begin{align}
\bbE [S^2]=\sigma^2,~\zeta_{\rms}:=\bbE [S^4]<\infty,~\bbE[S^6]<\infty\label{sourceconstraint}.
\end{align} 
Next, consider an arbitrary noise random variable $Z$ with distribution (PMF or PDF) $f_Z$ such that
\begin{align}
\bbE [Z^2]=1,~\zeta_{\rmc}:=\bbE [Z^4]<\infty,~\bbE[Z^6]<\infty\label{channelconstraint}.
\end{align} 
We are interested in using a fixed code to transmit an arbitrary memoryless source $S^k\sim f_S$ to within distortion $D\in(0,\sigma^2)$ over an additive   channel $Y^n=X^n+Z^n$. Here, $X^n$ is the channel input, $Z^n$ is the noise generated i.i.d.\ according to $f_Z$ and $Y^n$ is the corresponding channel output.

To describe our NN-JSCC scheme, we resort to a framework that is ubiquitous in joint source-channel coding, e.g.,~\cite{csiszar1980joint,wang2011dispersion}. We define the notion of {\em power types} for positive reals similar to\cite{zhou2016second}. Let $\xi$ be a positive number. This parameter determines (half) the {\em quantization range}. Furthermore, let the number of source  {\em power type}  (or simply {\em type})  {\em classes} be
\begin{align}
\NT:=\big\lceil 2k\xi\big\rceil\label{def:Ntypes}.
\end{align}
Define the {\em lower limit for the power level} to be
\begin{align}
\Upsilon(0)&:=(1-\xi)\sigma^2\label{def:L0}.
\end{align} 
Given each $i\in[1:\NT]$, define the {\em  type quantization level} and the  {\em power type class} respectively as 
\begin{align}
\Upsilon(i)&:=\Big(1-\xi+\frac{i}{k}\Big)\sigma^2\label{def:Lambdai},\\
\calT_i&:=\Big\{s^k:\Upsilon(i-1)\leq \frac{\|s^k\|^2}{k}<\Upsilon(i)\Big\}\label{def:type}.
\end{align}
Thus, in effect, we are partitioning all length-$k$ source sequences into $\NT$ disjoint subsets $\calT_i, i\in [1:\NT]$ depending on their powers $\|s^k\|^2/k$.  The {\em upper limit for the power level} is $\Upsilon(N)\approx (1+\xi)\sigma^2$ when $k$ is large.  We say that $i \in [1:\NT]$ is the {\em type} or {\em power type} of $s^k$ if $s^k\in\calT_i$. 
Let $\{M_i\}_{i\in[1:\NT]}$ be a set of integers to be specified later. Finally, let
\begin{align}
\calD&:=\{(r,s)\in\bbN^2:r\in[1:\NT],~s\in[1:M_{r}]\}\label{def:cald}.
\end{align}
be a set of   pairs in which the first coordinate denotes the  type and the second coordinate denotes the index of the codeword in a sub-codebook corresponding to that type.

Our NN-JSCC scheme is illustrated in Figure \ref{systemmodel} and defined formally as follows.
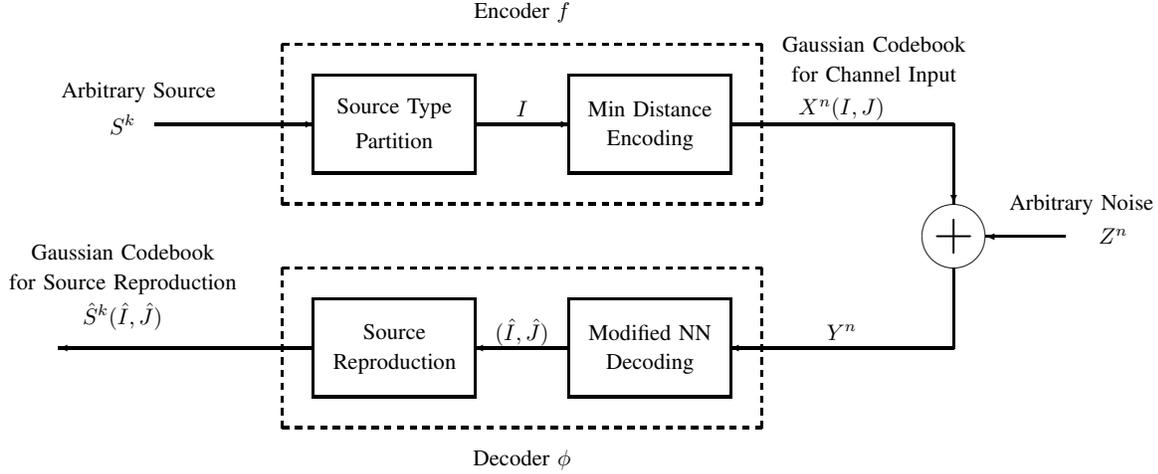
\begin{figure}
\centering
\setlength{\unitlength}{0.5cm}
\scalebox{0.85}{
\begin{picture}(33,15)
\linethickness{1pt}
\put(1.5,11.5){\makebox(0,0){Arbitrary Source}}
\put(1,10.5){\makebox(0,0){$S^k$}}
\put(2,10.5){\vector(1,0){5}}
\put(13.5,14){\makebox(0,0){Encoder $f$}}
\put(6,8){\dashbox{0.2}(15,5)}
\put(7,9){\framebox(5,3)}
\put(9.5,11){\makebox(0,0){Source Type}}
\put(9.5,10){\makebox(0,0){Partition}}
\put(12,10.5){\vector(1,0){3}}
\put(13.5,11){\makebox(0,0){$I$}}
\put(15,9){\framebox(5,3)}
\put(17.5,11){\makebox(0,0){Min Distance}}
\put(17.5,10){\makebox(0,0){Encoding}}
\put(20,10.5){\line(1,0){7}}
\put(23.5,11){\makebox(0,0){$X^n(I,J)$}}
\put(24.5,13){\makebox(0,0){Gaussian Codebook}}
\put(24.5,12){\makebox(0,0){for Channel Input}}
\put(27,10.5){\vector(0,-1){2.5}}
\put(27,7){\circle{2}}
\put(27,7){\makebox(0,0){\Huge$+$}}
\put(27,6){\line(0,-1){2.5}}
\put(30.5,7){\vector(-1,0){2.5}}
\put(32,7){\makebox(0,0){$Z^n$}}
\put(31,8){\makebox(0,0){Arbitrary Noise}}
\put(27,3.5){\vector(-1,0){7}}
\put(23.5,4){\makebox(0,0){$Y^n$}}
\put(15,2){\framebox(5,3)}
\put(17.5,3){\makebox(0,0){Decoding}}
\put(17.5,4){\makebox(0,0){Modified NN}}
\put(15,3.5){\vector(-1,0){3}}
\put(13.5,4){\makebox(0,0){$(\hatI,\hatJ)$}}
\put(7,2){\framebox(5,3)}
\put(9.5,3){\makebox(0,0){Reproduction}}
\put(9.5,4){\makebox(0,0){Source}}
\put(7,3.5){\vector(-1,0){8}}
\put(1,4.5){\makebox(0,0){$\hatS^k(\hatI,\hatJ)$}}
\put(1,5.5){\makebox(0,0){for Source Reproduction}}
\put(1,6.5){\makebox(0,0){Gaussian Codebook}}
\put(13.5,0){\makebox(0,0){Decoder $\phi$}}
\put(6,1){\dashbox{0.2}(15,5)}
\end{picture}
}
\caption{System model of our NN-JSCC scheme which consists of an encoder  $f$ which uses  modified minimum distance encoding  in~\eqref{def:sencoder} and a decoder $\phi$ which uses modified nearest neighbor (NN) decoding in~\eqref{decoder}.}
\label{systemmodel}
\end{figure}
\begin{definition}
\label{def:code}
An $(k,n)$-code for NN-JSCC scheme consists of 
\begin{enumerate}
\item A set of $M_i$ source codewords $\{\hatS^k(i,j)\}_{j=1}^{M_i}$ and a  set of $M_i$ channel codewords $\{X^n(i,j)\}_{j=1}^{M_i}$ for each $i\in[1:\NT]$. The realizations of $\{\hatS^k(i,j)\}_{j=1}^{M_i}$ and  $\{X^n(i,j)\}_{j=1}^{M_i}$ for each $i\in[1:\NT]$  are known to both the encoder and decoder.
\item An encoder $f$ which declares an error if $S^k\notin\bigcup_{i=1}^{\NT}\calT_i$ and uses the following modified minimum distance encoding rule otherwise. The encoder $f$ maps the source sequence $S^k$ into the channel codeword $X^n(I,J)$ if $S^k\in\calT_I$ and $\hatS^k(I,J)$ minimizes the Euclidean distance  over all source codewords in the set $\{\hatS^k(I,\barj)\}_{\barj\in[1:M_I]}$, i.e.,
\begin{align}
J&:=\argmin_{\barj\in[1:M_I]} \|S^k-\hatS^k(I,\barj)\|^2\label{def:sencoder}.
\end{align}
\item A decoder $\phi$ which employs the modified nearest neighbor decoder rule; it declares that the reproduced source sequence is $\hatS^k(\hatI,\hatJ)$ if
\begin{align}
(\hatI,\hatJ):=\argmin_{\substack{(\tili,\tilj)\in\calD}} \|X^n(\tili,\tilj)-Y^n\|^2+2\log M_{\tili}\label{decoder}.
\end{align}
\end{enumerate}
\end{definition}

Throughout the paper, we consider random Gaussian codebooks for both source and channel codebooks for part (i) of Definition~\ref{def:code}. To be specific, we consider the following two types of Gaussian codebooks.
\begin{enumerate}
\item First, we consider   {\em spherical codebooks} where each source codeword $\hatS^k$ (or channel codeword $X^n$) is generated independently and uniformly over a sphere with radius $\sqrt{k(\sigma^2-D)}$ (or $\sqrt{nP}$ where $P$ is a positive number), i.e.,
\begin{align}
\hatS^k\sim f_{\hatS^k}^{\rm{sp}}(\hats^k)&:=\frac{\delta(\|\hats^k\|^2-k(\sigma^2-D))}{A_k(\sqrt{k(\sigma^2-D)})}\label{sourcespcodebook},\\
X^n\sim f_{X^n}^{\rm{sp}}(x^n)&:=\frac{\delta(\|x^n\|^2-nP)}{A_n(\sqrt{nP})}\label{channelspcodebook},
\end{align}
where $\delta(\cdot)$ is the Dirac delta function, $A_k(r) := {k\pi^{k/2}} r^{k-1} / \Gamma(\frac{k+2}{2})$ is the surface area of an $k$-dimensional sphere with radius $r$, and $\Gamma(\cdot)$ is the Gamma function. 
\item Second, we consider   {\em i.i.d.\ Gaussian codebooks} where each source codeword $\hatS^k$ (or channel codeword $X^n$) is generated independently according to a product of  univariate Gaussian distributions each with variance $\sigma^2-D$ (or $P$), i.e.,
\begin{align}
\hatS^k\sim f_{\hatS^k}^{\rm{iid}}(\hats^k)&:=\prod_{i=1}^k\frac{1}{\sqrt{2\pi(\sigma^2-D)}}\exp\bigg\{-\frac{\hats_i^2}{2(\sigma^2-D)}\bigg\}\label{sourceiidcodebook},\\
X^n\sim f_{X^n}^{\rm{iid}}(x^n)&:=\prod_{i=1}^n\frac{1}{\sqrt{2\pi P}}\exp\bigg\{-\frac{x_i^2}{2P}\bigg\}\label{channeliidcodebook}.
\end{align}
\end{enumerate}
For later use, we define the Gaussian capacity and rate-distortion functions as follows:
\begin{align}
\rmC(P)&:=\frac{1}{2}\log(1+P),\\
\rmR(\sigma^2,D)&:=\max\left\{\frac{1}{2}\log\frac{\sigma^2}{D},0\right\}.
\end{align}
Furthermore, define the {\em optimal bandwidth expansion ratio/factor}
\begin{align}
\rho^*(P,\sigma^2,D):=\frac{\rmC(P)}{\rmR(\sigma^2,D)}\label{def:mu},
\end{align}

In other words, the proposed NN-JSCC scheme in Definition~\ref{def:code} consists of a concatenation of a  source code  and a channel code (cf.~\cite[Definition~8]{kostinajscc}). Specifically, the encoder $f$ can be regarded as the concatenation of a source encoder and a channel encoder. The source encoder selects the index $I$ according to source power type class and then selects the sub-index $J$ based on the modified minimum distance encoding rule. The channel encoder maps the output of the source encoder into a channel codeword with index $(I,J)$. The decoder $\phi$ can be regarded as the concatenation of a channel decoder which adopts  the modified nearest neighbor decoding rule to produce  $(\hatI,\hatJ)$ and a source   decoder which declares the source reproduction sequence as the source codeword $\hatS^k(\hatI,\hatJ)$ with this pair of indices.  

\subsection{Motivation for and Remarks on the System Model}
\label{sec:remark}

\blue{
Our motivation for considering the NN-JSCC architecture is, in part, to generalize and unify Lapidoth's works in~\cite{lapidoth1996,lapidoth1997} and, in part,  to obtain the best second-order coding rates for the JSCC problem. Similar to~\cite{lapidoth1996,lapidoth1997}, ours is a \emph{mismatched} coding scheme since  neither the encoder nor the decoder is designed to be optimal with respect to the source and channel. Rather, its design does {\em not} depend on the source and channel statistics.  Hence,  unless the source and channel are {\em Gaussian},  there is {\em mismatch} in the problem. Our NN-JSCC scheme is a UEP-inspired extension of the mismatched coding schemes in the rate-distortion~\cite{lapidoth1997} and channel coding \cite{lapidoth1996} saddle-point problems to the JSCC setting. In fact, if one chooses the parameters so that there is only $\NT=1$ type class  $\calT_1$ (so all the source sequences lie in $\calT_1$), our NN-JSCC scheme degenerates  to a {\em separate} source-channel coding scheme.   For this extreme case, choosing $M$ such that $\log M=n\rmR(\sigma^2,D)+o(n)$ and combining the results in~\cite{lapidoth1996,lapidoth1997}, one concludes that the bandwidth expansion ratio  (ratio of source symbols to channel uses) $\rho^*(P,\sigma^2,D)$ is achievable when the source codebook is a spherical codebook and the channel codebook is either a spherical or i.i.d.\ Gaussian codebook. However, this na\"ive choice results in strictly suboptimal second-order and moderate deviation constants.  For our second-order and moderate deviations results, we exploit the UEP framework of the coding scheme in Fig.~\ref{systemmodel} and choose $\xi$ and $\{M_i\}_{i\in[1:\NT]}$ in a more refined fashion. 
}


\blue{The complexity (hence practicality or impracticality) of our NN-JSCC coding scheme is almost the same as the schemes in~\cite{lapidoth1996,lapidoth1997}. To wit, we note that both NN encoding and decoding require exponential-time searches over the source and channel codewords. Our scheme incurs an additional search for the index of the power type class that the source $S^k$ lies in; see point~(ii) of Definition~\ref{def:code}.  We design $\xi$ such that the number of type classes is polynomial; the complexity of this search is thus  negligible compared to the aforementioned exponential-time searches. Thus, the ``practicality'' of the proposed scheme is not too dissimilar compared to~\cite{lapidoth1996,lapidoth1997}. } 

\blue{
Despite the fact that the coding scheme is relatively simple and the complexity is almost equal to that in Lapidoth's works~\cite{lapidoth1996,lapidoth1997}, it  remains {\em robust} in the sense the bandwidth expansion ratio $\rho^*(P,\sigma^2,D)$ (which is optimal for  the Gaussian version of the problem) is attained. However, this not necessarily optimal for the  given arbitrary source and arbitrary additive channel. Nonetheless,  the second-order terms can be shown to be ensemble-tight. 
}


\subsection{Definitions}

Based on the coding scheme in Definition~\ref{def:code}, we see that the {\em (ensemble) excess-distortion probability} is 
\begin{align}
\rmP_{\rme,k,n}
&:=\Pr\{d(S^k,\phi(f(S^k)))>D\}\\
&=\Pr\Big\{S^k\notin\bigcup_{i=1}^{\NT}\calT_i\Big\}+\sum_{i=1}^{\NT}\Pr\Big\{S^k\in\calT_i,~d(S^k,\phi(f(S^k)))>D\Big\}\label{def:excessdp}.
\end{align}
Note that the ensemble excess-distortion probability in \eqref{def:excessdp} is averaged not only over the source and noise distributions, but also over the source and channel codebooks. This is similar to \cite{lapidoth1996,lapidoth1997} which allows us to obtain ensemble-tight results in the spirit of \cite{gallager_ensemble,scarlett2017mismatch,zhou2017refined}.

\blue{For subsequent analyses, let $k^*_{\rm{sp,sp}}(n,\varepsilon,P,\sigma^2,D)$ be the maximal number of source symbols that can be transmitted over the additive noise channel in $n$ channel uses so that the ensemble excess-distortion probability with respect to distortion level $D$ is no larger than $\varepsilon\in(0,1)$ when a spherical codebook is used as both source and channel codebooks. In a similar manner, we can define $k^*_{\rm{sp,iid}}(n,\varepsilon,P,\sigma^2,D)$, $k^*_{\rm{iid,sp}}(n,\varepsilon,P,\sigma^2,D)$ and $k^*_{\rm{iid,iid}}(n,\varepsilon,P,\sigma^2,D)$.\footnote{Throughout the paper, when we use double subscripts consisting of elements of the set $\rm\{sp,iid\}$, the first subscript denotes the nature  of the source codebook (spherical or i.i.d.) and the second denotes the nature of the channel codebook.}}

\begin{definition}
\label{def:secondorder}
Fix any $\varepsilon\in [0,1)$. The {\em spherical-spherical second-order coding rate} is defined as
\begin{align}
L^*_{\rm{sp,sp}}(\varepsilon)
&:=\limsup_{n\to\infty}\frac{1}{\sqrt{n}}\big(n\rho^*(P,\sigma^2,D)-k^*_{\rm{sp,sp}}(n,\varepsilon,P,\sigma^2,D)\big).\label{eqn:vareps}
\end{align}
Similarly, we can define $L^*_{\rm{sp,iid}}(\varepsilon)$, $L^*_{\rm{iid,sp}}(\varepsilon)$ and $L^*_{\rm{iid,iid}}(\varepsilon)$.
\end{definition}

\begin{definition} \label{def:md}
A sequence $\{\eta_n\}_{n\in\bbN}$ is said to be a {\em moderate deviations sequence}\footnote{Our definition of moderate deviations sequence in~\eqref{mdc:constaint} is different from the standard one in for example~\cite{altugwagner2014,polyanskiy2010channel} in which the term $\sqrt{ {n}/{\log n}}$ is replaced by the less stringent $\sqrt{n}$. We require the additional $\sqrt{\log n}$ for technical reasons but it is not restrictive as all sequences of the form $n^{-t}$ for $t\in (0,1/2)$ are, by definition, moderate deviations sequences.  }  if
\begin{align} 
\eta_n\to 0\quad \mbox{and}\quad \sqrt{\frac{n}{\log n}}\eta_n\to\infty\quad \mbox{as}\quad n\to\infty.\label{mdc:constaint}
\end{align}
Let  the length of the source sequence be 
\begin{equation}
k_n:=\lfloor n(\rho^*(P,\sigma^2,D)-\eta_n) \rfloor.
\end{equation}
The {\em spherical-spherical moderate deviations constant} is defined as
\begin{align}
\nu^*_{\rm{sp,sp}}
&:=\liminf_{n\to\infty}-\frac{1}{n\eta_n^2}\log \rmP_{\rme,k_n,n}.
\end{align}
Similarly, we can define  $\nu^*_{\rm{sp,iid}}$, $\nu^*_{\rm{iid,sp}}$ and $\nu^*_{\rm{iid,iid}}$.
\end{definition}

\section{Main Results and Discussions}
\label{sec:mainresults}
\subsection{Preliminaries}
In this subsection, we present some preliminary definitions to be used in presenting our main results.

For $\dagger\in\rm\{sp,iid\}$ and any source sequence $s^k$, note by spherical symmetry that the non-excess-distortion probability $\Pr\{d(s^k,\hatS^k)\leq D\}$, where $\hatS^k\sim f_{\hatS^k}^{\dagger}$, depends on $s^k$ only through its norm $\|s^k\|$. Thus, for any $s^k$ such that $ {\|s^k\|^2}/{k}=p$, we   define
\begin{align}
\Psi_{\dagger}(k,p)&:=\Pr\{d(s^k,\hatS^k)\leq D\},\quad\mbox{where} \quad  \hatS^k\sim f_{\hatS^k}^{\dagger}\label{def:Psikz}.
\end{align}
For $\dagger\in\rm\{sp,iid\}$, when a $\dagger$ Gaussian codebook is used as the random source codebook, for each $i\in[1:\NT]$, we choose 
\begin{align}
\log M_i&:=-\log \Psi_{\dagger}(k,\Upsilon(i))+\log k\label{choosemi}.
\end{align}
We remark that the choice of $M_i$ for any $i\in[1:\NT]$ is universal because it only depends on the quantization level $\Upsilon(i)$~(see~\eqref{def:Lambdai}), which is fixed {\em a priori}, and the type of source codebook~(see~\eqref{sourcespcodebook} and \eqref{sourceiidcodebook}).  It does not depend on the source codebook realization.

The choice of $\xi$ depends on the specific regime (second-order or moderate deviations) and is thus stated later. We need the following definitions of the {\em  mismatched dispersion functions} in \cite{zhou2017refined,scarlett2017mismatch}:
\begin{align}
\rmV_\rms(\zeta_\rms,\sigma^2)&:=\frac{\zeta_\rms-\sigma^4}{4\sigma^4}=\frac{\mathbb{E}[S^4]-(\mathbb{E}[S^2])^2}{4(\mathbb{E}[S^2])^2}\label{def:sourcedispersion},\\
\rmV_\rmc^{\rm{sp}}(\zeta_\rmc,P)&:=\frac{P^2(\zeta_\rmc-1)+4P}{4(P+1)^2},\label{def:channeldispersionspherical}\\
\rmV_\rmc^{\rm{iid}}(\zeta_\rmc,P)&:=\frac{P^2(\zeta_\rmc+1)+4P}{4(P+1)^2}.\label{def:channeldispersioniid}
\end{align}

To simplify the presentation of our main results, recalling the definition of $\rho^*(P,\sigma^2,D)$ in \eqref{def:mu}, for any $\ddagger\in\rm\{sp,iid\}$, define the {\em joint source-channel mismatched dispersion functions} as
\begin{align}
\rmV_{\ddagger}(\zeta_\rms,\sigma^2,\zeta_\rmc,P)
&:=\frac{\rho^*(P,\sigma^2,D)\rmV_\rms(\zeta_\rms,\sigma^2)+\rmV_\rmc^{\ddagger}(\zeta_\rmc,P)}{(\rmR(\sigma^2,D))^2},\\
&=\frac{\rmC(P)\rmV_\rms(\zeta_\rms,\sigma^2)+\rmR(\sigma^2,D)\rmV_\rmc^{\ddagger}(\zeta_\rmc,P)}{(\rmR(\sigma^2,D))^3}\label{def:rmvdagger}
\end{align}

\subsection{Second-Order Asymptotics}

\begin{theorem}
\label{jscc:second}
Let the quantization range be
\begin{align}
\xi&:=\sqrt{\frac{\log k}{k}}\label{def:xi}.
\end{align}
For any $\varepsilon\in[0,1)$ and any $(\dagger,\ddagger)\in\rm\{sp,iid\}^2$, we have
\begin{align}
L^*_{\dagger,\ddagger}(\varepsilon)&=\sqrt{\rmV_{\ddagger}(\zeta_\rms,\sigma^2,\zeta_\rmc,P)}\rmQ^{-1}(\varepsilon).\label{spchsecond}
\end{align}
\end{theorem}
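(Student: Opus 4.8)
The plan is to establish \eqref{spchsecond} via matching achievability (direct) and ensemble converse bounds on $k^*_{\dagger,\ddagger}(n,\varepsilon,P,\sigma^2,D)$, treating the four codebook combinations $(\dagger,\ddagger)$ uniformly, since the only place the codebook choice enters the final answer is through $\rmV_\rmc^{\ddagger}$ (the source dispersion $\rmV_\rms$ is codebook-independent). First I would decompose the ensemble excess-distortion probability \eqref{def:excessdp} into three pieces: (a) the \emph{out-of-range} event $\{S^k\notin\bigcup_i\calT_i\}$, (b) the \emph{source-coding failure} event (no source codeword in the chosen type class is within distortion $D$), and (b') the \emph{channel-decoding failure} event (the modified NN decoder \eqref{decoder} outputs a pair $(\hatI,\hatJ)\neq(I,J)$). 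For (a), a Chebyshev/moderate-deviations estimate using $\bbE[S^2]=\sigma^2$ and the finite fourth moment $\zeta_\rms$ in \eqref{sourceconstraint} shows that with $\xi=\sqrt{(\log k)/k}$ the probability of leaving the quantization range is $O(1/k)$ — polynomially small, hence asymptotically negligible at the $\sqrt n$ scale; the choice of $\xi$ in \eqref{def:xi} is engineered precisely so that $N=\lceil 2k\xi\rceil=\Theta(\sqrt{k\log k})$ type classes suffice while keeping each type-class width $\sigma^2/k$ small enough that quantizing the power to $\Upsilon(i)$ costs only $o(1/\sqrt n)$ in rate.

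Next I would handle the two genuine error events by conditioning on the type $I=i$ and on the source power $p=\|S^k\|^2/k$. For the source-coding part, by the definition \eqref{def:Psikz} of $\Psi_\dagger(k,p)$ and the choice \eqref{choosemi} of $\log M_i=-\log\Psi_\dagger(k,\Upsilon(i))+\log k$, the probability that none of the $M_i$ i.i.d.\ source codewords in class $i$ covers $S^k$ is $(1-\Psi_\dagger(k,p))^{M_i}\le \exp(-M_i\Psi_\dagger(k,p))$; since $p<\Upsilon(i)$ on $\calT_i$ and $\Psi_\dagger(k,\cdot)$ is monotone, $M_i\Psi_\dagger(k,p)\ge k$, so this is $\le e^{-k}$, again polynomially/exponentially negligible. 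The key quantitative input here is the Gaussian-type covering exponent: the large-deviations / second-order expansion of $-\log\Psi_\dagger(k,p)$ around $p=\sigma^2$, which by the analysis in \cite{zhou2017refined,lapidoth1997} equals $k\rmR(p,D)+O(\sqrt k)$ with the relevant fluctuation term governed by $\rmV_\rms(\zeta_\rms,\sigma^2)$ — this is what pins down $\log M_i\approx k\rmR(\sigma^2,D)+\sqrt k\,\sqrt{\rmV_\rms}\,(\text{Gaussian})$. For the channel-decoding part, since the encoder sends $X^n(I,J)$ over $Y^n=X^n+Z^n$ and the decoder minimizes $\|X^n(\tilde i,\tilde j)-Y^n\|^2+2\log M_{\tilde i}$, I would write the error event as the existence of a competing pair $(\tilde i,\tilde j)$ with $\|X^n(\tilde i,\tilde j)-Y^n\|^2+2\log M_{\tilde i}\le\|X^n(I,J)-Y^n\|^2+2\log M_I$; a random-coding union bound over the $\sum_i M_i$ codewords, combined with the Gaussian mismatched channel-coding analysis of \cite{lapidoth1996,scarlett2017mismatch} (the regularization $2\log M_{\tilde i}$ is exactly the penalty that makes the effective competing-codeword count behave like $\sum_i M_i\approx N\cdot M_{\max}$ rather than causing the larger type classes to dominate), yields that reliable decoding requires $\log(\sum_i M_i)\le n\rmC(P)-\sqrt n\,\sqrt{\rmV_\rmc^{\ddagger}}\,\rmQ^{-1}(\cdot)+o(\sqrt n)$, with the channel dispersion $\rmV_\rmc^{\ddagger}$ depending on whether the channel codebook is spherical \eqref{def:channeldispersionspherical} or i.i.d.\ \eqref{def:channeldispersioniid}.

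The combination step is where the JSCC dispersion \eqref{def:rmvdagger} emerges: the overall excess-distortion event fails only if \emph{either} the source covering fails \emph{or} the channel decoding fails, and these two randomnesses (source codebook vs.\ channel codebook/noise) are independent, so the two Gaussian fluctuations add. Writing $k=n\rho^*(P,\sigma^2,D)-L\sqrt n$ and imposing $\log M_i\ge$ (source requirement) and $\log(\sum_i M_i)\le$ (channel constraint), eliminating $\log M$ gives a single inequality of the form $L\,\rmR(\sigma^2,D)\le \sqrt{\rho^*\rmV_\rms/n}\cdot(\text{Gaussian}_1\cdot\sqrt n)+\sqrt{\rmV_\rmc^{\ddagger}/n}\cdot(\text{Gaussian}_2\cdot\sqrt n)$, and the sum of two independent Gaussians with variances $\rho^*\rmV_\rms$ and $\rmV_\rmc^{\ddagger}$ is Gaussian with variance $\rho^*\rmV_\rms+\rmV_\rmc^{\ddagger}$; dividing by $(\rmR(\sigma^2,D))^2$ reproduces exactly $\rmV_{\ddagger}(\zeta_\rms,\sigma^2,\zeta_\rmc,P)$ in \eqref{def:rmvdagger}, and optimizing over feasible $L$ subject to the $\varepsilon$-probability constraint gives $L^*_{\dagger,\ddagger}(\varepsilon)=\sqrt{\rmV_\ddagger}\,\rmQ^{-1}(\varepsilon)$. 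The converse would run the same Berry--Esseen / ensemble-tightness argument in reverse, using that the random-coding bounds above are tight for i.i.d.\ and spherical ensembles (as in \cite{gallager_ensemble,scarlett2017mismatch,zhou2017refined}).

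I expect the main obstacle to be the \emph{joint} second-order (Berry--Esseen) analysis of the combined error event — in particular, showing that the source-covering fluctuation and the channel-decoding fluctuation genuinely decouple and add as independent Gaussians, uniformly over all $N=\Theta(\sqrt{k\log k})$ type classes. Two technical subtleties deserve care: first, a union bound over the polynomially-many type classes must only cost $o(1/\sqrt n)$, which forces the $+\log k$ slack in \eqref{choosemi} and the particular $\xi$ in \eqref{def:xi}; second, the modified NN decoder's penalty term $2\log M_{\tilde i}$ couples the channel-error analysis to the (random-but-concentrated) source covering probabilities $\Psi_\dagger(k,\Upsilon(i))$, so one must show these $\Psi_\dagger$ values are sufficiently regular in $i$ (monotone, with controlled derivative) that the penalty behaves like the ``correct'' rate penalty $k\rmR(\Upsilon(i),D)$ up to $o(\sqrt n)$ — this is where the sixth-moment hypotheses $\bbE[S^6]<\infty$ and $\bbE[Z^6]<\infty$ in \eqref{sourceconstraint}--\eqref{channelconstraint} get used, to control the third-moment terms in the Berry--Esseen expansions of both the source-distortion and channel-distance statistics.
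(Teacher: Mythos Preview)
Your achievability sketch is essentially the paper's approach: the out-of-range event is negligible by the choice of $\xi$; the source-covering error is driven to $e^{-k}$ by the choice \eqref{choosemi} of $M_i$; and the entire $\sqrt{n}$-scale fluctuation lives in the channel-error term, where substituting $\log M_i\approx k\rmR(\sigma^2,D)+(\|S^k\|^2-k\sigma^2)/(2\sigma^2)$ into the RCU-type bound produces a single event involving the independent sums $\|S^k\|^2-k\sigma^2$ and $P\|Z^n\|^2-2\langle X^n,Z^n\rangle$, to which one applies a Berry--Esseen theorem for functions of independent (not i.i.d.) random vectors. One small imprecision: the penalty $2\log M_{\tili}$ in \eqref{decoder} makes the effective competing count behave like $N\cdot M_I$ (the transmitted type's sub-codebook size), not $N\cdot M_{\max}$ or $\sum_i M_i$; this matters because $\log M_{\max}-\log M_{\min}=\Theta(\sqrt{k\log k})$, so without the penalty the largest type class would dominate and spoil the second-order term.

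The genuine gap is the converse. Saying ``run the same Berry--Esseen / ensemble-tightness argument in reverse'' misses the key obstruction: the excess-distortion event is \emph{not} the channel-decoding error event. For the lower bound you start from
\[
\rmP_{\rme,k,n}\ \ge\ \Pr\{(\hatI,\hatJ)\neq(I,J)\}\ -\ \Pr\big\{(\hatI,\hatJ)\neq(I,J),\ d(S^k,\hatS^k(\hatI,\hatJ))\le D\big\},
\]
and the second (``lucky'') term must be shown to be exponentially small. This requires a separate argument: conditioned on $\hatI\neq I$, the decoded source codeword $\hatS^k(\hatI,\hatJ)$ is independent of $S^k$ and of the $I$-th sub-codebook, so $\Pr\{d(S^k,\hatS^k(\hatI,\hatJ))\le D\}=\Psi_\dagger(k,\|S^k\|^2/k)\le\exp(-k\rmR(\sigma^2,D)+o(k))$. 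The more delicate case $\hatI=I,\ \hatJ\neq J$ needs a symmetry lemma: conditioned on $S^k=s^k\in\calT_i$ and on the sub-codebook $\hat{\bs}_i$, the probability $\Pr\{(\hatI,\hatJ)=(i,\hatj)\mid s^k,\hat{\bs}_i\}$ is the same for every wrong $\hatj\neq J$ (because the channel codewords are exchangeable and the decoder treats them symmetrically), which lets you factor out that probability and again reduce to a single-codeword covering probability $\Psi_\dagger$. Without these two pieces, the ensemble converse does not go through; you should plan for them explicitly.
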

The proof of Theorem \ref{jscc:second} is given in Section \ref{proof:jscc:second}. A few remarks are in order.

First, given a channel codebook, regardless of the choice of the source codebook, the second-order coding rate remains the same. This is consistent with the result in \cite{zhou2017refined} where the present authors showed that the dispersion for the rate-distortion problem using Gaussian codebooks and minimum Euclidean distance encoding remains the same regardless of the particular choice (spherical or i.i.d.) of the Gaussian codebook. Furthermore, given a source codebook, the second-order coding rates are different and depend on the choice of the channel codebook. This is consistent with the result in \cite{scarlett2017mismatch} where Scarlett, Tan, and Durisi showed that the dispersion for the nearest neighbor  decoding over additive non-Gaussian noise channels depends on the particular choice of the channel codebook (spherical or i.i.d.). In particular, the authors of \cite{scarlett2017mismatch} showed that 
\begin{align}
\rmV_\rmc^{\rm{iid}}(\zeta_\rmc,P)&=\rmV_\rmc^{\rm{sp}}(\zeta_\rmc,P)+\frac{1}{2}\Big(\frac{P}{P+1}\Big)^2.
\end{align}

Second, when we particularize our result  to transmitting a GMS   $f_S=\calN(0,\sigma^2)$ over an AWGN channel  with noise distribution  $f_Z=\calN(0,1)$, we have that $\rmV_\rms(\zeta_\rms,\sigma^2)=\frac{1}{2}$ and $\rmV_\rmc^{\rm{sp}}(\zeta_\rmc,P)=\frac{P(P+2)}{2(P+1)^2}$. Hence, we   recover the achievability part in \cite[Theorem 19]{kostinajscc} where Kostina and Verd\'u provided the optimal second-order coding rate of transmitting a GMS over an AWGN channel using spherical source and channel codebooks. Our result in \eqref{spchsecond} shows that the same second-order coding rate can also be achieved when the source  codebook is an i.i.d.\ Gaussian codebook. 

Third, as a corollary of our results in Theorem \ref{jscc:second}, we conclude that for any $\varepsilon\in(0,1)$, regardless of the choices of source and channel codebooks, using our NN-JSCC scheme (see\ Definition \ref{def:code}), we have
\begin{align}
\lim_{n\to\infty}\frac{k^*_{\dagger,\ddagger}(n,\varepsilon,P,  \sigma^2, D)}{n}=\frac{\rmC(P)}{\rmR(\sigma^2,D)},\quad\forall\, (\dagger,\ddagger)\in\rm\{sp,iid\}^2. \label{firstorder}
\end{align}
This strengthens and generalizes Lapidoth's results in~\cite{lapidoth1996,lapidoth1997}. In particular in \cite{lapidoth1997},  he only considered spherical codebooks.

Finally, when we use a separate source-channel coding scheme by combining the models in \cite{lapidoth1996,lapidoth1997} and the results in \cite[Theorem 1]{scarlett2017mismatch} and \cite[Theorem 1]{zhou2017refined}, we obtain that the second-order coding rate for any $(\dagger,\ddagger)\in\rm\{sp,iid\}^2$ is bounded above as 
\begin{align}
L^*_{\dagger,\ddagger}(\varepsilon)&\leq \min_{(\varepsilon_1,\varepsilon_2):\varepsilon_1+\varepsilon_2\leq \varepsilon}\frac{\sqrt{\rho^*(P,\sigma^2,D)\rmV_\rms(\zeta_\rms,\sigma^2)}}{\rmR(\sigma^2,D)}\rmQ^{-1}(\varepsilon_1)+\frac{\sqrt{\rmV_\rmc^{\ddagger}(\zeta_\rmc,P)}}{\rmR(\sigma^2,D)}\rmQ^{-1}(\varepsilon_2).
\end{align}
Hence, the separate source-channel coding scheme by combining the rate-distortion and channel coding saddle-point setups in~\cite{lapidoth1997,lapidoth1996} is strictly  suboptimal in the second-order sense unless $\rmV_\rms(\zeta_\rms,\sigma^2)$ or $\rmV_\rmc^{\ddagger}(\zeta_\rmc,P)$ is zero.

\subsection{Moderate Deviations}

Before presenting our results, we need the following assumptions on the source and channel parameters.
\begin{enumerate}
\item\label{assump1} $\rmV_\rms(\zeta_\rms,\sigma^2)$ is positive;
\item\label{assump2} The cumulant generating functions $\Lambda_{S^2}(\lambda)$, $\Lambda_{Z^2}(\lambda)$, $\Lambda_{\tilX Z}(\lambda)$ are all finite in a neighborhood around the origin, where $\tilX$ is a Gaussian random variable with zero mean and variance one and it is independent of all other random variables. 
\end{enumerate}
\begin{theorem}
\label{jscc:mdc}
Let the quantization range be
\begin{align}
\xi&:=\eta_n^{3/4}\label{def:ximdc},
\end{align}
where $\eta_n = \omega( \sqrt {n^{-1}\log n})$ denotes the backoff from the first-order fundamental limit $\rmC(P)/\rmR(\sigma^2,D)$ as delineated in Definition \ref{def:md}.
Under conditions \eqref{assump1} and \eqref{assump2} as stated above,  for any $(\dagger,\ddagger)\in\rm\{sp,iid\}^2$, we have
\begin{align}
\nu^*_{\rm{\dagger,\ddagger}}&=\frac{1}{2\rmV_{\ddagger}(\zeta_\rms,\sigma^2,\zeta_\rmc,P)}.
\end{align}
\end{theorem}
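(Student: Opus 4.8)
The plan is to mirror the second-order analysis of Theorem \ref{jscc:second} but in the moderate deviations regime, where Gaussian approximations are replaced by Chernoff-type exponential bounds. First I would carry out the achievability direction. Decompose the ensemble excess-distortion probability as in \eqref{def:excessdp}: the first term $\Pr\{S^k\notin\bigcup_i\calT_i\}$ is the probability that $\|S^k\|^2/k$ falls outside the quantization window $[(1-\xi)\sigma^2,(1+\xi)\sigma^2)$; by assumption \eqref{assump2}, $\Lambda_{S^2}(\lambda)$ is finite near the origin, so a standard Cram\'er bound gives that this probability decays like $\exp(-\Theta(k\xi^2))=\exp(-\Theta(k\eta_n^{3/2}))$, which is $o(\exp(-n\eta_n^2\cdot\nu))$ for any constant $\nu$ since $k\eta_n^{3/2}\gg n\eta_n^2$ (as $\eta_n\to 0$ and $k\sim n\rho^*$). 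So the type-overflow event is negligible on the moderate deviations scale and the choice $\xi=\eta_n^{3/4}$ in \eqref{def:ximdc} is precisely calibrated for this. For the second term, condition on $S^k\in\calT_i$; the error event splits, via a union bound, into (a) the source-covering failure — no source codeword $\hatS^k(i,j)$ within distortion $D$ of $S^k$ — and (b) the channel-decoding failure — the regularized nearest neighbor decoder in \eqref{decoder} prefers a wrong pair over the true $(I,J)$. For (a), using the choice $\log M_i=-\log\Psi_\dagger(k,\Upsilon(i))+\log k$ from \eqref{choosemi}, the covering failure probability is roughly $(1-\Psi_\dagger(k,\Upsilon(i)))^{M_i}\le\exp(-M_i\Psi_\dagger(k,\Upsilon(i)))=\exp(-k)$, again super-polynomially small and hence negligible. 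The substantive work is in combining (a)'s residual exponent with (b).

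For the channel-decoding error (b), I would follow the mismatched-decoding exponent analysis: fix the transmitted pair $(I,J)=(r,s)$, condition on the realized source power type and on $\hatS^k(r,s)$ being a good covering codeword, and bound the probability that some $(\tilde i,\tilde j)\ne(r,s)$ has $\|X^n(\tilde i,\tilde j)-Y^n\|^2+2\log M_{\tilde i}\le\|X^n(r,s)-Y^n\|^2+2\log M_{\tilde s}$. Taking a union over the roughly $\sum_i M_i=\exp(k\rmR(\sigma^2,D)(1+o(1)))$ competing codewords and using the i.i.d./spherical Gaussian codebook statistics, the exponent of this event — as a function of the rate excess — is governed by the same tilted/Chernoff computation that produces the channel dispersion $\rmV_\rmc^\ddagger(\zeta_\rmc,P)$ in \eqref{def:channeldispersionspherical}--\eqref{def:channeldispersioniid}; similarly the residual source-covering exponent is governed by $\rmV_\rms(\zeta_\rms,\sigma^2)$ in \eqref{def:sourcedispersion}. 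Here assumption \eqref{assump2} (finiteness of $\Lambda_{Z^2}$ and $\Lambda_{\tilX Z}$ near the origin) is exactly what licenses the Chernoff bounds, and assumption \eqref{assump1} ($\rmV_\rms>0$) rules out the degenerate regime where the source contributes no dispersion. With $k_n=\lfloor n(\rho^*-\eta_n)\rfloor$, the combined rate backoff from capacity on the channel side is $\Theta(n\eta_n)$ and from $\rmR(\sigma^2,D)$ on the source side is $\Theta(k_n\eta_n)=\Theta(n\eta_n)$; a quadratic (dispersion-type) Taylor expansion of each exponent in this backoff, plus the optimization over how to split the target distortion-failure budget between the source and channel stages (the moderate deviations analogue of the $(\varepsilon_1,\varepsilon_2)$ split), yields the exponent $n\eta_n^2/(2\rmV_\ddagger)$ with $\rmV_\ddagger$ the linear combination in \eqref{def:rmvdagger}. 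The moderate deviations sequence condition $\sqrt{n/\log n}\,\eta_n\to\infty$ in \eqref{mdc:constaint} is what makes the polynomial factors ($\log k$ in \eqref{choosemi}, the polynomial number $\NT=\lceil 2k\xi\rceil$ of type classes) contribute only $o(n\eta_n^2)$ to the exponent and hence wash out.

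For the converse (ensemble-tightness), I would argue that each of the two error events is individually a lower bound on $\rmP_{\rme,k_n,n}$, so the exponent cannot exceed the minimum of the source-covering-dominated and channel-decoding-dominated exponents; a standard change-of-measure / lower-bound-on-Chernoff argument (the moderate deviations converse technique, as in \cite{altugwagner2014}) shows these match the achievability exponents to first order in $n\eta_n^2$, and optimizing the split reproduces $1/(2\rmV_\ddagger)$ exactly. I expect the main obstacle to be the channel-side mismatched-decoding exponent computation: unlike second-order asymptotics where one invokes a Berry--Esseen bound on a sum, here one must establish a uniform moderate deviations (precise large deviations) estimate for the relevant log-likelihood-type statistic $\|X^n-Y^n\|^2+2\log M-\|X^n(r,s)-Y^n\|^2$ over the growing codeword ensemble, controlling the tilted distribution's variance as the tilt parameter tends to zero at rate $\eta_n$, and showing the correction to the quadratic exponent is $o(n\eta_n^2)$. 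This requires the sixth-moment hypotheses in \eqref{sourceconstraint} and \eqref{channelconstraint} (to control the third-order term in the Edgeworth/Taylor expansion of the cumulant generating function) together with \eqref{assump2}. The source-covering exponent, by contrast, follows more directly from a moderate deviations analysis of the spherical cap / i.i.d.\ Gaussian covering probability $\Psi_\dagger(k,p)$ as $p$ ranges over the quantization grid, which is essentially the content of \cite{zhou2017refined} transported to this regime.
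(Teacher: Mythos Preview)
Your proposal contains the right ingredients at the periphery (the type-overflow bound via Cram\'er, the $\exp(-k)$ source-covering bound from the choice of $M_i$ in \eqref{choosemi}, the observation that polynomial factors are absorbed thanks to \eqref{mdc:constaint}), but the central mechanism by which the source dispersion $\rmV_\rms(\zeta_\rms,\sigma^2)$ enters is wrong, and this causes both your achievability and converse to miss the target.

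You propose to treat the source-covering error and the channel-decoding error as two separate events with their own exponents (governed by $\rmV_\rms$ and $\rmV_\rmc^\ddagger$ respectively) and then to ``optimize over how to split the target distortion-failure budget between the source and channel stages.'' This is precisely the \emph{separation}-scheme analysis. If you carry it out, the optimal split equalizes the two exponents and yields a moderate deviations constant of the form $\rmR(\sigma^2,D)^2/\bigl(2(\sqrt{\rho^*\rmV_\rms}+\sqrt{\rmV_\rmc^\ddagger})^2\bigr)$, which is strictly smaller than $1/(2\rmV_\ddagger)$ whenever both dispersions are positive---indeed the paper remarks after Theorem \ref{jscc:mdc} that separation is strictly suboptimal. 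Moreover, your own bound already shows the source-covering error is $\exp(-k)$, i.e.\ has infinite exponent on the $n\eta_n^2$ scale, so there is nothing for $\rmV_\rms$ to ``govern'' in that event; your claim that ``the residual source-covering exponent is governed by $\rmV_\rms$'' is inconsistent with your earlier (correct) calculation.

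The actual mechanism is different. Because $\log M_i$ in \eqref{choosemi} depends on the power type $i$, and $i$ is determined by $\|S^k\|^2/k$, the channel-decoding threshold $n\rmC(P)-\log M_I$ is itself a random variable: Taylor expanding $\log\Psi_\dagger(k,\Upsilon(I))$ around $\sigma^2$ (via \eqref{tltobeused}) gives $\log M_I \approx k\rmR(\sigma^2,D) + (\|S^k\|^2-k\sigma^2)/(2\sigma^2)$. Substituting this into the channel-error event produces a \emph{single} event of the form
\[
\underbrace{\frac{P\|Z^n\|^2-nP-2\langle X^n,Z^n\rangle}{2(P+1)}}_{\text{channel randomness}} \;+\; \underbrace{\frac{\|S^k\|^2-k\sigma^2}{2\sigma^2}}_{\text{source randomness}} \;\geq\; n\rmC(P)-k\rmR(\sigma^2,D)+o(n\eta_n),
\]
and the right-hand side equals $n\eta_n\rmR(\sigma^2,D)+o(n\eta_n)$ under \eqref{kachmdc}. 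One then applies a single moderate deviations principle to this combined sum; because the left-hand side is a smooth function of averages of $n+k$ independent (but not identically distributed) random vectors---channel variables for indices $1,\ldots,n$ and source variables for $n+1,\ldots,n+k$---what is needed is a moderate deviations theorem for functions of independent non-i.i.d.\ random vectors (Lemma \ref{mdc4funcofirv} in the paper), proved via Taylor expansion plus a G\"artner--Ellis argument under assumption \eqref{assump2}. The variance of the combined sum is exactly $n\rmV_1$ with $\rmV_1$ as in \eqref{def:rmV1}, and after normalization this produces $1/(2\rmV_\ddagger)$ directly---no split, no optimization.

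Your converse sketch inherits the same confusion. There is no ``split to optimize'' in the converse because the scheme (and hence each $M_i$) is fixed. The paper lower bounds $\rmP_{\rme,k,n}$ by the probability of the channel-decoding failure $(\hatI,\hatJ)\neq(I,J)$ alone (via \eqref{lowfirst2} and \eqref{c:step1}), which again reduces to the same combined event and is handled by the same Lemma \ref{mdc4funcofirv}. A separate step (Lemma \ref{lowerresidual}) is needed to show that the ``lucky'' events---decoder outputs a wrong index that nonetheless happens to be within distortion $D$---are exponentially negligible; your proposal does not address this.
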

The proof of Theorem \ref{jscc:mdc} is given in Section \ref{proof:jscc:mdc}. A few remarks are in order.

First, similarly to the second-order asymptotics in Theorem \ref{jscc:second}, we observe that the dispersion $\rmV_{\ddagger}(\zeta_\rms,\sigma^2,\zeta_\rmc,P)$ plays an important role in the sub-exponential decay of the ensemble excess-distortion probability. Furthermore, the moderate deviations performance only depends on the choice of the channel codebook.

Second, in the proof of Theorem \ref{jscc:mdc}, we need to make use of a moderate deviation theorem for functions of independent but not necessarily identically distributed random vectors   (see\ Lemma~\ref{mdc4funcofirv}). 

Finally, we remark that if one uses a separate source-channel coding scheme by combining the models in \cite{lapidoth1996} and \cite{lapidoth1997}, then under same conditions, the optimal MDC satisfies that for any $(\dagger,\ddagger)\in\rm\{sp,iid\}^2$
\begin{align}
\nu^*_{\rm{\dagger,\ddagger}}&\leq \min\bigg\{\frac{(\rmR(\sigma^2,D))^2}{2\rho^*(P,\sigma^2,D)\rmV_\rms(\zeta_\rms,\sigma^2)},\frac{(\rmR(\sigma^2,D))^2}{2\rmV_\rmc^{\ddagger}(\zeta_\rmc,P)}\bigg\}.
\end{align}
Hence, the separate source-channel coding scheme by combining the rate-distortion and channel coding saddle-point setups in \cite{lapidoth1997,lapidoth1996} is strictly sub-optimal in terms of moderate deviations asymptotics.

\section{Proof of Second-Order Asymptotics (Theorem \ref{jscc:second})}
\label{proof:jscc:second}
To establish Theorem \ref{jscc:second}, we need to prove the results for four combinations of source and channel codebooks where each codebook can either be a spherical or an i.i.d.\ Gaussian codebook. In Section \ref{sec:prelinimary}, we present preliminary results. In Sections~\ref{sec:proofach} and~\ref{sec:proofconversesr}, we present the achievability and converse proofs of Theorem \ref{jscc:second} respectively.

\subsection{Preliminaries} 
\label{sec:prelinimary}
In this subsection, we present some preliminary results for subsequent analyses.

\subsubsection{Analysis of Excess-Distortion Events}
\label{sec:preerrorevents}
Recall our NN-JSCC scheme in Definition \ref{def:code} and Figure \ref{systemmodel}. Given any source sequence $S^k$, the encoder $f$ declares an error if $S^k\notin\bigcup_{i=1}^{\NT}\calT_i$ and maps it into the codeword $X^n(I,J)$ if $S^k\in\calT_I$ and $\hatS^k(I,J)$ minimizes the Euclidean distance with respect to $S^k$ over all codewords in the subcodebook $\{\hatS^k(I,\barj)\}_{\barj\in[1:M_i]}$. Given the channel output $Y^n$, the channel decoder uses the modified nearest neighbor decoding (see~\eqref{decoder}) to find $(\hatI,\hatJ)$ and declares $\hatS^k(\hatI,\hatJ)$ as the reproduced source sequence.

For our NN-JSCC scheme, an excess-distortion event occurs if and only if one of the following events occur:
\begin{enumerate}
\item $S^k\notin\bigcup_{i=1}^{\NT}\calT_i$;
\item $S^k\in\calT_i$ for some $i\in[1:\NT]$ (i.e., $I=i$ for some $i\in[1:\NT]$) and one of the following events occur:
\begin{enumerate}
\item The message pair $(i,J)$ is transmitted correctly and the distortion is greater than $D$, i.e,
\begin{align}
\calE_1(i)&:=
\{(\hatI,\hatJ)=(i,J),~d(S^k,\hatS^k(i,J))>D\}\label{eevent:1},
\end{align}
\item The message $i$ is transmitted incorrectly and the distortion is greater than $D$, i.e.,
\begin{align}
\calE_2(i)&:=
\{\hatI\neq i,~d(S^k,\hatS^k(\hatI,\hatJ))>D\}\label{eevent:2}.
\end{align}
\item The message $i$ is transmitted correctly, the message $J$ is transmitted incorrectly  and the distortion is greater than $D$, i.e.,
\begin{align}
\calE_3(i)&:=
\{\hatI=i,\hatJ\neq J,~d(S^k,\hatS^k(\hatI,\hatJ))>D\}\label{eevent:3}.
\end{align}
\end{enumerate} 
\end{enumerate} 
Using the definition of the ensemble excess-distortion probability in \eqref{def:excessdp} and the definitions of error events in~\eqref{eevent:1},~\eqref{eevent:2} and~\eqref{eevent:3}, we see that
\begin{align}
\rmP_{\rme,k,n}
&=\Pr\Big\{S^k\notin\bigcup_{i=1}^{\NT}\calT_i\Big\}
+\sum_{i=1}^{\NT} \Pr\{S^k\in\calT_i\}\Pr\{\calE_1(i)\cup\calE_2(i)\cup\calE_3(i)|S^k\in\calT_i\}\label{excessp:alt}.
\end{align}

In subsequent analyses for the achievability parts, we upper bound the ensemble excess-distortion probability as follows:
\begin{align}
\rmP_{\rme,k,n}
\nn&\leq \Pr\Big\{S^k\notin\bigcup_{i=1}^{\NT}\calT_i\Big\}
+\sum_{i=1}^{\NT} \int_{s^k\in\calT_i}\Pr\big\{d(s^k,\hatS^k(i,J))>D\big\} f_{S^k}(s^k) \rmd s^k\\
&\qquad+\sum_{i=1}^{\NT}\int_{s^k\in\calT_i} \Pr\{(\hatI,\hatJ)\neq (i,J)\}f_{S^k}(s^k) \rmd s^k\label{uppfirst},
\end{align}
where \eqref{uppfirst} follows by i) using the union bound, ii) ignoring the requirement that the message pair $(i,J)$ is transmitted correctly in $\calE_1(i)$, iii) ignoring the excess-distortion event in $\calE_2(i)$ and $\calE_3(i)$, and iv) noting that 
\begin{align}
\Pr\{\{\hatI\neq i\}\cup\{\hatI=i,~\hatJ\neq J\}\}=\Pr\{(\hatI,\hatJ)\neq (i,J)\}\label{simplefact}.
\end{align}
Note that in the sum in~\eqref{uppfirst}, the first two probabilities are with respect to the joint distribution of the source sequence and source codebook while the last probability is with respect to distributions of the channel codebook and the noise.

In subsequent analyses for converse parts, we lower bound the ensemble excess-distortion probability as
\begin{align}
\rmP_{\rme,k,n}
&\geq \Pr\Big\{S^k\notin\bigcup_{i=1}^{\NT}\calT_i\Big\}+\sum_{i=1}^{\NT}\Pr\{S^k\in\calT_i\}\Pr\{\calE_2(i)\cup\calE_3(i)|S^k\in\calT_i\}\label{lowfirstfirst}\\
\nn&=\Pr\Big\{S^k\notin\bigcup_{i=1}^{\NT}\calT_i\Big\}+\sum_{i=1}^{\NT} \Pr\{S^k\in\calT_i\}\Pr\{\{\hatI\neq i~\}\cup\{\hatI=i,~\hatJ\neq J\}|S^k\in\calT_i\}\\
&\qquad\qquad-\Pr\{S^k\in\calT_i\}\Pr\Big\{\big\{\{\hatI\neq i\}\cup\{\hatI=i,~\hatJ\neq J\}\big\}\cap\{d(S^k,\hatS^k(\hatI,\hatJ))\leq D\} \Big|S^k\in\calT_i\Big\}\label{lowfirst}\\
\nn&\geq \Pr\Big\{S^k\notin\bigcup_{i=1}^{\NT}\calT_i\Big\}+\sum_{i=1}^{\NT} \Pr\{S^k\in\calT_i,~(\hatI,\hatJ)\neq (i,J)\}-\Pr\{S^k\in\calT_i,~\hatI\neq i,~d(S^k,\hatS^k(\hatI,\hatJ))\leq D\}\\
&\qquad\qquad-\Pr\{S^k\in\calT_i,~\hatI=i,~\hatJ\neq J,~d(S^k,\hatS^k(\hatI,\hatJ))\leq D\},\label{lowfirst2}
\end{align}
where \eqref{lowfirst} follows since $\Pr\{\calA\cap\calB\}=\Pr\{\calA\}-\Pr\{\calA\cap\calB^\rmc\}$ for any two sets $\calA$ and $\calB$, and \eqref{lowfirst2} follows by using \eqref{simplefact} and applying the union bound on the final term in \eqref{lowfirst}.

\subsubsection{Analysis of the Output of the Channel Decoder}
\label{sec:channeloutput}
First, we clarify the relationship of the random variables involved in our joint source channel coding ensemble (see Definition \ref{def:code}). In particular, we specify the dependence of the channel output $(\hatI,\hatJ)$ on other random variables such as the source sequence $S^k$ and the source codebook. The results in this subsection hold regardless the choices of source and channel codebooks.

For simplicity, let
\begin{align}
\hat{\bS}_i&:=\{\hatS^k(i,\barj)\}_{\barj\in[1:M_i]},\\
\hat{\bS}&:=\cup_{i=1}^{\NT}\hat{\bS}_i,\\
\bX&:=\cup_{i=1}^{\NT}\{X^n(i,\barj)\}_{\barj\in[1:M_i]},
\end{align}
and let $\hat{\bs}_i$, $\hat{\bs}$ and $\bx$ be the corresponding realizations. Furthermore, for any $i\in[1:\NT]$ and for $\dagger\in\rm\{sp,iid\}$, let
\begin{align}
f_{\hat{\bS}_i}^{\dagger}(\hat{\bs}_i)&:=\prod_{\barj=1}^{M_i}f_{\hatS^k}^{\dagger}(\hats^k(i,\barj)),\\
f_{\hat{\bS}}^{\dagger}(\hat{\bs})&:=\prod_{i=1}^{\NT}f_{\hat{\bS}_i}^{\dagger}(\hat{\bs}_i).
\end{align}

Recall the definition of our NN-JSCC scheme in Definition \ref{def:code} and the definition of $\calD$ in \eqref{def:cald}. For any $i\in[1:\NT]$, given $S^k\in\calT_i$ and $\hat{\bS}_i$ (and thus $J=\argmin_{\barj\in[1:M_i]}\|S^k-\hatS^k(i,\barj)\|^2$ (see \eqref{def:sencoder})), the output of the channel decoder is
\begin{align}
(\hatI,\hatJ)&=\argmin_{(\tili,\tilj)\in\calD} \|X^n(\tili,\tilj)-Y^n\|^2+2\log M_{\tili}\\
&=\argmin_{(\tili,\tilj)\in\calD} \|X^n(\tili,\tilj)-(X^n(i,J)+Z^n)\|^2+2\log M_{\tili}\label{eqn:cdecoder}.
\end{align}
From \eqref{eqn:cdecoder}, we conclude that the output of the channel decoder $(\hatI,\hatJ)$ depends on the source sequence $S^k$ and the source codebook $\bS$ only through the type of the source sequence and the subcodebook $\hat{\bS}_i$, i.e., for any $s^k\in\calT_i$ and any $\hat{\bs}$,
\begin{align}
\Pr\{(\hatI,\hatJ)=(\hati,\hatj)|S^k=s^k,~\hat{\bS}=\hat{\bs}\}
&=\Pr\{(\hatI,\hatJ)=(\hati,\hatj)|S^k=s^k,~\hat{\bS}_i=\hat{\bs}_i\}\label{outputc:depends},
\end{align}
where $\hat{\bs}_i$ is the $i$-th subcodebook of $\hat{\bs}$. Note that the probability in \eqref{outputc:depends} is with respect to the channel codebook.

Given any $(x^n,y^n)$, the {\em mismatched information density} (see~\cite[Eqns.~(28)-(29)]{scarlett2017mismatch}) is defined as
\begin{align}
\imath(x^n;y^n)&:=n\rmC(P)+\frac{\|y^n\|^2}{2(P+1)}-\frac{\|y^n-x^n\|^2}{2}\label{def:mismatchid}.
\end{align}
For any $i\in[1:\NT]$ and any $\ddagger\in\rm\{sp,iid\}$, let
\begin{align}
\overline{h}_{\ddagger}(n,i)&:=\min\Bigg\{1,\mathbb{E}\Bigg[\sum_{\tili=1}^{\NT}M_{\tili}\Pr\bigg\{\imath(\barX^n;Y^n)\leq \imath(X^n(i,j);Y^n)-\log \frac{M_i}{M_{\tili}}\bigg|X^n,Y^n\bigg\}\Bigg]\Bigg\},\label{def:ohni}\\
\underline{h}_{\ddagger}(n,i)&:=1-\Big(1-\mathbb{E}\big[\Pr\big\{\|\barX^n-Y^n\|^2\leq \|X^n-Y^n\|\big|X^n,Y^n\big\}\big]\Big)^{M_i-1}\label{def:uhni},
\end{align}
where in \eqref{def:ohni} and \eqref{def:uhni}, the tuple $(\barX^n,X^n,Y^n)$ is distributed according to the following joint distribution
\begin{align}
f_{X^n}^{\ddagger}(\barx^n)f_{X^n}^{\ddagger}(x^n)f_{Y^n|X^n}(y^n|x^n)\label{def:joint}.
\end{align}
For simplicity, given $s^k\in\calT_i$ and $\hat{\bs}_i$ for any $i\in[1:\NT]$, we let \begin{align}
j(s^k,\hat{\bs}_i):=\argmin_{\barj\in[1:M_i]}d(s^k,\hats^k(i,\barj)) \label{def:jskhsi}.
\end{align} 

In the following lemma, we present bounds on the error probability of the channel decoder conditioned on a source sequence (within a type class) and a subcodebook realization.
\begin{lemma}
\label{channelerror}
For any $i\in[1:\NT]$ and any $\ddagger\in\rm\{sp,iid\}$, given any $s^k\in\calT_i$ and any subcodebook $\hat{\bs}_i$, we have
\begin{align}
\Pr\{(\hatI,\hatJ)\neq (i,j(s^k,\hat{\bs}_i))|S^k=s^k,~\hat{\bS}_i=\hat{\bs}_i\}
&\leq \overline{h}_{\ddagger}(n,i)\label{upp:pijwrong},\\
\Pr\{(\hatI,\hatJ)\neq (i,j(s^k,\hat{\bs}_i))|S^k=s^k,~\hat{\bS}_i=\hat{\bs}_i\}
&\geq \underline{h}_{\ddagger}(n,i)\label{low:pijwrong}.
\end{align} 
\end{lemma}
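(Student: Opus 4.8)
The plan is to fix $i\in[1:\NT]$, a source sequence $s^k\in\calT_i$, and a subcodebook realization $\hat{\bs}_i$, and to analyze the error event $\{(\hatI,\hatJ)\neq(i,j)\}$ where $j:=j(s^k,\hat{\bs}_i)$ is the index selected by the modified minimum distance encoder. By the reduction in~\eqref{eqn:cdecoder}, once $s^k$ and $\hat{\bs}_i$ are given, the transmitted channel codeword is $X^n(i,j)$, and the output of the modified nearest neighbor decoder is $(\hatI,\hatJ)=\argmin_{(\tili,\tilj)\in\calD}\|X^n(\tili,\tilj)-Y^n\|^2+2\log M_{\tili}$ with $Y^n=X^n(i,j)+Z^n$. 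The key algebraic observation is that the decoding metric $\|X^n(\tili,\tilj)-Y^n\|^2+2\log M_{\tili}$ can be rewritten, up to a term depending only on $(Y^n,\tili)$ common across competitors, in terms of the mismatched information density $\imath(\cdot;\cdot)$ of~\eqref{def:mismatchid}: indeed $\imath(x^n;y^n)=n\rmC(P)+\frac{\|y^n\|^2}{2(P+1)}-\frac{\|y^n-x^n\|^2}{2}$, so minimizing $\|x^n-y^n\|^2+2\log M_{\tili}$ over the codebook is equivalent to maximizing $\imath(x^n;y^n)-\log M_{\tili}$ (after adding the common constant $n\rmC(P)+\|y^n\|^2/(2(P+1))$ and halving). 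Thus a decoding error occurs iff some competing codeword $X^n(\tili,\tilj)$ with $(\tili,\tilj)\neq(i,j)$ satisfies $\imath(X^n(\tili,\tilj);Y^n)-\log M_{\tili}\geq \imath(X^n(i,j);Y^n)-\log M_i$, i.e. $\imath(X^n(\tili,\tilj);Y^n)\leq\imath(X^n(i,j);Y^n)-\log\frac{M_i}{M_{\tili}}$ fails to be strict for at least one competitor — the direction of the inequality in~\eqref{def:ohni} handling the favorable/unfavorable cases appropriately.

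For the upper bound~\eqref{upp:pijwrong}, I would apply the union bound over all $(\tili,\tilj)\in\calD\setminus\{(i,j)\}$. Since each competing codeword $X^n(\tili,\tilj)$ is drawn i.i.d.\ from $f_{X^n}^{\ddagger}$ and is independent of the transmitted codeword $X^n(i,j)$ and the noise $Z^n$ (hence of $Y^n$), conditioning on $(X^n,Y^n):=(X^n(i,j),Y^n)$ and using the tower property gives
\begin{align}
\Pr\{(\hatI,\hatJ)\neq(i,j)\}
&\leq \mathbb{E}\Bigg[\sum_{\tili=1}^{\NT}M_{\tili}\Pr\Big\{\imath(\barX^n;Y^n)\leq\imath(X^n;Y^n)-\log\tfrac{M_i}{M_{\tili}}\,\Big|\,X^n,Y^n\Big\}\Bigg],\nn
\end{align}
where $\barX^n\sim f_{X^n}^{\ddagger}$ is an independent copy; combining with the trivial bound that any probability is at most $1$ yields $\overline{h}_{\ddagger}(n,i)$ as in~\eqref{def:ohni}. (Slight care is needed because the count over $(\tili,\tilj)$ is $\sum_{\tili}M_{\tili}$ rather than $\sum_{\tili}M_{\tili}-1$; this only loosens the bound, and the replacement of the strict inequality by a non-strict one is also in the safe direction.) Note the joint law of $(\barX^n,X^n,Y^n)$ is exactly~\eqref{def:joint}.

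For the lower bound~\eqref{low:pijwrong}, the idea is to restrict attention only to the $M_i-1$ competing codewords within the \emph{same} sub-codebook $\bX_i$, i.e.\ those with $\tili=i$. For such competitors the penalty terms $2\log M_{\tili}$ coincide with $2\log M_i$ and cancel, so the decoder errs whenever some $X^n(i,\tilj)$, $\tilj\neq j$, is at least as close to $Y^n$ in plain Euclidean distance as $X^n(i,j)$ is; hence $\{(\hatI,\hatJ)\neq(i,j)\}\supseteq\bigcup_{\tilj\neq j}\{\|X^n(i,\tilj)-Y^n\|^2\leq\|X^n(i,j)-Y^n\|^2\}$. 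Since the $M_i-1$ events in this union are conditionally i.i.d.\ given $(X^n(i,j),Y^n)$ (each $X^n(i,\tilj)$ being an independent draw from $f_{X^n}^{\ddagger}$), the standard "at least one of $m$ i.i.d.\ events" identity gives $\Pr\{(\hatI,\hatJ)\neq(i,j)\}\geq 1-\big(1-\mathbb{E}[\Pr\{\|\barX^n-Y^n\|^2\leq\|X^n-Y^n\|^2\mid X^n,Y^n\}]\big)^{M_i-1}$, which is $\underline{h}_{\ddagger}(n,i)$ of~\eqref{def:uhni} (the exponent there writes $\|X^n-Y^n\|$ in place of $\|X^n-Y^n\|^2$, but squaring is monotone so the event is unchanged).

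The main obstacle, such as it is, is bookkeeping rather than depth: getting the equivalence between the penalized-distance metric and the shifted mismatched information density exactly right (including that the additive constant $n\rmC(P)+\|Y^n\|^2/(2(P+1))$ is indeed common to all competitors and therefore irrelevant to the $\argmin$), and correctly tracking the conditional independence structure so that the inner probabilities are genuinely over $\barX^n\sim f_{X^n}^{\ddagger}$ independent of $(X^n,Y^n)$ whose joint law is~\eqref{def:joint}. One should also double-check the tie-breaking convention in the $\argmin$ of~\eqref{decoder} so that the non-strict inequalities in both bounds are consistent with whatever convention makes the excess-distortion probability well-defined; since ties occur with probability zero for continuous $f_{X^n}^{\ddagger}$ this is a non-issue for i.i.d.\ Gaussian codebooks and only a measure-zero nuisance for spherical ones.
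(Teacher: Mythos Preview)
Your proposal is correct and follows essentially the same route as the paper's proof: rewrite the penalized nearest-neighbor metric as maximization of $\imath(x^n;y^n)-\log M_{\tili}$, then for~\eqref{upp:pijwrong} apply the union bound over all competitors and condition on $(X^n,Y^n)$ to obtain the RCU-style expression in~\eqref{def:ohni}, and for~\eqref{low:pijwrong} restrict to the $M_i-1$ competitors in the same sub-codebook (so the penalty cancels) and exploit that these are conditionally i.i.d.\ given $(X^n,Y^n)$. Your remarks on the harmless looseness of replacing $M_i-1$ by $M_i$ in the upper bound, on the monotone equivalence $\|\cdot\|\leftrightarrow\|\cdot\|^2$ in~\eqref{def:uhni}, and on tie-breaking being a measure-zero issue are all in line with the paper's treatment.
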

The proof of Lemma \ref{channelerror}, inspired by and similar to that in~\cite{scarlett2017mismatch} by Scarlett, Tan and Durisi, is available in Appendix \ref{proofchannelerror}. Note that Lemma \ref{channelerror} holds regardless of the choice of the channel codebook. We remark that the upper bound given in \eqref{upp:pijwrong} is an extension of RCU bound in \cite[Theorem 16]{polyanskiy2010finite} to the unequal message protection setting~(see \cite{shkel2015unequal}) and the lower bound in \eqref{low:pijwrong} is a proxy of the RCU bound in the other direction.

\subsubsection{Existing Results for Non-Excess-Distortion Probabilities}
We now recall existing results concerning non-excess-distortion probabilities (see\ \eqref{def:Psikz}) from~\cite{zhou2017refined}. Let $
r_1:=\sqrt{\sigma^2-D}-\sqrt{D}$ and 
$r_2 :=\sqrt{\sigma^2-D}+\sqrt{D}$. Furthermore, given $p\in\bbR_+$ let
\begin{align}
R_{\rm{sp}}(p)&:=-\frac{1}{2}\log \bigg(1-\frac{(p+\sigma^2-2D)^2}{4p(\sigma^2-D)}\bigg)\label{def:rspz},\\
s^*(p)&:=\max\bigg\{0,\frac{\sigma^2-3D+\sqrt{(\sigma^2-D)^2+4pD}}{4D}\bigg\}\label{def:sstar},\\
R_{\rm{iid}}(p)&:=\frac{1}{2}\log(1+2s^*(p))+\frac{s^*(p)p}{(1+2s^*(p))(\sigma^2-D)}-\frac{s^*(p)D}{\sigma^2-D}\label{def:riidsapha}.
\end{align}
Finally, given $p\in\bbR_+$ and $k\in\bbN$, let
\begin{align}
\underline{g}(k,p)&:=\frac{\Gamma(\frac{k+2}{2})}{\sqrt{\pi}k\Gamma(\frac{k+1}{2})}\exp(-(k-1)R_{\rm{sp}}(p)),\label{def:ugkz}\\
\overline{g}(k,p)&:=\frac{\Gamma(\frac{k}{2})}{\Gamma(\frac{k-1}{2})\sqrt{\pi}}\exp(-(k-3)R_{\rm{sp}}(p))\label{def:ogkz}.
\end{align}

Recalling the definition of $\Psi_{\dagger}(\cdot)$ in \eqref{def:Psikz} and the results in \cite{zhou2017refined}, we have the following lemma.
\begin{lemma}
\label{property}
The following claims hold. 
\begin{enumerate}
\item Bounds on $\Psi_{\rm{sp}}(k,p)$:
\begin{enumerate}
\item If $p\leq r_1^2$ or $p\geq r_2^2$, then $\Psi_{\rm{sp}}(k,p)=0$;
\item If $p\in(r_1^2,r_2^2)$, then 
$
\Psi_{\rm{sp}}(k,p)\geq \underline{g}(k,p);
$
\item If $p\in(r_1^2,r_2^2)$ and $p+\sigma^2-2D\geq 0$, then
$
\Psi_{\rm{sp}}(k,p)\leq \overline{g}(k,p).
$
\end{enumerate}
\item Properties of $R_{\rm{sp}}(p)$:
\begin{enumerate}
\item $R_{\rm{sp}}(p)$ is increasing in $p$ if $p\geq |\sigma^2-2D|$;
\item $R_{\rm{sp}}(\sigma^2)=\frac{1}{2}\log\frac{\sigma^2}{D}$.
\end{enumerate}
\item Bounds on $\Psi_{\rm{iid}}(k,p)$: 
\begin{align}
\Psi_{\rm{iid}}(k,p)\sim \frac{\exp(-kR_{\rm{iid}}(p))}{s^*(p)\sqrt{\kappa(s^*(p),p)2\pi k}}\quad\mathrm{as} \quad k\to\infty,
\end{align}
where 
\begin{align}
\kappa(s,p)&:=\frac{((\sigma^2-D)(1+2s)+2p)^2}{(\sigma^2-D)(1+2s)^3}\label{def:lambda2s}.
\end{align}
\item Properties of $R_{\rm{iid}}(p)$:
\begin{enumerate}
\item $R_{\rm{iid}}(p)$ is increasing in $p$ for all $p\geq \max\{0,2D-\sigma^2\}$;
\item $R_{\rm{iid}}(\sigma^2)=\frac{1}{2}\log\frac{\sigma^2}{D}$.
\end{enumerate}
\end{enumerate}
\end{lemma}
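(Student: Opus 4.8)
The plan is to obtain all four items as specialisations of the corresponding results in~\cite{zhou2017refined}, indicating for completeness the short arguments behind each. Items~(i) and~(iii) are the probabilistic core --- tail estimates for a spherical cap and exact large-deviations asymptotics for a related Gaussian quantity --- while items~(ii) and~(iv) are then elementary calculus on the closed forms of $R_{\rm{sp}}$ and $R_{\rm{iid}}$.

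For~(i) I would exploit the spherical symmetry of $f_{\hatS^k}^{\rm{sp}}$: taking without loss of generality $s^k=\sqrt{kp}\,\be_1$ and expanding $\|s^k-\hatS^k\|^2=kp-2\sqrt{kp}\,\hatS_1+k(\sigma^2-D)$, the event $\{d(s^k,\hatS^k)\le D\}$ becomes the spherical cap $\{\hatS_1\ge \sqrt{k}\,(p+\sigma^2-2D)/(2\sqrt p)\}$. Writing $U:=\hatS_1/\sqrt{k(\sigma^2-D)}$, which has density proportional to $(1-u^2)^{(k-3)/2}$ on $[-1,1]$ with total mass $\sqrt{\pi}\,\Gamma(\tfrac{k-1}{2})/\Gamma(\tfrac{k}{2})$, the squared cap threshold equals $(p+\sigma^2-2D)^2/(4p(\sigma^2-D))=1-\exp(-2R_{\rm{sp}}(p))$, which is $\ge 1$ exactly when $p\le r_1^2$ or $p\ge r_2^2$; this is~(i)(a). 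For $p\in(r_1^2,r_2^2)$ and $a:=\sqrt{1-\exp(-2R_{\rm{sp}}(p))}\in[0,1)$ (the threshold is nonnegative once $p+\sigma^2-2D\ge 0$, and the cap probability exceeds $\tfrac12$ otherwise, in which case the claimed lower bound is immediate), the probability is $\int_a^1(1-u^2)^{(k-3)/2}\,\rmd u$ divided by that total mass. The lower bound~(i)(b) then follows from $\int_a^1(1-u^2)^{(k-3)/2}\,\rmd u\ge\int_a^1 u\,(1-u^2)^{(k-3)/2}\,\rmd u=(1-a^2)^{(k-1)/2}/(k-1)$ together with $\Gamma(x+1)=x\Gamma(x)$; and, using in addition that the integrand is non-increasing on $[a,1]$ when $a\ge 0$, the upper bound~(i)(c) follows from $\int_a^1(1-u^2)^{(k-3)/2}\,\rmd u\le (1-a^2)^{(k-3)/2}$. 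For~(ii), with $c:=\sigma^2-2D$ one has $R_{\rm{sp}}(p)=-\tfrac12\log\!\big(1-(p+c)^2/(4p(\sigma^2-D))\big)$, so $R_{\rm{sp}}$ is increasing iff $p\mapsto (p+c)^2/p$ is, whose derivative equals $(p^2-c^2)/p^2$; this is~(ii)(a), and evaluating $(p+c)^2/(4p(\sigma^2-D))$ at $p=\sigma^2$ gives $(\sigma^2-D)/\sigma^2$, hence~(ii)(b).

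For~(iii) I would again invoke spherical symmetry to set $s^k=\sqrt{kp}\,\be_1$, so that $W:=\|s^k-\hatS^k\|^2=(\sqrt{kp}-\hatS_1)^2+\sum_{i=2}^{k}\hatS_i^2$ is a scaled sum of independent chi-squared variables, one of them non-central. A short Gaussian integral gives $\bbE[e^{-sW}]=(1+2(\sigma^2-D)s)^{-k/2}\exp\!\big(-skp/(1+2(\sigma^2-D)s)\big)$, so the Chernoff exponent of $\Pr\{W\le kD\}$ is $\inf_{s\ge 0}\{skD-\tfrac k2\log(1+2(\sigma^2-D)s)-skp/(1+2(\sigma^2-D)s)\}$; the substitution $s\mapsto s/(\sigma^2-D)$ identifies this rate with $R_{\rm{iid}}(p)$ and the optimiser with $s^*(p)$, which solves $D(1+2s)^2-(\sigma^2-D)(1+2s)-p=0$ and hence has the stated closed form. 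The precise polynomial prefactor $\big(s^*(p)\sqrt{2\pi k\,\kappa(s^*(p),p)}\big)^{-1}$ comes from the Bahadur--Rao-type exact-asymptotics refinement of this Chernoff bound, i.e.\ a local limit theorem for the exponentially tilted sum, with $\kappa$ the associated curvature term; I would quote this step directly from~\cite{zhou2017refined}. Finally, (iv)(a) is the envelope theorem: $\tfrac{\rmd}{\rmd p}R_{\rm{iid}}(p)=s^*(p)/\big((1+2s^*(p))(\sigma^2-D)\big)$, which is nonnegative and strictly positive exactly when $s^*(p)>0$, that is when $p>\max\{0,2D-\sigma^2\}$; and (iv)(b) follows since $(\sigma^2-D)^2+4D\sigma^2=(\sigma^2+D)^2$ gives $s^*(\sigma^2)=(\sigma^2-D)/(2D)$ and $1+2s^*(\sigma^2)=\sigma^2/D$, whereupon the last two summands in $R_{\rm{iid}}(\sigma^2)$ cancel and $R_{\rm{iid}}(\sigma^2)=\tfrac12\log(\sigma^2/D)$.

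The main obstacle is the prefactor in~(iii): upgrading the exponential order $\exp(-kR_{\rm{iid}}(p))$ to the precise $\Theta(k^{-1/2})$ constant requires a careful local-limit analysis of a sum of independent but non-identically-distributed summands (the first being a non-central chi-squared), which is appreciably more delicate than the bare Chernoff bound. Everything else reduces to the elementary estimates and calculus sketched above, so the cleanest route overall is to cite the refined analysis of~\cite{zhou2017refined} for~(iii) and reproduce the remaining short arguments.
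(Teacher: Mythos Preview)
The paper does not prove this lemma at all; it simply states it as a direct import from~\cite{zhou2017refined} (see the sentence preceding the lemma: ``Recalling the definition of $\Psi_{\dagger}(\cdot)$ in \eqref{def:Psikz} and the results in \cite{zhou2017refined}, we have the following lemma''). Your proposal is therefore fully aligned with the paper's approach---both rest on~\cite{zhou2017refined}---but you go further and spell out the underlying arguments, which are correct. In particular, your spherical-cap reduction for~(i) reproduces $\underline{g}$ and $\overline{g}$ exactly (the Gamma-function identities $\Gamma(\tfrac{k+2}{2})=\tfrac{k}{2}\Gamma(\tfrac{k}{2})$ and $\Gamma(\tfrac{k+1}{2})=\tfrac{k-1}{2}\Gamma(\tfrac{k-1}{2})$ make your lower bound match $\underline{g}$ on the nose); the calculus in~(ii) and~(iv) is clean; and your identification of the Chernoff optimiser with $s^*(p)$ via $D(1+2s)^2-(\sigma^2-D)(1+2s)-p=0$ is correct, so~(iii) reduces, as you say, to a Bahadur--Rao refinement quoted from~\cite{zhou2017refined}.

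One small caveat on~(i)(a): your argument that the squared cap threshold is at least $1$ when $p\le r_1^2$ only yields $\Psi_{\rm sp}=0$ if the threshold itself is nonnegative, i.e.\ $p+\sigma^2-2D\ge 0$. When $D>\sigma^2/2$ and $p\le r_1^2$ one actually has $p<2D-\sigma^2$, the threshold is $\le -1$, and $\Psi_{\rm sp}=1$ rather than $0$. This is a wrinkle in the lemma as stated, not in your proof strategy; the paper only ever uses~(i) in the regime $p\ge\max\{|\sigma^2-2D|,2D-\sigma^2\}$ (see~\eqref{stypical}), where the threshold is nonnegative and your argument goes through verbatim.
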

Using the definitions in \eqref{def:rspz} and \eqref{def:riidsapha}, we conclude that for any $p\in(r_1^2,r_2^2)$ such that $p\geq 2D-\sigma^2$, $R_{\rm{sp}}(p)$ and $R_{\rm{iid}}(p)$ have the same Taylor expansions at $p=\sigma^2$ which can be written as follows:
\begin{align}
R_{\rm{sp}}(p)=R_{\rm{iid}}(p)= \rmR(\sigma^2,D)+\frac{p-\sigma^2}{2\sigma^2}+O\big( (p-\sigma^2)^2\big) ,\quad \mbox{as}\quad p \to \sigma^2 \label{tltobeused}.
\end{align}

\subsection{Achievability Proof}
\label{sec:proofach}
Fix any $\varepsilon\in(0,1)$ and any $\ddagger\in\rm\{sp,iid\}$, let $k$ be chosen such that
\begin{align}
k=n\rho^*(P,\sigma^2,D)-\sqrt{n\rmV_{\ddagger}(\zeta_\rms,\sigma^2,\zeta_\rmc,P)}\rmQ^{-1}(\varepsilon)\label{kachsecond},
\end{align}
where $\rho^*(P,\sigma^2,D)$ is defined in \eqref{def:mu} and $\rmV_{\ddagger}(\zeta_\rms,\sigma^2,\zeta_\rmc,P)$ is defined in \eqref{def:rmvdagger}. In particular, $k$ is linear in $n$, \blue{i.e., $k=\Theta(n)$.}

Using the definitions of $\NT$ in \eqref{def:Ntypes} and $\xi$ in \eqref{def:xi}, we obtain that the number of type classes is bounded as
\begin{align}
\NT\leq 2k\xi+1=2\sqrt{k\log k}+1\label{nt:second}.
\end{align}
Note that for $k$ large enough, if $s^k\in\calT_i$ (see \eqref{def:type}) for any $i\in[1:\NT]$, we have 
\begin{align}
r_1^2\leq \max\{|\sigma^2-2D|,2D-\sigma^2\}\leq \frac{\|s^k\|^2}{k}\leq r_2^2\label{stypical}.
\end{align}

Recall the definition of $j(s^k,\hat{\bs}_i)$ in \eqref{def:jskhsi}. Using the result in \eqref{uppfirst}, for any $(\dagger,\ddagger)\in\rm\{sp,iid\}^2$, we can upper bound the ensemble excess-distortion probability as follows:
\begin{align}
\rmP_{\rme,k,n}
\nn&\leq \Pr\Big\{S^k\notin\bigcup_{i=1}^{\NT}\calT_i\Big\}
+\sum_{i=1}^{\NT} \int_{s^k\in\calT_i}(1-\Pr_{f_{\hatS^k}^{\dagger}}\{d(s^k,\hatS^k)\leq D\})^{M_i} f_{S^k}(s^k) \rmd s^k\\*
&\qquad+\sum_{i=1}^{\NT}\int_{s^k\in\calT_i}\bigg(\int_{\hat{\bs}_i}\Pr\big\{(\hatI,\hatJ)\neq (i,j(s^k,\hat{\bs}_i))\big|S^k=s^k,~\hat{\bS}_i=\hat{\bs}_i\big\} f_{\hat{\bS}_i}(\hat{\bs}_i)\rmd \hat{\bs}_i\bigg)f_{S^k}(s^k) \rmd s^k\label{useseveral}\\
\nn&\leq \Pr\Big\{S^k\notin\bigcup_{i=1}^{\NT}\calT_i\Big\}
+\sum_{i=1}^{\NT} \int_{s^k\in\calT_i}(1-\Pr_{f_{\hatS^k}^{\dagger}}\{d(s^k,\hatS^k)\leq D\})^{M_i} f_{S^k}(s^k) \rmd s^k\\*
&\qquad+\sum_{i=1}^{\NT}\int_{s^k\in\calT_i} \overline{h}_{\ddagger}(n,i) f_{S^k}(s^k) \rmd s^k\label{ss:step2.00}\\
\nn&\leq \Pr\Big\{S^k\notin\bigcup_{i=1}^{\NT}\calT_i\Big\}
+\sum_{i=1}^{\NT} \int_{s^k\in\calT_i}\exp(-M_i\Psi_{\dagger}(k,\|s^k\|^2/k))f_{S^k}(s^k) \rmd s^k\\
&\qquad+\sum_{i=1}^{\NT}\int_{s^k\in\calT_i} \overline{h}_{\ddagger}(n,i) f_{S^k}(s^k) \rmd s^k\label{ss:step2}\\
&\leq \Pr\Big\{S^k\notin\bigcup_{i=1}^{\NT}\calT_i\Big\}
+\exp(-k)+\sum_{i=1}^{\NT}\Pr \{S^k\in\calT_i \}\overline{h}_{\ddagger}(n,i),\label{ss:step3}
\end{align}
where \eqref{useseveral} follows since $d(s^k,\hatS^k(i,J))>D$ implies that $d(s^k,\hatS^k(i,\barj))>D$ for all $\barj\in[1:M_i]$ and each source codeword $\hatS^k(i,\barj)$ is generated independently (see also \cite[Theorem 9]{kostina2012fixed}), \eqref{ss:step2.00} follows from \eqref{upp:pijwrong}, \eqref{ss:step2} follows from the inequality $(1-a)^M\leq \exp(-Ma)$ for all $a\in[0,1]$ and the definition of $\Psi_{\rm{\dagger}}(\cdot)$ in \eqref{def:Psikz}, and \eqref{ss:step3} follows since i) $\Psi_{\dagger}(k,p)$ is decreasing in $p$ for $p\geq \max\{\sigma^2-2D,|\sigma^2-2D|\}$ (which is implied by conclusions (ii)-a) and iv)-a) in Lemma~\ref{property}), ii) $s^k\in\calT_i$ for some $i\in[1:\NT]$ implies that \eqref{stypical} holds and $\frac{\|s^k\|^2}{k}\leq \Upsilon(i)$ (see \eqref{def:type}), and iii) the choice of $\log M_i$ in \eqref{choosemi}.

We bound the first term in \eqref{ss:step3} by invoking the Berry-Esseen Theorem. Let $\rmV:=\zeta_\rms-\sigma^4$ and $T:=\mathbb{E}[|S^2-\sigma^2|^3]$. Then, we have, 
\begin{align}
\Pr\bigg\{S^k\notin\bigcup_{i=1}^{\NT}\calT_i\bigg\}
&\leq \Pr\Big\{\frac{1}{k}\sum_{i=1}^k (S_i^2-\sigma^2)>\xi\sigma^2\Big\}+\Pr\Big\{\frac{1}{k}\sum_{i=1}^k (S_i^2-\sigma^2)<-\xi\sigma^2\Big\}\label{useti}\\
&\leq 2\rmQ\Big(\sqrt{\frac{\log k}{\rmV}}\sigma^2\Big)+\frac{12T}{\sqrt{k}\rmV^{3/2}}\label{usebetheorem}\\
&\leq 2\exp\Big(-\frac{\sigma^4 \log k }{2\rmV}\Big)+\frac{12T}{\sqrt{k}\rmV^{3/2}}\label{uprmq},
\end{align}
where \eqref{useti} follows since $s^k\notin\bigcup_{i=1}^{\NT}\calT_i$ implies $\frac{\|s^k\|^2}{k}<(1-\xi)\sigma^2$ or $\frac{\|s^k\|^2}{k}>(1+\xi)\sigma^2$, \eqref{usebetheorem} follows from the Berry-Esseen theorem ($T$  is finite as the sixth moment of the source is finite) and definition of $\xi$ in \eqref{def:xi}, and \eqref{uprmq} follows from the bound $\rmQ(a)\leq \exp(-\frac{a^2}{2})$. The upper bound in \eqref{uprmq} tends to $0$ as $k\to\infty$. 

The following lemma is essential to bound the final term in \eqref{ss:step3}.
\begin{lemma}
\label{ach:essential}
For any $(\dagger,\ddagger)\in\rm\{sp,iid\}^2$, we have
\begin{align}
\sum_{i=1}^{\NT}\Pr \{S^k\in\calT_i \}\overline{h}_{\ddagger}(n,i)
&\leq \rmQ\Bigg(\rmQ^{-1}(\varepsilon)+O\bigg(\frac{\log n}{\sqrt{n}}\bigg)\Bigg)+O\bigg(\frac{1}{\sqrt{n}}\bigg)\label{eqn:achessential}.
\end{align}
\end{lemma}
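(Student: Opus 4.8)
The plan is to bound each regularized RCU term $\overline h_{\ddagger}(n,i)$ by a single information‑density tail probability, then average over the random power type $I$ of $S^k$ and pass a central limit theorem for the joint fluctuation of $\imath(X^n;Y^n)$ and the source power. First I would reduce $\overline h_{\ddagger}(n,i)$. Since every channel codeword is drawn from the same law $f_{X^n}^{\ddagger}$, we have $\imath(X^n(i,j);Y^n)\dequal\imath(X^n;Y^n)$ with $Y^n=X^n+Z^n$, so $\overline h_{\ddagger}(n,i)$ depends on $i$ only through $M_i$ and the competing sizes $\{M_{\tili}\}$. Using the change‑of‑measure estimate $\mathbb E_{\barX^n}[\exp\{\imath(\barX^n;y^n)\}]\le c_n$ for each $y^n$ — with $c_n=1$ for the i.i.d.\ codebook and $c_n=\poly(n)$ for the spherical codebook, exactly as in \cite{scarlett2017mismatch} — each competing‑type term obeys $M_{\tili}\Pr\{\imath(\barX^n;Y^n)\ge\imath(X^n;Y^n)-\log M_i+\log M_{\tili}\mid X^n,Y^n\}\le c_nM_i\exp\{-\imath(X^n;Y^n)\}$, so the sum over the $\NT$ competing types is at most $c_n\NT M_i\exp\{-\imath(X^n;Y^n)\}$. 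Splitting the expectation defining $\overline h_{\ddagger}(n,i)$ (keeping the truncation $\min\{1,\cdot\}$ inside the expectation over $(X^n,Y^n)$) at the level $\imath(X^n;Y^n)=\log M_i+\rho_n$ with margin $\rho_n=\Theta(\log n)$, the large‑$\imath$ part is at most $c_n\NT\exp\{-\rho_n\}=O(n^{-1/2})$ for a large enough implied constant, while the small‑$\imath$ part is at most $\Pr\{\imath(X^n;Y^n)<\log M_i+\rho_n\}$; this gives, uniformly in $i\in[1:\NT]$,
\begin{align}
\overline h_{\ddagger}(n,i)\le \Pr\{\imath(X^n;Y^n)<\log M_i+\rho_n\}+O\big(\tfrac1{\sqrt n}\big).
\end{align}
It is precisely here that the decoder regularization $2\log M_{\tili}$ in \eqref{decoder} is used: it equalizes the effective threshold across types so that the $\NT=\Theta(\sqrt{k\log k})$ type count costs only the additive $\log\NT=O(\log n)$.

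Next I would expand $\log M_i$ via \eqref{choosemi} and Lemma \ref{property}. The $\Psi_{\dagger}$ estimates give $-\log\Psi_{\dagger}(k,\Upsilon(i))=kR_{\dagger}(\Upsilon(i))+O(\log k)$ uniformly in $i$ (the $\Gamma$‑ratios in parts (i)(b)--(c) and the prefactor in part (iii) are $\poly(k)$). Since $|\Upsilon(i)-\sigma^2|\le\xi\sigma^2=\sigma^2\sqrt{\log k/k}$, the Taylor expansion \eqref{tltobeused} yields $kR_{\dagger}(\Upsilon(i))=k\rmR(\sigma^2,D)+\frac{k(\Upsilon(i)-\sigma^2)}{2\sigma^2}+O(\log k)$, and for $s^k\in\calT_i$ we have $0<\Upsilon(i)-\|s^k\|^2/k\le\sigma^2/k$, hence $\frac{k(\Upsilon(i)-\sigma^2)}{2\sigma^2}=\frac{\|s^k\|^2-k\sigma^2}{2\sigma^2}+O(1)$. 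Combining, every $s^k\in\calT_i$ satisfies $\log M_i=k\rmR(\sigma^2,D)+\frac{\|s^k\|^2-k\sigma^2}{2\sigma^2}+O(\log n)$.

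Finally, writing $I$ for the (random) type of $S^k$, inserting the last display into the bound of the first step, and using that $(X^n,Y^n)$ is independent of $S^k$,
\begin{align}
\sum_{i=1}^{\NT}\Pr\{S^k\in\calT_i\}\overline h_{\ddagger}(n,i)\le \Pr\Big\{\imath(X^n;Y^n)-\tfrac{\|S^k\|^2-k\sigma^2}{2\sigma^2}<k\rmR(\sigma^2,D)+O(\log n)\Big\}+O\big(\tfrac1{\sqrt n}\big).
\end{align}
Set $W_n:=\imath(X^n;Y^n)-\frac{\|S^k\|^2-k\sigma^2}{2\sigma^2}-n\rmC(P)$. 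A direct computation gives $\mathbb E[W_n]=0$, and by the channel‑dispersion computation of \cite{scarlett2017mismatch} together with $\var(S^2)=\zeta_\rms-\sigma^4$ and $k=\Theta(n)$, one obtains $\var[W_n]=n\rmV_\rmc^{\ddagger}(\zeta_\rmc,P)+k\rmV_\rms(\zeta_\rms,\sigma^2)=n\rmV_{\ddagger}(\zeta_\rms,\sigma^2,\zeta_\rmc,P)\,(\rmR(\sigma^2,D))^2+O(\sqrt n)$ by \eqref{def:rmvdagger}. Since the choice \eqref{kachsecond} gives $k\rmR(\sigma^2,D)-n\rmC(P)=-\sqrt{n\rmV_{\ddagger}}\,\rmR(\sigma^2,D)\,\rmQ^{-1}(\varepsilon)$, the Berry--Esseen theorem (the source terms are i.i.d.\ with a finite third — indeed sixth — moment; for the spherical channel codebook one first reduces $\imath(X^n;Y^n)$ to a sum of independent contributions by conditioning as in \cite{scarlett2017mismatch}) converts the probability above into $\rmQ\big(\rmQ^{-1}(\varepsilon)+O(\log n/\sqrt n)\big)+O(1/\sqrt n)$, which is \eqref{eqn:achessential}.

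The step I expect to be the main obstacle is the first one for the \emph{spherical} codebook: there $\imath(X^n;Y^n)$ is not a sum of i.i.d.\ terms, the change‑of‑measure constant $c_n$ is only polynomial rather than exactly $1$, and the final Berry--Esseen application needs the conditioning device of \cite{scarlett2017mismatch}; carrying the factors $c_n$, $\NT$ and all the $\poly(k)$ remainders through to a \emph{uniform} $O(\log n)$ additive slack in the $\rmQ$‑argument, simultaneously over all $\NT$ types, is the bookkeeping‑heavy heart of the proof. A secondary point is the uniformity of the $O(\log k)$ error in the $\log M_i$ expansion, which relies on all the quantization levels $\Upsilon(i)$ lying in a fixed compact neighbourhood of $\sigma^2$ on which $R_{\rm{sp}},R_{\rm{iid}},s^*$ and $\kappa$ are smooth and nondegenerate.
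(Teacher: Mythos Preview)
Your proposal is correct and follows essentially the same route as the paper: reduce $\overline h_{\ddagger}(n,i)$ to an information-density threshold via the change-of-measure/threshold-splitting argument of \cite{scarlett2017mismatch}, expand $\log M_i$ using the $\Psi_\dagger$ estimates of Lemma~\ref{property} and the Taylor expansion \eqref{tltobeused}, replace $\Upsilon(i)$ by $\|s^k\|^2/k$ up to $O(1)$, and then apply a Berry--Esseen theorem to the joint fluctuation of the channel and source terms. The paper's \eqref{usescarlett}--\eqref{ss:step4} implement exactly these steps.

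One small difference worth noting: where you anticipate handling the spherical channel codebook by ``the conditioning device of \cite{scarlett2017mismatch}'', the paper instead writes $X^n=\sqrt{nP}\,\tilX^n/\|\tilX^n\|$ with $\tilX^n\sim\calN(\mathbf 0,\mathbf I_n)$ and applies a Berry--Esseen theorem for \emph{functions} of independent random vectors \cite[Proposition~1]{iri2015third} to $\gamma_{\rm sp}(a_1,a_2,a_3,a_4)=\sigma^2\big(Pa_1-2a_2/\sqrt{1+a_3}\big)+(P+1)a_4$ (see Appendix~\ref{proofssstep4}). This sidesteps the conditioning bookkeeping you flag as the main obstacle and delivers the $O(\log n/\sqrt n)$ slack and the $O(1/\sqrt n)$ Berry--Esseen remainder in one stroke. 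Your parenthetical remark about keeping $\min\{1,\cdot\}$ inside the expectation is also on point; strictly speaking the splitting argument uses the RCU bound with the truncation inside (which also upper bounds the channel error), not the form in \eqref{def:ohni} with the truncation outside.
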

The proof of Lemma \ref{ach:essential} is deferred to the end of this subsection.

For any $(\dagger,\ddagger)\in\rm\{sp,iid\}^2$, combining \eqref{ss:step3}, \eqref{uprmq} and \eqref{eqn:achessential} and using the definitions of $\rho^*(P,\sigma^2,D)$ in \eqref{def:mu} and $\rmV_{\ddagger}(\zeta_\rms,\sigma^2,\zeta_\rmc,P)$ in \eqref{def:rmvdagger}, we conclude that with the choice of $k$ in \eqref{kachsecond}, we have
\begin{align}
\limsup_{n\to\infty}\rmP_{\rme,k,n}\leq \varepsilon.
\end{align}
Therefore, we have shown that for any $\varepsilon\in[0,1)$ and any $(\dagger,\ddagger)\in\rm\{sp,iid\}^2$,
\begin{align}
L_{\dagger,\ddagger}^*\leq \sqrt{\rmV_{\ddagger}(\zeta_\rms,\sigma^2,\zeta_\rmc,P)}\rmQ^{-1}(\varepsilon).
\end{align}

\begin{proof}[Proof of Lemma \ref{ach:essential}]
Using the definitions of the mismatched information density in \eqref{def:mismatchid} and $\overline{h}_{\ddagger}(n,i)$ in \eqref{def:ohni}, for any $\ddagger\in\rm\{sp,iid\}$, we obtain that
\begin{align}
\overline{h}_{\ddagger}(n,i)
&\leq \Pr\bigg\{\sum_{\tili=1}^{\NT}M_{\tili}K_0\exp\bigg(-\imath(X^n(i,j);Y^n)+\log \frac{M_i}{M_{\tili}}\bigg)\geq \frac{1}{\sqrt{n}}\bigg\}+\frac{1}{\sqrt{n}},\label{usescarlett}\\
&=\Pr\bigg\{\NT K_0 M_i\exp(-\imath(X^n(i,j);Y^n))\geq \frac{1}{\sqrt{n}}\bigg\}+\frac{1}{\sqrt{n}}\\
&=\Pr\bigg\{\frac{(P+1)\|Z^n\|^2-\|X^n+Z^n\|^2}{2(P+1)}\geq n\rmC(P)-\log M_i-\log (\NT K_0\sqrt{n})\bigg\}+\frac{1}{\sqrt{n}}\label{usechannel2},
\end{align}
where \eqref{usescarlett} follows from similar steps leading to \cite[Eq.~(38)]{scarlett2017mismatch} with $K_0$ being a finite constant defined in \cite[Eq.~(58)]{tantomamichel2015}, and~\eqref{usechannel2} follows since for each $(i,j)$, $X^n(i,j)$ is generated according to the same distribution (see \eqref{channelspcodebook} or \eqref{channeliidcodebook}) and we denote $X^n(i,j)$ by the  generic random variable $X^n$.

Recall the definitions of $\rmV_{\rms}(\zeta_\rms,\sigma^2)$ in \eqref{def:sourcedispersion}, $\rmV_{\rmc}^{\rm{sp}}(\zeta_\rmc,P)$ in \eqref{def:channeldispersionspherical}, $\rmV_{\rmc}^{\rm{iid}}(\zeta_\rmc,P)$ in \eqref{def:channeldispersioniid}, $\rho^*(P,\sigma^2,D)$ in \eqref{def:mu} and the choice of $k$ in \eqref{kachsecond}.  We first prove Lemma \ref{ach:essential} when we use spherical codebooks for both source and channel codebooks, i.e., $\dagger=\rm{sp}$ and $\ddagger=\rm{sp}$. For simplicity, let
\begin{align}
\rmV_1&:=4\sigma^4(P+1)^2\Big(\rho^*(P,\sigma^2,D)\rmV_\rms(\zeta_\rms,\sigma^2)+\rmV_\rmc^{\rm{sp}}(\zeta_\rmc,P)\Big)\\
&=\rho^*(P,\sigma^2,D)(P+1)^2(\zeta_\rms-\sigma^4)+\sigma^4(4P+P^2(\zeta_\rmc-1))\label{def:rmV1}.
\end{align}

Using the choice of $M_i$ in \eqref{choosemi}, the definition of $\overline{h}_{\ddagger}(n,i)$ in \eqref{def:ohni} and the result in \eqref{usechannel2}, we obtain that 
\begin{align}
\nn&\sum_{i=1}^{\NT}\Pr \{S^k\in\calT_i \}\overline{h}_{\rm{sp}}(n,i)-\frac{1}{\sqrt{n}}\\*
\nn&\leq\sum_{i=1}^{\NT}\int_{s^k\in\calT_i}
\Pr\bigg\{\frac{(P+1)\|Z^n\|^2-\|X^n+Z^n\|^2}{2(P+1)}\geq n\rmC(P)+\log \Psi_{\rm{sp}}(k,\Upsilon(i))\\*
&\qquad\qquad\qquad\qquad\qquad\qquad\qquad\qquad-\log k-\log (\NT K_0 \sqrt{n}) \bigg\} f_{S^k}(s^k)\rmd s^k\label{spsp:alt}\\
\nn&\leq\sum_{i=1}^{\NT}\int_{s^k\in\calT_i}
\Pr\bigg\{\frac{(P+1)\|Z^n\|^2-\|X^n+Z^n\|^2}{2(P+1)}\geq n\rmC(P)+\log \underline{g}\Big(k,\frac{\|s^k\|^2+\sigma^2}{k}\Big)\\
&\qquad\qquad\qquad\qquad\qquad-\log k-\log (\NT K_0\sqrt{n}) \bigg\} f_{S^k}(s^k)\rmd s^k\label{useskincalti}\\
\nn&\leq\sum_{i=1}^{\NT}\int_{s^k\in\calT_i}
\Pr\bigg\{\frac{(P+1)\|Z^n\|^2-\|X^n+Z^n\|^2}{2(P+1)}\geq n\rmC(P)-(k-1)R_{\rm{sp}}\bigg(\frac{\|s^k\|^2+\sigma^2}{k}\bigg)\\*
&\qquad\qquad\qquad\qquad\qquad+\log \frac{\Gamma(\frac{k+2}{2})}{\sqrt{\pi}\Gamma(\frac{k+1}{2})}-\log k-\log (\NT K_0 \sqrt{n}) \bigg\} f_{S^k}(s^k)\rmd s^k\label{usedefug}\\
\nn&=\sum_{i=1}^{\NT}\int_{s^k\in\calT_i}
\Pr\bigg\{\frac{(P+1)\|Z^n\|^2-\|X^n+Z^n\|^2}{2(P+1)}\geq n\rmC(P)-k\rmR(\sigma^2,D)\\*
& \qquad\qquad\qquad\qquad\qquad-\frac{\|s^k\|^2-(k-1)\sigma^2}{2\sigma^2}+O(\log n)\bigg\} f_{S^k}(s^k)\rmd s^k\label{taylorpsi}\\
&=\Pr\bigg\{S^k\in\bigcup_{i=1}^{\NT}\calT_i,\frac{P\|Z^n\|^2-nP-2\langle X^n,Z^n\rangle}{2(P+1)}\geq n\rmC(P)-k\rmR(\sigma^2,D)-\frac{\|S^k\|^2-k\sigma^2}{2\sigma^2}+O(\log n)\bigg\}\label{usespcbook}\\
&\leq \Pr\bigg\{\frac{P\|Z^n\|^2-nP-2\langle X^n,Z^n\rangle}{2(P+1)}+\frac{\|S^k\|^2-k\sigma^2}{2\sigma^2}\geq n\rmC(P)-k\rmR(\sigma^2,D)+O(\log n)\bigg\}\\
\nn&=\Pr\bigg\{\sigma^2\big(P\|Z^n\|^2-nP-2\langle X^n,Z^n\rangle\big)+(P+1)\big(\|S^k\|^2-k\sigma^2\big)\\*
&\qquad\qquad\qquad\qquad\qquad\geq 2\sigma^2(P+1)\big(n\rmC(P)-k\rmR(\sigma^2,D)+O(\log n)\big)\bigg\}\label{ss:step3-1}\\
&\leq \rmQ\Bigg(\rmQ^{-1}(\varepsilon)+O\bigg(\frac{\log n}{\sqrt{n}}\bigg)\Bigg)+O\bigg(\frac{1}{\sqrt{n}}\bigg)\label{ss:step4},
\end{align}
where \eqref{useskincalti} follows since $\frac{\|s^k\|^2+\sigma^2}{k}\geq \Upsilon(i)$ when $s^k\in\calT_i$, $\Psi_{\rm{sp}}(k,p)\geq \underline{g}(k,p)$ (see\ Claim (i) in Lemma \ref{property}) and $\underline{g}(k,p)$ is decreasing in $p\geq |\sigma^2-2D|$, \eqref{usedefug} follows from the definition of $\underline{g}(\cdot)$ in \eqref{def:ugkz}, \eqref{taylorpsi} follows by i) using the Taylor expansion of $R_{\rm{sp}}(p)$ at $p=\sigma^2$ (see~\eqref{tltobeused}) and the bound on $N$ in \eqref{nt:second}, and ii) noting that $ {\Gamma(\frac{k+2}{2})}/{(k\Gamma(\frac{k+1}{2}))}=O(\frac{1}{\sqrt{k}})$, $k=\Theta(n)$, $\big|\frac{\|s^k\|^2}{k}-\sigma^2\big|^2\leq (\xi+\frac{1}{k})^2\sigma^4=O(\frac{\log k}{k})$ since $s^k\in \Upsilon(i)$ (see \eqref{def:Lambdai}) for some $i\in[1:\NT]$ and $\xi=\sqrt{\frac{\log k}{k}}$ (see~\eqref{def:xi}), and \eqref{ss:step4} follows by applying the Berry-Esseen theorem for functions of independent random variables~\cite[Proposition 1]{iri2015third} and using the choice of $k$ in \eqref{kachsecond}. A detailed proof of \eqref{ss:step4} is given in Appendix~\ref{proofssstep4}.

Next, we prove Lemma \ref{ach:essential} when we use the spherical codebook for the source codebook and the i.i.d. Gaussian codebook for the channel codebook. Compared with the case where $\dagger=\rm{sp}$ and $\ddagger=\rm{sp}$, the proof when $\dagger=\rm{sp}$ and $\ddagger=\rm{iid}$ is exactly the same until \eqref{taylorpsi}. Thus, when $\dagger=\rm{sp}$ and $\ddagger=\rm{iid}$, we have
\begin{align}
\nn&\sum_{i=1}^{\NT}\Pr\{S^k\in\calT_i \}\overline{h}_{\rm{iid}}(n,i)\\
&\leq \Pr\bigg\{S^k\in\calT_i,~\frac{(P+1)\|Z^n\|^2-\|X^n+Z^n\|^2}{2(P+1)}\geq n\rmC(P)-k\rmR(\sigma^2,D)-\frac{\|S^k\|^2-(k+1)\sigma^2}{2\sigma^2}+O(\log n)\bigg\}\label{essential:spiid}\\
&\leq \Pr\bigg\{\frac{(P+1)\|Z^n\|^2-\|X^n+Z^n\|^2}{2(P+1)}\geq n\rmC(P)-k\rmR(\sigma^2,D)-\frac{\|S^k\|^2-(k+1)\sigma^2}{2\sigma^2}+O(\log n)\bigg\}\\
&=\Pr\bigg\{\sum_{i=1}^n \sigma^2(PZ_i^2-X_i^2-2X_iZ_i)+(P+1)(S_i^2-\sigma^2)\geq 2\sigma^2(P+1)\big(n\rmC(P)-k\rmR(\sigma^2,D)+O(\log n)\big)\bigg\}\label{iidsp4use}\\
&\leq \rmQ\Bigg(\rmQ^{-1}(\varepsilon)+O\bigg(\frac{\log n}{\sqrt{n}}\bigg)\Bigg)+O\bigg(\frac{1}{\sqrt{n}}\bigg)\label{sameforiid},
\end{align}
where \eqref{iidsp4use} follows since the channel input $X^n$ is i.i.d. according to $\calN(0,P)$ when the channel codebook is an i.i.d. Gaussian codebook, and \eqref{sameforiid} follows similarly to the proof of \eqref{ss:step4} by applying the Berry-Esseen theorem for functions of independent random variables~\cite[Proposition 1]{iri2015third} and is available in Appendix \ref{proofssstep4}. 

Finally, we prove Lemma \ref{ach:essential} when the source codebook is an i.i.d. Gaussian codebook and the channel codebook is either a spherical or an i.i.d. Gaussian codebook. Using \eqref{choosemi} and \eqref{upp:pijwrong},  similarly to arguments leading to \eqref{taylorpsi}, we obtain that for any $\ddagger\in\rm\{sp,iid\}$,
\begin{align}
\nn&\sum_{i=1}^{\NT}\Pr \{S^k\in\calT_i \}\overline{h}_{\ddagger}(n,i)\\
&\leq \sum_{i=1}^{\NT}\int_{s^k\in\calT_i}
\Pr\bigg\{\frac{(P+1)\|Z^n\|^2-\|X^n+Z^n\|^2}{2(P+1)}\geq n\rmC(P)-k\rmR(\sigma^2,D) \nn\\*
& \qquad\qquad\qquad\qquad -\frac{\|s^k\|^2-(k+1)\sigma^2}{2\sigma^2}+O(\log n) \bigg\} f_{S^k}(s^k)\rmd s^k\label{tayloriid},
\end{align}
When the channel codebook is a spherical codebook, the rest of the proof is exactly the same as the steps in~\eqref{usespcbook} to \eqref{ss:step4}. On the other hand, when the channel codebook is an i.i.d.\ Gaussian codebook, the rest of the proof is exactly the same as \eqref{essential:spiid} to \eqref{sameforiid}.
\end{proof}

\subsection{Ensemble Converse Proof}
\label{sec:proofconversesr}
In the ensemble converse proof, for any $\varepsilon\in(0,1)$ and $(\dagger,\ddagger)\in\rm\{sp,iid\}^2$, we assume for the sake of contradiction that there exists a sequence of $(k,n)$-codes (see\  Definition \ref{def:code}) such that \eqref{eqn:vareps} holds and 
\begin{align}
k=n\rho^*(P,\sigma^2,D)-\sqrt{n\rmV_{\ddagger}(\zeta_\rms,\sigma^2,\zeta_\rmc,P)}\rmQ^{-1}(\varepsilon+\tau)~\mbox{for some }\tau > 0.\label{kconsecond}
\end{align}

Recall the lower bound on the ensemble excess-distortion probability in \eqref{lowfirst2} and the definition of $j(s^k,\hat{\bs}_i)$ in \eqref{def:jskhsi}. For any $(\dagger,\ddagger)\in\rm\{sp,iid\}^2$, we can lower bound each term in the first sum in \eqref{lowfirst2} as follows: for any $i\in[1:\NT]$,
\begin{align}
\nn&\Pr\{S^k\in\calT_i,(\hatI,\hatJ)\neq (i,J)\}\\
&=\int_{s^k\in\calT_i}\bigg(\int_{\hat{\bs}_i} \Pr\big\{(\hatI,\hatJ)\neq (i,j(s^k,\hat{\bs}_i))\big|S^k=s^k,~\hat{\bS}_i=\hat{\bs}_i\big\}
f_{\hat{\bS}_i}(\hat{\bs}_i)\bigg)f_S^k(s^k)\rmd s^k\\
&\geq \int_{s^k\in\calT_i}\ \underline{h}_{\ddagger}(n,i) f_S^k(s^k)\rmd s^k\label{uselowpij}\\
&=\Pr\bigg\{S^k\in\calT_i,\frac{(P+1)\|Z^n\|^2-\|X^n+Z^n\|^2}{2(P+1)}\geq n\rmC(P)-\log M_i+O(\log n)\bigg\}\label{c:step1}.
\end{align}
where \eqref{uselowpij} follows by using the result in \eqref{low:pijwrong}, and \eqref{c:step1} follows similarly to the steps leading to~\cite[Eq. (74)]{scarlett2017mismatch}.

Using \eqref{kconsecond} and \eqref{c:step1}, similarly to steps proving Lemma \ref{ach:essential}, we can prove that for any $(\dagger,\ddagger)\in\rm\{sp,iid\}^2$,
\begin{align}
\Pr\Big\{S^k\notin\bigcup_{i=1}^{\NT}\calT_i\Big\}+\sum_{i=1}^{\NT}\Pr\{S^k\in\calT_i,(\hatI,\hatJ)\neq (i,J)\}&\geq \rmQ\Bigg(\rmQ^{-1}(\varepsilon+\tau)+O\bigg(\frac{\log n}{\sqrt{n}}\bigg)\Bigg)+O\bigg(\frac{1}{\sqrt{n}}\bigg)\label{c:step2}.
\end{align}
The proof of \eqref{c:step2} is given in Appendix \ref{proofcstep2} for completeness.

The following lemma is vital in the converse proof. 
\begin{lemma}
\label{lowerresidual}
For any $(\dagger,\ddagger)\in\rm\{sp,iid\}^2$ and for  all $i\in[1:\NT]$,
\begin{align}
\Pr\big\{d(S^k,\hatS^k(\hatI,\hatJ))\leq D\big|S^k\in\calT_i,~\hatI\neq i\big\}\leq\exp\bigg(-k\Big(\frac{1}{2}\log\frac{\sigma^2}{D}+O\Big(\sqrt{\frac{\log k}{k}}\Big)\Big)\bigg)\label{lowr1},\\
\Pr\big\{d(S^k,\hatS^k(\hatI,\hatJ))\leq D|S^k\in\calT_i,~\hatI=i,~\hatJ\neq J\big\}\leq \exp\bigg(-k\Big(\frac{1}{2}\log\frac{\sigma^2}{D}+O\Big(\sqrt{\frac{\log k}{k}}\Big)\Big)\bigg)\label{lowr2}.
\end{align}
\end{lemma}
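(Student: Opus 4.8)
The plan is to treat \eqref{lowr1} and \eqref{lowr2} separately, in both cases exploiting the independence structure of the random-coding ensemble of Definition~\ref{def:code}.

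For \eqref{lowr1}, I would first note that, by \eqref{outputc:depends}, the channel-decoder output $(\hatI,\hatJ)$ depends on the source side only through $(S^k,\hat{\bS}_i)$; hence, whenever $\hatI=\tili\ne i$, the reproduced codeword $\hatS^k(\hatI,\hatJ)$ belongs to the sub-codebook $\hat{\bS}_{\tili}$, which is generated independently of $(S^k,\hat{\bS}_i,\bX,Z^n)$ and therefore remains distributed as $f_{\hatS^k}^{\dagger}$ even after conditioning on $\{S^k=s^k,(\hatI,\hatJ)=(\tili,\tilj)\}$. This gives $\Pr\{d(s^k,\hatS^k(\hatI,\hatJ))\le D\mid S^k=s^k,\hatI\ne i\}=\Psi_{\dagger}(k,\|s^k\|^2/k)$ and, averaging over $s^k$, $\Pr\{d(S^k,\hatS^k(\hatI,\hatJ))\le D\mid S^k\in\calT_i,\hatI\ne i\}\le\max_{s^k\in\calT_i}\Psi_{\dagger}(k,\|s^k\|^2/k)$. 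To bound this maximum I would use that $s^k\in\calT_i\subseteq\bigcup_{j}\calT_j$ forces $|\|s^k\|^2/k-\sigma^2|\le\xi\sigma^2+\sigma^2/k=O(\sqrt{\log k/k})$ (recall $\xi=\sqrt{\log k/k}$), together with the sandwich $\underline{g}\le\Psi_{\rm{sp}}\le\overline{g}$ of Lemma~\ref{property}, the asymptotics of $\Psi_{\rm{iid}}$ there, and the Taylor expansion \eqref{tltobeused} of $R_{\rm{sp}}=R_{\rm{iid}}$ at $\sigma^2$: the dominant factor is $\exp(-kR_{\dagger}(\|s^k\|^2/k))=\exp(-k(\tfrac12\log\tfrac{\sigma^2}{D}+O(\sqrt{\log k/k})))$, while the polynomial prefactors contribute only $O(\log k/k)=o(\sqrt{\log k/k})$ to the exponent.

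The bound \eqref{lowr2} is the more delicate one, and I expect it to be the crux: here $\hatS^k(i,\hatJ)$ lies in the \emph{same} sub-codebook $\hat{\bS}_i$ used for encoding, so it is not independent of $S^k$ --- conditioning on $\hatJ\ne J$, with $J$ the minimum-distance index of \eqref{def:sencoder}, makes the selected codeword ``no closer to $S^k$ than the nearest one.'' The plan is to condition on the pair $(S^k,J)$. Given $(S^k=s^k,J=\ell)$, the output $(\hatI,\hatJ)$ is a deterministic function of $(\bX,Z^n)$ alone (by \eqref{eqn:cdecoder}, since the transmitted channel codeword is then $X^n(i,\ell)$), hence conditionally independent of $\hat{\bS}_i$; and since $\{X^n(i,\barj)\}_{\barj}$ is i.i.d., a relabeling of the indices $\{1,\dots,M_i\}\setminus\{\ell\}$ shows that $\Pr\{\hatI=i,\hatJ=\tilj\mid S^k=s^k,J=\ell\}$ is the same for every $\tilj\ne\ell$, so $\Pr\{\hatJ=\tilj\mid S^k=s^k,J=\ell,\hatI=i,\hatJ\ne J\}=1/(M_i-1)$. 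For each such $\tilj$, conditioning further on $\hatJ=\tilj$ does not alter the law of $\hatS^k(i,\tilj)$ beyond the information carried by $J=\ell$; writing $X_{\barj}:=d(s^k,\hatS^k(i,\barj))$ --- i.i.d.\ continuous with $\Pr\{X_{\barj}\le D\}=\Psi_{\dagger}(k,\|s^k\|^2/k)=:\psi$ and $\Pr\{X_{\ell}=\min_{\barj}X_{\barj}\}=1/M_i$ --- exchangeability then yields
\[
\Pr\{d(s^k,\hatS^k(i,\tilj))\le D\mid S^k=s^k,\,J=\ell\}\ \le\ \frac{\Pr\{X_{\tilj}\le D,\ X_{\ell}\le X_{\tilj}\}}{\Pr\{X_{\ell}=\min_{\barj}X_{\barj}\}}\ =\ \frac{\psi^2/2}{1/M_i}\ =\ \frac{M_i\psi^2}{2}.
\]
Averaging over $\tilj$, then over $\ell$, then over $s^k\in\calT_i$ gives $\Pr\{d(S^k,\hatS^k(\hatI,\hatJ))\le D\mid S^k\in\calT_i,\hatI=i,\hatJ\ne J\}\le\tfrac12\max_{s^k\in\calT_i}M_i\,\Psi_{\dagger}(k,\|s^k\|^2/k)^2$.

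It then remains to close the estimate for \eqref{lowr2} using the choice $\log M_i=-\log\Psi_{\dagger}(k,\Upsilon(i))+\log k$ from \eqref{choosemi}: since $\|s^k\|^2/k$, $\Upsilon(i-1)$, and $\Upsilon(i)$ all lie within $O(\sqrt{\log k/k})$ of $\sigma^2$, the same $\underline{g}/\overline{g}$ sandwich (respectively, the $\Psi_{\rm{iid}}$ asymptotics) and the expansion \eqref{tltobeused} give $\log M_i=\tfrac{k}{2}\log\tfrac{\sigma^2}{D}+O(\sqrt{k\log k})$ and $\log\Psi_{\dagger}(k,\|s^k\|^2/k)=-\tfrac{k}{2}\log\tfrac{\sigma^2}{D}+O(\sqrt{k\log k})$, so $M_i\,\Psi_{\dagger}(k,\|s^k\|^2/k)^2=\exp(-k(\tfrac12\log\tfrac{\sigma^2}{D}+O(\sqrt{\log k/k})))$, as required; the factor $\tfrac12$ and the monotonicity of $\Psi_{\dagger}$ used in the maximization are absorbed into the $O(\cdot)$ terms. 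The main obstacle is exactly this argument for \eqref{lowr2}: one cannot treat $\hatS^k(i,\hatJ)$ as a fresh Gaussian codeword, and must instead pin down the conditioning $(S^k,J)$ under which the channel decoder decouples from the source sub-codebook and then quantify, via the exchangeability inequality above, how much the ``not-the-nearest'' event helps --- the single extra factor of $\psi$ it produces is precisely what absorbs the $M_i$ codewords, up to the $\log k$ slack built into \eqref{choosemi}. Everything else reduces to the estimates of Lemma~\ref{property} and \eqref{tltobeused}.
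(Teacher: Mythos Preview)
Your argument for \eqref{lowr1} is correct and essentially the same as the paper's: both use that $\hat{\bS}_{\tili}$ (for $\tili\ne i$) is independent of $(S^k,\hat{\bS}_i,\bX,Z^n)$, hence of the event $\{(\hatI,\hatJ)=(\tili,\tilj)\}$, so the reproduced codeword is a fresh draw and the conditional non-excess probability is exactly $\Psi_{\dagger}(k,\|s^k\|^2/k)$, which is then bounded uniformly over $\calT_i$ via Lemma~\ref{property} and~\eqref{tltobeused}.

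For \eqref{lowr2} your proof is also correct, but it takes a genuinely different route from the paper. You condition on $(S^k,J)=(s^k,\ell)$, observe that this decouples $(\hatI,\hatJ)$ from $\hat{\bS}_i$, and then---because conditioning on $J=\ell$ biases $\hatS^k(i,\tilj)$ away from $s^k$---you invoke an exchangeability inequality to obtain $\Pr\{X_{\tilj}\le D\mid J=\ell\}\le M_i\psi^2/2$, ending with the bound $\tfrac12 M_i\,\Psi_{\dagger}^2$. The paper instead conditions on the full pair $(s^k,\hat{\bs}_i)$ and proves (Lemma~\ref{vitallowerresidual}) that $q_i:=\Pr\{(\hatI,\hatJ)=(i,\hatj)\mid s^k,\hat{\bs}_i\}$ is a constant depending only on $i$ for every $\hatj\ne j(s^k,\hat{\bs}_i)$. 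Because $q_i$ is \emph{deterministic} in $\hat{\bs}_i$, the indicator $1\{d(s^k,\hats^k(i,\hatj))\le D\}$ integrates directly to $\Psi_{\dagger}$, and the ratio of the two joint probabilities collapses to $\tfrac{M_i}{M_i-1}\Psi_{\dagger}(k,\Upsilon(0))$. Thus the paper never needs to quantify the ``not-the-nearest'' bias and obtains a bound that is tighter by a factor of order $M_i\Psi_{\dagger}\asymp k$ (harmless in the exponent, but it shows your exchangeability detour is unnecessary). What your approach buys is that it avoids stating and proving a separate lemma about $q_i$; what the paper's buys is a cleaner one-line factorization once that lemma is in hand.
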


Combining the results in \eqref{lowfirst2}, \eqref{c:step2} and Lemma \ref{lowerresidual}, we conclude for any $k$ satisfying \eqref{kconsecond}, for any $(\dagger,\ddagger)\in\rm\{sp,iid\}^2$,
\begin{align}
\limsup_{n\to\infty}\rmP_{\rme,k,n}\geq \liminf_{n\to\infty}\rmP_{\rme,k,n}\geq \varepsilon+\tau\label{consecondfinal},
\end{align}
This violates the condition that $\rmP_{\rme,k,n}\leq \varepsilon$. Since $\tau>0$ is arbitrary, we have shown that for any $\varepsilon\in[0,1)$  and any $(\dagger,\ddagger)\in\rm\{sp,iid\}^2$,
\begin{align}
L_{\dagger,\ddagger}^*(\varepsilon)\geq \sqrt{\rmV_{\ddagger}(\zeta_\rms,\sigma^2,\zeta_\rmc,P)}\rmQ^{-1}(\varepsilon).
\end{align}

\begin{proof}[Proof of Lemma \ref{lowerresidual}]
\label{prooflowerresidual}

For simplicity, given $s^k\in\calT_i$ and $\hat{\bs}_i$, we let
\begin{align}
\Pr\{(\hatI,\hatJ)=(\hati,\hatj)|s^k,\hat{\bs}_i\}&:=\Pr\{(\hatI,\hatJ)=(\hati,\hatj)|S^k=s^k,\hat{\bS}_i=\hat{\bs}_i\}.
\end{align}
We first prove \eqref{lowr1}. For any $\dagger\in\rm\{sp,iid\}$, we have
\begin{align}
\nn&\Pr\{S^k\in\calT_i,~\hatI\neq i,~d(S^k,\hatS^k(\hatI,\hatJ))\leq D\}\\
&=\int_{s^k\in\calT_i}\bigg(\sum_{\hati=1}^{\NT}\sum_{\hatj=1}^{M_{\hati}}\int_{\hat{\bs}} \Pr\{(\hatI,\hatJ)=(\hati,\hatj)|s^k,\hat{\bs}\}1\big\{\hati\neq i,~d(s^k,\hatS^k(\hati,\hatj))\leq D\big\}f_{\hat{\bS}}^{\dagger}(\hat{\bs})\rmd \hat{\bs}\bigg)f_S^k(s^k)\rmd s^k\label{eqn:become_prob0}\\
\nn&=\int_{s^k\in\calT_i}\Bigg(\sum_{\hati\in[1:\NT]\setminus\{i\}}\sum_{\hatj=1}^{M_{\hati}} \bigg\{\int_{\hat{\bs}_i}\Pr\{(\hatI,\hatJ)=(\hati,\hatj)|s^k,\hat{\bs}_i\}\label{eqn:onemoreline}\\
&\qquad\qquad\qquad\qquad\qquad\qquad \times \Big(\int_{\hat{\bs}_{\hati}}1\big\{ d(s^k,\hatS^k(\hati,\hatj))\leq D\big\}f_{\hat{\bS}_{\hati}}^{\dagger}(\hat{\bs}_{\hati})\rmd \hat{\bs}_{\hati}\Big)f_{\hat{\bS}_i}^{\dagger}(\hat{\bs}_i)\rmd \hat{\bs}_i\bigg\}\Bigg)f_S^k(s^k)\rmd s^k\\
&=\int_{s^k\in\calT_i}\Bigg(\int_{\hat{\bs}_i}\sum_{\substack{\hati\in[1:\NT]\setminus\{i\}}}\sum_{\hatj=1}^{M_{\hati}}\Pr\{(\hatI,\hatJ)=(\hati,\hatj)|s^k,\hat{\bs}_i\}\Pr_{f_{\hatS^k}^{\dagger}}\{d(s^k,\hatS^k(\hati,\hatj))\leq D\}f_{\hat{\bS}_i}^{\dagger}(\hat{\bs}_i)\rmd \hat{\bs}_i\Bigg)f_S^k(s^k)\rmd s^k \label{eqn:become_prob}\\
&=\int_{s^k\in\calT_i}\Bigg(\int_{\hat{\bs}_i}\sum_{\substack{\hati\in[1:\NT]\setminus\{i\}}}\sum_{\hatj=1}^{M_{\hati}}\Pr\{(\hatI,\hatJ)=(\hati,\hatj)|s^k,\hat{\bs}_i\}f_{\hat{\bS}_i}^{\dagger}(\hat{\bs}_i)\rmd \hat{\bs}_i\Bigg) \Psi_{\dagger}\Big(k,\frac{\|s^k\|^2}{k}\Big)f_S^k(s^k)\rmd s^k\label{cstep1}\\
&\leq \int_{s^k\in\calT_i}\Bigg(\int_{\hat{\bs}_i}\sum_{\substack{\hati\in[1:\NT]\setminus\{i\}}}\sum_{\hatj=1}^{M_{\hati}}\Pr\{(\hatI,\hatJ)=(\hati,\hatj)|s^k,\hat{\bs}_i\}f_{\hat{\bS}_i}^{\dagger}(\hat{\bs}_i)\rmd \hat{\bs}_i\Bigg)\Psi_{\dagger}\big(k,\Upsilon(0)\big)f_S^k(s^k)\rmd s^k\label{cstep2}\\
&\leq \Pr\{S^k\in\calT_i,~\hatI\neq i\}\exp\bigg(-k\Big(\frac{1}{2}\log\frac{\sigma^2}{D}+O\Big(\sqrt{\frac{\log k}{k}}\Big)\Big)\bigg),\label{cstep3}
\end{align}
where \eqref{eqn:onemoreline} follows since i) $(\hatI,\hatJ)$ is independent of all subcodebooks $\hat{\bs}_{\bari}$ for $\bari\in[1:\NT]\setminus\{i\}$ (see Section \ref{sec:channeloutput}), and ii) therefore we can divide the whole codebook $\hat{\bs}$ into subcodebooks and integrate over each codebook separately, \eqref{cstep1} follows by using the definition of $\Psi_{\dagger}(\cdot)$ in \eqref{def:Psikz} and noting that each source codeword is generated independently according to the same distribution~(see \eqref{sourcespcodebook} or \eqref{sourceiidcodebook}), \eqref{cstep2} follows since i) $\Psi_{\dagger}\big(k,p\big)$ is decreasing in $p$ for $p\geq \max\{|\sigma^2-2D|,2D-\sigma^2\}$ and ii) for all $s^k\in\calT_i$, we have $\frac{\|s^k\|^2}{k}\geq \Upsilon(i-1)\geq \Upsilon(0)$, and \eqref{cstep3} follows from the bounds on $\Psi_{\dagger}(\cdot)$ in Lemma \ref{property} and the Taylor expansion of $R_{\dagger}(p)$ at $\sigma^2$ (see \eqref{tltobeused}) similarly to \eqref{taylorpsi} and noting the definitions of $\Upsilon(0)$ in \eqref{def:L0} and $\xi$ in \eqref{def:xi}.

We now prove \eqref{lowr2}. Recall the definition of $j(s^k,\hat{\bs}_i)$ in \eqref{def:jskhsi}.  We have the following lemma  which is proved in Appendix~\ref{proofvitallower}. 
\begin{lemma}
\label{vitallowerresidual}
For any $i\in[1:\NT]$, given any $s^k\in\calT_i$ and any subcodebook $\hat{\bs}_i$, then for any $(\hati,\hatj)$ such that $\hati=i$ and $\hatj\neq j(s^k,\hat{\bs}_i)$, we have that $\Pr\{(\hatI,\hatJ)=(i,\hatj)|s^k,\hat{\bs}_i\}$ depends only on $i$. For brevity, we denote this quantity as $q_i:= \Pr\{(\hatI,\hatJ)=(i,\hatj)|s^k,\hat{\bs}_i\}$.
\end{lemma}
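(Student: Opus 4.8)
The plan is to reduce the claim to two symmetries of the channel ensemble: exchangeability of the channel codewords lying in a common type class, and invariance of the ensemble under relabeling the indices within that class. First I would invoke the reduction already established in Section~\ref{sec:channeloutput}: conditioned on $\{S^k=s^k,\hat{\bS}_i=\hat{\bs}_i\}$, the only way the source side influences the channel decoder is through the transmitted index $J=j^{\star}:=j(s^k,\hat{\bs}_i)$ (see~\eqref{def:jskhsi} and~\eqref{eqn:cdecoder}); and since the source and channel codebooks are generated independently, conditionally on this event $X^n(i,j^{\star})\sim f_{X^n}^{\ddagger}$, the noise $Z^n$ is independent of it, $Y^n=X^n(i,j^{\star})+Z^n$, and the remaining channel codewords $\{X^n(\tili,\tilj):(\tili,\tilj)\neq(i,j^{\star})\}$ are i.i.d.\ $\sim f_{X^n}^{\ddagger}$ and independent of $(X^n(i,j^{\star}),Z^n)$. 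Hence $\Pr\{(\hatI,\hatJ)=(i,\hatj)|s^k,\hat{\bs}_i\}$ is a probability over the channel codebook and the noise alone, and it depends on $(s^k,\hat{\bs}_i)$ only through $j^{\star}$; write it as $p(i;j^{\star},\hatj)$.

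Next I would show that $p(i;j^{\star},\hatj)$ does not depend on $\hatj$. Fix the realization of every random object except the $M_i-1$ ``wrong'' type-$i$ codewords $\{X^n(i,\tilj):\tilj\in[1:M_i]\setminus\{j^{\star}\}\}$; conditionally, these are i.i.d.\ $\sim f_{X^n}^{\ddagger}$ and hence exchangeable. The structural fact that makes the argument work is that in the decoding metric~\eqref{decoder} every codeword of type class $i$ carries the \emph{same} additive penalty $2\log M_i$, so within that class the sole asymmetry is that $X^n(i,j^{\star})$ actually generated $Y^n$. Consequently, for a given $\hatj\neq j^{\star}$, the event $\{(\hatI,\hatJ)=(i,\hatj)\}$ is exactly the event that $\|X^n(i,\hatj)-Y^n\|^2<\|X^n(i,\tilj)-Y^n\|^2$ for every other wrong index $\tilj$ and that $\|X^n(i,\hatj)-Y^n\|^2+2\log M_i$ is strictly smaller than the decoding metric of every index outside the set of wrong type-$i$ indices; this prescription treats the exchangeable wrong codewords symmetrically, so its probability is the same for every $\hatj\neq j^{\star}$, whence $p(i;j^{\star},\hatj)=p(i;j^{\star})$. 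Finally, for any permutation $\pi$ of $[1:M_i]$ the joint law of the type-$i$ codebook is invariant under the relabeling $(i,\tilj)\mapsto(i,\pi(\tilj))$; under this relabeling $Y^n$ is unchanged, the transmitted index becomes $\pi(j^{\star})$, and a decoder output $(i,\hatj)$ becomes $(i,\pi(\hatj))$. Combining this with the $\hatj$-independence just proved gives $p(i;j^{\star})=p(i;\pi(j^{\star}))$ for all $\pi$, so $p$ does not depend on $j^{\star}$ either. Thus $\Pr\{(\hatI,\hatJ)=(i,\hatj)|s^k,\hat{\bs}_i\}$ depends only on $i$ (through $M_i$ and the pre-fixed tuple $\{M_{\tili}\}_{\tili\in[1:\NT]}$, together with $n$, $P$ and $f_Z$), and we set $q_i$ equal to this common value.

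The one delicate point, which I expect to be the main technical step, is that the $\argmin$ in~\eqref{decoder} is almost surely uniquely attained, so that no positive-probability tie-breaking convention can spoil the symmetry arguments above. I would verify this by noting that each channel codeword has an atomless law (uniform on a sphere, or i.i.d.\ Gaussian) and that $Y^n$, being a channel codeword plus independent noise, is likewise atomless; hence for any two distinct indices the probability that their decoding metrics coincide is zero, since conditionally on $Y^n$ and on one of the two codewords the other has an atomless conditional law yet would be forced to lie on a fixed sphere centred at $Y^n$, a set of measure zero under its conditional law. With ties absent almost surely, the exchangeability and relabeling steps go through verbatim, completing the argument.
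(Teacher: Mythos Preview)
Your proof is correct and rests on the same underlying idea as the paper's: the channel codewords are i.i.d.\ from $f_{X^n}^{\ddagger}$, so the conditional probability in question can only depend on which role each codeword plays (transmitted, candidate, or competitor), not on its label. The paper, however, carries this out by direct computation rather than by invoking exchangeability abstractly: it conditions on $(X^n(i,\hat{j}),X^n(i,j^{\star}),Z^n)$, factorizes over the remaining independent codewords, and arrives at an explicit formula for $q_i$ (the display ending at~\eqref{cstep4}) in terms of generic i.i.d.\ copies $\hat{X}^n,X^n,\bar{X}^n\sim f_{X^n}^{\ddagger}$, which visibly depends only on $i$ through $\{M_{\tilde{i}}\}$. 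Your two-step symmetry argument (exchangeability of the wrong type-$i$ codewords to remove the $\hat{j}$-dependence, then relabeling invariance of the whole type-$i$ subcodebook to remove the $j^{\star}$-dependence) reaches the same conclusion without writing down that formula; since $q_i$ is only used as an unknown common value in the subsequent bounds of Lemma~\ref{lowerresidual}, nothing is lost. Your explicit treatment of ties in the $\argmin$ is a point the paper glosses over---it simply uses non-strict inequalities in its comparison events $\calA_{\hat{i},\hat{j},\tilde{i},\tilde{j}}$, implicitly relying on ties being null events, which your atomlessness argument actually verifies.
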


Using Lemma \ref{vitallowerresidual}, we have that for any $i\in[1:\NT]$,
\begin{align}
\nn&\Pr\{S^k\in\calT_i,~\hatI=i,~\hatJ\neq J)\}\\
&=\int_{s^k\in\calT_i}\bigg(\int_{\hat{\bs}_i} \sum_{\substack{\hatj\in[1:M_i]\setminus\{j(s^k,\hat{\bs}_i)\}}}\Pr\{(\hatI,\hatJ)=(i,\hatj)|s^k,\hat{\bs}_i\}f_{\hat{\bS}_i}(\hat{\bs}_i)\rmd \hat{\bs}_i\bigg)f_{S^k}(s^k)\rmd s^k\\
&=\int_{s^k\in\calT_i}\bigg(\int_{\hat{\bs}_i} \sum_{\substack{\hatj\in[1:M_i]\setminus\{j(s^k,\hat{\bs}_i)\}}}q_if_{\hat{\bS}_i}(\hat{\bs}_i)\rmd \hat{\bs}_i\bigg)f_{S^k}(s^k)\rmd s^k\\
&=\int_{s^k\in\calT_i}(M_i-1)q_if_{S^k}(s^k)\rmd s^k\label{cstepmvp}.
\end{align}
Furthermore, we have
\begin{align}
\nn&\Pr\{S^k\in\calT_i,\hatI=i,\hatJ\neq J,~d(S^k,\hatS^k(\hatI,\hatJ))\leq D\}\\
&=\int_{s^k\in\calT_i}\Bigg(\int_{\hat{\bs}_i}\sum_{\hatj=1}^{M_i}\Pr\big\{(\hatI,\hatJ)=(i,\hatj)|s^k,\hat{\bs}_i\big\}1\big\{d(s^k,\hats^k(i,\hatj))\leq D,~\hatj\neq j(s^k,\hat{\bs}_i)\big\}f_{\hat{\bS}_i}^{\dagger}(\hat{\bs}_i)\rmd \hat{\bs}_i \Bigg)f_S^k(s^k)\rmd s^k\\
&=\int_{s^k\in\calT_i}\Bigg(\int_{\hat{\bs}_i}\sum_{\substack{\hatj\in[1:M_i]\setminus\{j(s^k,\hat{\bs}_i)\}}}\Pr\big\{(\hatI,\hatJ)=(i,\hatj)|s^k,\hat{\bs}_i\big\}1\{d(s^k,\hats^k(i,\hatj))\leq D\}f_{\hat{\bS}_i}^{\dagger}(\hat{\bs}_i)\rmd\hat{\bs}_i\Bigg)f_S^k(s^k)\rmd s^k\\
&=\int_{s^k\in\calT_i}\Bigg(\int_{\hat{\bs}_i}\sum_{\substack{\hatj\in[1:M_i]\setminus\{j(s^k,\hat{\bs}_i)\}}}q_i1\{d(s^k,\hats^k(i,\hatj))\leq D\}f_{\hat{\bS}_i}^{\dagger}(\hat{\bs}_i)\rmd\hat{\bs}_i\Bigg)f_S^k(s^k)\rmd s^k\label{usecstep4.0}\\
&\leq \int_{s^k\in\calT_i}\Big(\int_{\hat{\bs}_i}\sum_{\hatj=1}^{M_i}q_i1\{d(s^k,\hats^k(i,\hatj))\leq D\}f_{\hat{\bS}_i}^{\dagger}(\hat{\bs}_i)\rmd \hat{\bs}_i\Big)f_S^k(s^k)\rmd s^k\label{usecstep4}\\
&=\int_{s^k\in\calT_i}M_iq_i\Big(\Pr_{f_{\hatS^k}^{\dagger}}\{d(s^k,\hatS^k(i,\hatj))\leq D\}\Big)f_S^k(s^k)\rmd s^k\\
&\leq\int_{s^k\in\calT_i}M_iq_i\Psi_{\dagger}(k,\Upsilon(0))f_S^k(s^k)\rmd s^k\label{usecstep2}\\
&=\frac{M_i}{M_i-1}\Psi_{\dagger}(k,\Upsilon(0))\Pr\{S^k\in\calT_i,~\hatI=i,~\hatJ\neq J)\}\label{usemvp}\\
&\leq\Pr\{S^k\in\calT_i,~\hatI=i,~\hatJ\neq J\}\exp\bigg(-k\Big(\frac{1}{2}\log\frac{\sigma^2}{D}+O\Big(\sqrt{\frac{\log k}{k}}\Big)\Big)\bigg)\label{cstep6}.
\end{align}
where \eqref{usecstep4.0} follows from Lemma \ref{vitallowerresidual}, \eqref{usecstep2} follows from similar steps leading to \eqref{cstep2}, \eqref{usemvp} follows from \eqref{cstepmvp}, and \eqref{cstep6} follows from i) the fact that $\frac{M_i}{M_i-1}\to 1$ as $k\to\infty$ and ii) a Taylor expansion similar to \eqref{cstep3}.

This completes the proof of Lemma \ref{lowerresidual}.
\end{proof}

\section{Proof of Moderate Deviations Asymptotics (Theorem \ref{jscc:mdc})}
\label{proof:jscc:mdc}
\subsection{Preliminaries}

The following lemma generalizes the moderate deviations theorem (cf. \cite[Theorem 3.7.1]{dembo2009large}) for i.i.d.\ random vectors to smooth functions of independent but not necessarily identically distributed random vectors.

\begin{lemma}
\label{mdc4funcofirv}
Let $\{\bU_i\}_{i=1}^\infty$  be a sequence of independent but not necessarily identically distributed random vectors in $\bbR^d$. Furthermore, let $f:\bbR^d\to\bbR$ be a function with uniformly bounded second derivatives and let $\bJ=(J_1,\ldots,J_d)$ be a row vector of first-order partial derivatives of $f$, i.e.,
\begin{align}
J_r&:=\frac{\partial f(\bu)}{\partial u_r}\bigg|_{\bu=\mathbf{0}},~\forall~r\in[1:d].
\end{align}
Let the components of $\bU_i$ be $(U_{i,1},\ldots,U_{i,d})$ for each $i\in\bbN$. Finally, let
\begin{align}
\rmV_n&:=\mathrm{Cov}\Big(\frac{1}{\sqrt{n}}\sum_{i=1}^n\bU_i\Big),\\
\rmV(t)&:=\lim_{n\to\infty}\frac{1}{n}\sum_{i=1}^n\mathrm{Var}[U_{i,t}]\label{def:rmvtlimit},~\forall~t\in[1:d],
\end{align}Assume that 
\begin{enumerate}
\item There exists some ball around the origin such that  $\bm{\lambda}\mapsto\log \mathbb{E}[\exp(\langle \bm{\lambda},\bU_i\rangle)]$ is finite for all $i\in\bbN$; 
\item There exists some ball around the origin such that    $\lambda\mapsto\log \mathbb{E}[\exp(\lambda U_{i,t})]$ is finite for all $i\in\bbN$ and $t\in [1:d]$;
\item The limit in \eqref{def:rmvtlimit} exists and is positive for all $t\in[1:d]$;
\item The limit $\rmV:=\lim_{n\to\infty}\bJ \rmV_n\bJ^\rmT$ exists and is positive;
\end{enumerate}
we have that for any moderate deviations sequence $\eta_n$ (see \eqref{mdc:constaint}) and any positive number $\alpha$,
\begin{align}
\lim_{n\to\infty}-\frac{1}{n\eta_n^2}\log\Pr\Big\{f\Big(\frac{1}{n}\sum_{i=1}^n\bU_i\Big)\geq f(\mathbf{0})+\alpha\eta_n\Big\}=\frac{\alpha^2}{2\rmV}.
\end{align}
\end{lemma}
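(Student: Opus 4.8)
The plan is to linearise $f$ about the origin and thereby reduce the statement to a moderate deviations principle (MDP) for a sum of independent scalars. Write $\bar\bU_n:=\frac1n\sum_{i=1}^n\bU_i$ and $W_i:=\bJ\bU_i$. Since the event in the statement forces $\bar\bU_n$ to lie near $\mathbf 0$, I may assume $\bbE[\bU_i]=\mathbf 0$ for every $i$ (this is the situation in the application; in general one centres and absorbs the shift). Then the $W_i$ are independent, zero-mean scalars with $\frac1n\sum_{i=1}^n\var[W_i]=\bJ\rmV_n\bJ^\rmT\to\rmV>0$. Taylor's theorem together with the uniform bound $C$ on the second derivatives of $f$ gives $|f(\bar\bU_n)-f(\mathbf 0)-\frac1n\sum_{i=1}^n W_i|\le\frac C2\|\bar\bU_n\|^2$. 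Hence it suffices to show that $\frac1n\sum_i W_i$ obeys an MDP with speed $n\eta_n^2$ and rate $w\mapsto w^2/(2\rmV)$ while the quadratic remainder is negligible on that scale; sandwiching then yields the claim.

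First I would fix a cut-off radius $r_n$ obeying $\eta_n=o(r_n)$ and $r_n^2=o(\eta_n)$ — for example $r_n:=\eta_n^{3/4}$, which moreover satisfies $nr_n^2/\log n\to\infty$ — and consider the event $\calG_n:=\{\|\bar\bU_n\|\le r_n\}$. On $\calG_n$ the remainder is at most $\frac C2 r_n^2=o(\eta_n)$, so $f(\bar\bU_n)-f(\mathbf 0)$ is trapped between $\frac1n\sum_i W_i\mp\frac C2 r_n^2$. To control $\calG_n^{\mathrm c}$ I would, for each coordinate $t$, run a Chernoff bound with a \emph{vanishing} exponential tilt $\theta r_n$ (with $\theta>0$ small and fixed): using $\bbE[U_{i,t}]=0$, the finiteness of $\Lambda_{U_{i,t}}$ near the origin from condition~(ii), and a second-order expansion of the cumulant generating functions, this gives $\Pr\{|\frac1n\sum_i U_{i,t}|>r_n/\sqrt d\}\le\exp(-c\,nr_n^2)$ for some $c>0$ once $\theta$ is chosen smaller than $2/(\sqrt d\,\rmV(t))$. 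A union bound then yields $\Pr\{\calG_n^{\mathrm c}\}\le\exp(-c\,nr_n^2)=\exp(-\omega(n\eta_n^2))$, so this event does not affect the exponent at speed $n\eta_n^2$.

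Next I would establish the MDP for the linear term using the G\"artner--Ellis theorem~\cite{dembo2009large} applied to $Y_n:=\frac1{n\eta_n}\sum_{i=1}^n W_i$ at speed $a_n:=n\eta_n^2$. The rescaled logarithmic moment generating function is $\frac1{a_n}\log\bbE[\exp(a_n\lambda Y_n)]=\frac1{n\eta_n^2}\sum_{i=1}^n\Lambda_{W_i}(\eta_n\lambda)$, and expanding each $\Lambda_{W_i}$ to second order about the origin — condition~(i) supplies a common neighbourhood of the origin and hence uniform bounds on the low-order moments of the $W_i$, which control the cubic error — this converges to $\frac{\lambda^2}{2}\lim_n\frac1n\sum_i\var[W_i]=\frac{\lambda^2\rmV}{2}$. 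This limit is finite and differentiable in $\lambda$, so G\"artner--Ellis yields an LDP for $Y_n$ at speed $a_n$ with good rate function $w\mapsto w^2/(2\rmV)$; applying its bounds to $[\beta,\infty)$ and $(\beta,\infty)$ and invoking continuity of the rate function gives $-\frac1{n\eta_n^2}\log\Pr\{\frac1n\sum_i W_i\ge\beta_n\eta_n\}\to\beta^2/(2\rmV)$ for any $\beta_n\to\beta>0$. Combining the two steps — for the upper bound on the probability, intersect with $\calG_n$ and add $\Pr\{\calG_n^{\mathrm c}\}$, taking $\beta_n=\alpha-Cr_n^2/(2\eta_n)\to\alpha$; for the lower bound, restrict to $\calG_n$ and subtract $\Pr\{\calG_n^{\mathrm c}\}$, taking $\beta_n=\alpha+Cr_n^2/(2\eta_n)\to\alpha$ — sandwiches $-\frac1{n\eta_n^2}\log\Pr\{f(\bar\bU_n)\ge f(\mathbf 0)+\alpha\eta_n\}$ between quantities both tending to $\alpha^2/(2\rmV)$, which is the assertion.

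The main obstacle lies in the interplay of the two scales $\eta_n$ and $r_n$. One must choose $r_n$ so that simultaneously the Taylor remainder $\frac C2 r_n^2$ is $o(\eta_n)$ \emph{and} $\Pr\{\|\bar\bU_n\|>r_n\}$ is $\exp(-\omega(n\eta_n^2))$; this forces the window $\eta_n\ll r_n\ll\sqrt{\eta_n}$, and the Chernoff estimate for $\calG_n^{\mathrm c}$ must accordingly be carried out at the \emph{moderate} scale with a tilt that vanishes like $r_n$ — a fixed tilt would give a useless bound whose $O(n)$ additive slack swamps the $\Theta(nr_n)$ exponent. The second delicate point is making the $O(\eta_n)$ error in the log-MGF computation genuinely uniform in $i$, i.e.\ extracting from the ``common-ball finiteness'' in conditions~(i)--(ii) a bound on $\sup_i\bbE[|W_i|^3]$ and on the corresponding coordinatewise third moments. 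Everything else is routine Taylor-expansion bookkeeping.
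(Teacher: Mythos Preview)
Your proposal is correct and follows essentially the same route as the paper: linearise $f$ via a second-order Taylor expansion, restrict to the typical set $\{\|\bar\bU_n\|\le\eta_n^{3/4}\}$ on which the remainder is $o(\eta_n)$, bound the atypical set at the faster speed $n\eta_n^{3/2}$ coordinate-by-coordinate using conditions~(ii)--(iii), and invoke a scalar MDP (via G\"artner--Ellis) for $\frac1n\sum_i\langle\bJ,\bU_i\rangle$ using conditions~(i) and~(iv). The only cosmetic differences are that the paper packages the scalar MDP as a separate lemma and uses the $\ell_\infty$ norm for the typical set, while you work directly with G\"artner--Ellis and use the $\ell_2$ norm with a Chernoff bound at vanishing tilt; neither changes the structure or the exponents.
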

The proof of Lemma \ref{mdc4funcofirv} is provided in Appendix \ref{proofmdc4func}.

\subsection{Achievability Proof}
\label{sec:proofmdcach}
For any $(\dagger,\ddagger)\in\rm\{sp,iid\}^2$, let $k$ be chosen such that
\begin{align}
k=n\rho^*(P,\sigma^2,D)-n\eta_n\label{kachmdc},
\end{align}
where $\rho^*(P,\sigma^2,D)$ is defined in \eqref{def:mu}. In particular, $k$ is linear in $n$.

Similarly to the achievability proof of the second-order asymptotics from \eqref{useseveral} to \eqref{ss:step2}, for any $(\dagger,\ddagger)\in\rm\{sp,iid\}^2$, we obtain 
\begin{align}
\rmP_{\rme,k,n}
&\leq \Pr\Big\{S^k\notin\bigcup_{i=1}^{\NT}\calT_i\Big\}
+\sum_{i=1}^{\NT} \int_{s^k\in\calT_i}\exp(-M_i\Psi_{\dagger}(k,\|s^k\|^2/k))f_{S^k}(s^k) \rmd s^k\nn\\*
&\qquad+\sum_{i=1}^{\NT}\int_{s^k\in\calT_i} \overline{h}_{\ddagger}(n,i) f_{S^k}(s^k) \rmd s^k\\
&\leq \Pr\Big\{S^k\notin\bigcup_{i=1}^{\NT}\calT_i\Big\}
+\exp(-k)+\sum_{i=1}^{\NT}\int_{s^k\in\calT_i} \overline{h}_{\ddagger}(n,i) f_{S^k}(s^k) \rmd s^k\label{usemimdc},
\end{align}
where \eqref{usemimdc} follows by using the choice of $M_i$ in \eqref{choosemi} and arguments used to prove \eqref{ss:step3}.

Using the definition of $\xi$ in \eqref{def:ximdc} and the moderate deviations theorem~\cite[Theorem 3.7.1]{dembo2009large}, under condition \eqref{assump1} in Theorem~\ref{jscc:mdc}, we obtain that
\begin{align}
\lim_{n\to\infty}-\frac{1}{n\eta_n^{3/2}}\log \Pr\Big\{S^k\notin\bigcup_{i=1}^{\NT}\calT_i\Big\} 
&=\frac{\rho^*(P,\sigma^2,D)}{2(\zeta_\rms-\sigma^4)}\label{mdcproofs1}.
\end{align}

The following lemma is essential in the achievability proof.
\begin{lemma}
\label{mdc:achessential}
For any $(\dagger,\ddagger)\in\rm\{sp,iid\}^2$, with the choice of $k$ in \eqref{kachmdc} and under the conditions in Theorem \ref{jscc:mdc}, we have
\begin{align}
\lim_{n\to\infty}-\frac{1}{n\eta_n^2}\log \bigg(\sum_{i=1}^{\NT}\int_{s^k\in\calT_i} \overline{h}_{\ddagger}(n,i) f_{S^k}(s^k) \rmd s^k\bigg)
&=\frac{1}{2\rmV_{\ddagger}(\zeta_\rms,\sigma^2,\zeta_\rmc,P)}.
\end{align}
\end{lemma}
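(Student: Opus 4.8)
The plan is to transplant the proof of Lemma~\ref{ach:essential} to the moderate deviations regime, substituting the moderate deviations theorem for smooth functions of independent random vectors (Lemma~\ref{mdc4funcofirv}) for the Berry--Esseen-type argument of Appendix~\ref{proofssstep4}. The first step is to re-derive the bound on $\overline{h}_\ddagger(n,i)$: instead of the threshold $1/\sqrt n$ used to obtain~\eqref{usechannel2}, I would split the RCU-type bound at a level $\gamma_n:=\exp(-n\eta_n^{3/2})$, so that $\overline{h}_\ddagger(n,i)\le\Pr\{\tfrac{(P+1)\|Z^n\|^2-\|X^n+Z^n\|^2}{2(P+1)}\ge n\rmC(P)-\log M_i-\log(\NT K_0/\gamma_n)\}+\gamma_n$. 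Since $\log(1/\gamma_n)=n\eta_n^{3/2}=o(n\eta_n)$, this threshold shift is negligible at the scale of the ``gap'' (below), while the residual term $\gamma_n$ has rate $-\tfrac1{n\eta_n^2}\log\gamma_n=\eta_n^{-1/2}\to\infty$ by~\eqref{mdc:constaint} and so does not affect the final limiting rate. Summing over $i\in[1:\NT]$, integrating over $s^k\in\calT_i$, and using the choice~\eqref{choosemi} of $\{M_i\}$, the bounds on $\Psi_\dagger$ in Lemma~\ref{property}, the Taylor expansion~\eqref{tltobeused} of $R_\dagger$ at $\sigma^2$ (remainder $O(k\xi^2)=O(n\eta_n^{3/2})=o(n\eta_n)$ since $\xi=\eta_n^{3/4}$ by~\eqref{def:ximdc}), $\log\NT=O(\log n)=o(n\eta_n^2)$, and the choice~\eqref{kachmdc} of $k$ (so $k=\Theta(n)$ and $n\rmC(P)-k\rmR(\sigma^2,D)=n\eta_n\rmR(\sigma^2,D)+O(1)$), I would follow the chain~\eqref{spsp:alt}--\eqref{ss:step3-1} when $\ddagger=\mathrm{sp}$ and~\eqref{essential:spiid}--\eqref{iidsp4use} when $\ddagger=\mathrm{iid}$, to reduce $\sum_{i=1}^{\NT}\int_{s^k\in\calT_i}\overline{h}_\ddagger(n,i)f_{S^k}(s^k)\,\rmd s^k$ to $\Pr\{W_n\ge 2\sigma^2(P+1)(n\eta_n\rmR(\sigma^2,D)+o(n\eta_n))\}+\gamma_n$, where $W_n:=\sigma^2\bigl(P\|Z^n\|^2-nP-2\langle X^n,Z^n\rangle\bigr)+(P+1)\bigl(\|S^k\|^2-k\sigma^2\bigr)$ for $\ddagger=\mathrm{sp}$ (and the analogue with $X_i^2$ in place of $P$ for $\ddagger=\mathrm{iid}$). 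As in Theorem~\ref{jscc:second}, the choice of $\dagger$ is irrelevant because~\eqref{tltobeused} holds for both $R_{\mathrm{sp}}$ and $R_{\mathrm{iid}}$ to the needed order.

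The second step is to cast $W_n/(2\sigma^2(P+1))$ as $n$ times a smooth function $f$ of empirical averages of centred independent random vectors, so that Lemma~\ref{mdc4funcofirv} applies with deviation level $\alpha\eta_n$, $\alpha=\rmR(\sigma^2,D)+o(1)$. For the i.i.d.\ channel codebook this is already the case, since $W_n=\sum_{i=1}^n\sigma^2(PZ_i^2-X_i^2-2X_iZ_i)+\sum_{i=1}^k(P+1)(S_i^2-\sigma^2)$ is a sum of independent terms; I would pad the source block to length $n$ (setting the source coordinate of $\bU_i$ to $0$ for $i>k$), so that the limiting per-coordinate variances of Lemma~\ref{mdc4funcofirv} exist, with the source coordinate having limit $\rho^*(P,\sigma^2,D)(\zeta_\rms-\sigma^4)>0$ by condition~\eqref{assump1}. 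For the spherical channel codebook, I would write $X^n=\sqrt{nP}\,G^n/\|G^n\|$ with $G^n\sim\calN(\bzero,I_n)$, so that $W_n/(2\sigma^2(P+1))$ becomes a smooth function of $\tfrac1n\sum G_i^2$, $\tfrac1n\sum G_iZ_i$, $\tfrac1n\sum Z_i^2$ and $\tfrac1n\sum_{i\le k}S_i^2$ near their means, vanishing there; this is exactly the structure Lemma~\ref{mdc4funcofirv} is designed for, and parallels the Berry--Esseen-for-functions step in Appendix~\ref{proofssstep4}. Before invoking Lemma~\ref{mdc4funcofirv} I would check its hypotheses: conditions~\eqref{assump1}--\eqref{assump2} of Theorem~\ref{jscc:mdc} furnish finiteness of the marginal and joint cumulant generating functions of the relevant coordinates near the origin (the cross term handled by $\Lambda_{\tilde X Z}$ with $\tilde X\sim\calN(0,1)$, and $\Lambda_{X^2}$, $\Lambda_{G^2}$ being finite as $X,G$ are Gaussian), and conditions~\eqref{assump1}, $\zeta_\rmc>1$, $P>0$ give positivity of the limiting variances and of $\rmV:=\lim_n\bJ\rmV_n\bJ^\rmT$.

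The third step is the variance computation that pins down the constant. Linearising $f$ at the mean, the source block contributes $(P+1)^2\rho^*(P,\sigma^2,D)(\zeta_\rms-\sigma^4)$ to $\lim_n\tfrac1n\var(W_n)$, while the channel block contributes $\sigma^4\bigl(P^2(\zeta_\rmc-1)+4P\bigr)$ for the spherical codebook (via the Gaussian-ratio linearisation $P(Z_i^2-1)-2\sqrt P\,G_iZ_i$) and $\sigma^4\bigl(P^2(\zeta_\rmc+1)+4P\bigr)$ for the i.i.d.\ codebook (via $PZ_i^2-X_i^2-2X_iZ_i$), these being independent of the source block; dividing by $(2\sigma^2(P+1))^2$ yields $\rmV=\rho^*(P,\sigma^2,D)\rmV_\rms(\zeta_\rms,\sigma^2)+\rmV_\rmc^\ddagger(\zeta_\rmc,P)$, cf.~\eqref{def:sourcedispersion}--\eqref{def:channeldispersioniid}. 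Lemma~\ref{mdc4funcofirv} then gives $-\tfrac1{n\eta_n^2}\log\Pr\{W_n\ge\cdots\}\to\alpha^2/(2\rmV)=\rmR(\sigma^2,D)^2/\bigl(2(\rho^*\rmV_\rms+\rmV_\rmc^\ddagger)\bigr)=1/\bigl(2\rmV_\ddagger(\zeta_\rms,\sigma^2,\zeta_\rmc,P)\bigr)$ by~\eqref{def:rmvdagger}; since $-\tfrac1{n\eta_n^2}\log(\mathrm{prob}+\gamma_n)$ is controlled by the smaller of the two rates and the $\gamma_n$ rate diverges, this establishes the inequality ``$\ge$''. The matching ``$\le$'' follows by noting $\overline{h}_\ddagger(n,i)\ge\underline{h}_\ddagger(n,i)$ (both sandwich the channel error probability in Lemma~\ref{channelerror}) and running the same reduction through~\eqref{c:step1} --- whose extra error terms are only $O(\log n)$ beyond the same $O(n\xi^2)$ Taylor remainder --- followed by a second application of Lemma~\ref{mdc4funcofirv}, which again yields rate $1/(2\rmV_\ddagger)$; here one also uses that $\Pr\{S^k\notin\bigcup_i\calT_i\}=\exp(-\omega(n\eta_n^2))$ by~\eqref{mdcproofs1} is negligible compared with the probability of interest.

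I expect the main obstacle to be the uniform bookkeeping of all lower-order error terms --- the $O(n\xi^2)$ Taylor remainder, $\log\NT$, $\log K_0$, $\log(1/\gamma_n)$, and the fluctuation of $\|G^n\|/\sqrt n$ around $1$ in the spherical case --- to confirm that each is $o(n\eta_n)$ (so the deviation level is $\eta_n(\rmR(\sigma^2,D)+o(1))$ and the moderate deviations rate $\alpha^2/(2\rmV)$ emerges unperturbed after dividing by $n\eta_n^2$); this is precisely where the strengthened definition~\eqref{mdc:constaint} of a moderate deviations sequence (the extra $\sqrt{\log n}$) is used. A secondary technical point is the verification of the hypotheses of Lemma~\ref{mdc4funcofirv} for the non-identically-distributed, padded sequence $\{\bU_i\}$ with two coupled block lengths $k/n\to\rho^*(P,\sigma^2,D)$, in particular the finiteness of the joint cumulant generating function of $(Z^2,G^2,GZ)$, which after integrating out the Gaussian coordinate reduces to a neighbourhood-of-origin finiteness condition on $\Lambda_{Z^2}$ and $\Lambda_{\tilde X Z}$ already assumed in~\eqref{assump2}.
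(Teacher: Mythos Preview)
Your proposal is correct and follows essentially the same route as the paper: replace the threshold $1/\sqrt{n}$ in the RCU split by $\exp(-n\eta_n^{3/2})$, rerun the reduction \eqref{spsp:alt}--\eqref{ss:step3-1} (respectively \eqref{essential:spiid}--\eqref{iidsp4use}) with the quantization choice $\xi=\eta_n^{3/4}$ so all lower-order terms are $o(n\eta_n)$, express the resulting probability via the smooth function $\gamma_{\rm sp}$ (respectively $\gamma_{\rm iid}$) of empirical averages as in Appendix~\ref{proofssstep4}, and invoke Lemma~\ref{mdc4funcofirv} in place of the Berry--Esseen step.

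Two minor remarks. First, your padding ``to length $n$'' only works when $k\le n$, i.e.\ $\rho^*(P,\sigma^2,D)\le 1$; the paper instead pads to length $n+k$ (so the channel coordinates vanish for $i>n$ and the source coordinate vanishes for $i\le n$), which handles both regimes and makes the limiting per-coordinate variances in Lemma~\ref{mdc4funcofirv} come out with the factor $1/(1+\rho^*)$ as in \eqref{factn1} and \eqref{eqn:simplify_eta}. Second, you go slightly beyond the paper by also arguing the matching ``$\le$'' direction via $\overline{h}_\ddagger\ge\underline{h}_\ddagger$ and the lower bound \eqref{c:step1}; the paper's appendix only writes out the ``$\ge$'' direction (which is all that is used in the achievability proof), so your addition in fact completes the $\lim$ as stated in the lemma.
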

The proof of Lemma \ref{mdc:achessential} is deferred to Appendix \ref{proof:mdcachess}.

Combining the results in \eqref{usemimdc}, \eqref{mdcproofs1} and Lemma \ref{mdc:achessential} and noting that the third term in \eqref{usemimdc} dominates, we conclude that for any for any $(\dagger,\ddagger)\in\rm\{sp,iid\}^2$, there exists a sequence of $(k,n)$-codes satisfying \eqref{kachmdc} and
\begin{align}
\liminf_{n\to\infty}-\frac{1}{n\eta_n^2}\log \rmP_{\rme,k,n}&\geq\frac{1}{2\rmV_{\ddagger}(\zeta_\rms,\sigma^2,\zeta_\rmc,P)}.
\end{align}

\subsection{Ensemble Converse Proof}
The general procedure of the ensemble converse proof of moderate deviations is to show that for any $(\dagger,\ddagger)\in\rm\{sp,iid\}^2$ and any sequence of $(k,n)$-codes such that 
\begin{align}
k=\lceil n\rho^*(P,\sigma^2,D)-n\eta_n\rceil \label{kconmdc},
\end{align}
we have
\begin{align}
\liminf_{n\to\infty}-\frac{1}{n\eta_n^2}\log \rmP_{\rme,k,n}
&\leq \limsup_{n\to\infty}-\frac{1}{n\eta_n^2}\log \rmP_{\rme,k,n}\leq\frac{1}{2\rmV_{\ddagger}(\zeta_\rms,\sigma^2,\zeta_\rmc,P)}\label{mdcviolate}. 
\end{align}
Indeed, the analysis of the term in \eqref{c:step1} uses Lemma \ref{mdc4funcofirv} instead of the Berry-Essen theorem for functions of random vectors. This can be done similarly to the proof of  Lemma \ref{mdc:achessential} in Section \ref{sec:proofmdcach}.

\appendix
\subsection{Proof of Lemma \ref{channelerror}}
\label{proofchannelerror}
Recall the definition of the  mismatched  information density in \eqref{def:mismatchid} (cf. \cite[Eqn. (28)-(29)]{scarlett2017mismatch}). We first prove \eqref{upp:pijwrong}. 
Given the channel output $Y^n$, the output of the channel decoder in~\eqref{decoder}  is a pair $(\hatI,\hatJ)$ that maximizes a generalized mismatched information density, i.e.,
\begin{align}
(\hatI,\hatJ)&=\argmax_{(\tili,\tilj) \in\calD} \imath(X^n(\tili,\tilj);Y^n)-\log M_{\tili}.
\end{align}

Recall the definition of $j(s^k,\hat{\bs}_i)$ in \eqref{def:jskhsi}. The following steps mimic the proof of the RCU bound in~\cite{polyanskiy2010finite}. For any $i\in[1:\NT]$, conditioned on the events $\{\hat{\bS}_i=\hat{\bs}_i\}$ and $\{S^k=s^k\}$  for some $s^k\in\calT_i$, we can upper bound the probability of decoding $(i,j(s^k,\hat{\bs}_i))$ incorrectly  as follows:
\begin{align}
\nn&\Pr\big\{(\hatI,\hatJ)\neq (i,j(s^k,\hat{\bs}_i))|S^k=s^k,~\hat{\bS}_i=\hat{\bs}_i,\big\}\\
&=\Pr \left\{\exists~(\tili,\tilj)\in\calD\setminus(i,j(s^k,\hat{\bs}_i)): \imath(X^n(\tili,\tilj);Y^n)-\log M_{\tili}\leq \imath(X^n(i,j(s^k,\hat{\bs}_i));Y^n)-\log M_i \right\}\label{useskhsi}\\
&\leq\sum_{(\tili,\tilj)\in\calD\setminus \{ (i,j(s^k,\hat{\bs}_i)) \} }\mathbb{E}\bigg[\Pr\bigg\{\imath(X^n(\tili,\tilj);Y^n)\leq \imath(X^n(i,j(s^k,\hat{\bs}_i));Y^n)-\log \frac{M_i}{M_{\tili}}\bigg|X^n(i,j(s^k,\hat{\bs}_i)),Y^n\bigg\}\bigg]\\
\nn&=\Bigg(\sum_{\tili\in[1:|\NT]\setminus\{i\}}M_{\tili}\mathbb{E}\bigg[\Pr\bigg\{\imath(\barX^n;Y^n)\leq \imath(X^n;Y^n)-\log \frac{M_i}{M_{\tili}}\bigg|X^n,Y^n\bigg\}\bigg]\Bigg)\\*
&\qquad+(M_i-1)\mathbb{E}\bigg[\Pr\bigg\{\imath(\barX^n;Y^n)\leq \imath(X^n;Y^n)-\log \frac{M_i}{M_{\tili}}\bigg|X^n,Y^n\bigg\}\bigg]\label{cerror1}\\
&\leq \sum_{\tili=1}^{\NT}M_{\tili}\mathbb{E}\bigg[\Pr\bigg\{\imath(\barX^n;Y^n)\leq \imath(X^n;Y^n)-\log \frac{M_i}{M_{\tili}}\bigg|X^n,Y^n\bigg\}\bigg],\label{cerror11}
\end{align}
where \eqref{useskhsi} follows since when $\hatS^k=s^k$ and $\hat{\bS}_i=\hat{\bs}_i$ for $s^k\in\calT_i$, the probability that $(\hatI,\hatJ)\neq (i,j(s^k,\hat{\bs}_i))$ equals the probability that the channel decoder fails to decode the message pair $(i,j(s^k,\hat{\bs}_i))$ correctly, and \eqref{cerror1} follows since each channel codeword is generated independently according to the same distribution (either \eqref{channelspcodebook} or \eqref{channeliidcodebook}) and we use $\barX^n$ and $X^n$ to denote the generic random variables $X^n(\tili,\tilj)$ and $X^n(i,j(s^k,\hat{\bs}_i))$ respectively. Hence, $(\barX^n,X^n,Y^n)$ is distributed according to~\eqref{def:joint}.

We also always have $\Pr\big\{(\hatI,\hatJ)\neq (i,j)|S^k=s^k,~\hat{\bS}_i=\hat{\bs}_i\big\}\leq 1$. 
Hence, using \eqref{cerror11}, for any $i\in[1:\NT]$, given $s^k\in\calT_i$ and the subcodebook $\hat{\bs}_i$, we have 
\begin{align}
\nn&\Pr\big\{(\hatI,\hatJ)\neq (i,j(s^k,\hat{\bs}_i))|S^k=s^k,~\hat{\bS}_i=\hat{\bs}_i\big\}\\
&\leq \min\Bigg\{1,\mathbb{E}\Bigg[\sum_{\tili=1}^{\NT}M_{\tili}\Pr\bigg\{\imath(\barX^n;Y^n)\leq \imath(X^n;Y^n)-\log \frac{M_i}{M_{\tili}}\bigg|X^n,Y^n\bigg\}\Bigg]\Bigg\}\label{barx6ninde}\\
&=\overline{h}_{\ddagger}(n,i).\label{useohni}
\end{align}

Next we   prove \eqref{low:pijwrong}. 
For any $i\in[1:\NT]$, given $s^k\in\calT_i$ and the subcodebook $\hat{\bs}_i$, following similar steps to prove \eqref{cerror11}, we obtain
\begin{align}
\nn&\Pr\Big\{(\hatI,\hatJ)\neq (i,j(s^k,\hat{\bs}_i))|S^k=s^k,~\hat{\bS}_i=\hat{\bs}_i\Big\}\\
&=1-\Pr\Big\{\|X^n(i,j(s^k,\hat{\bs}_i))-Y^n\|^2+2\log M_i\leq \|X^n(\tili,\tilj)-Y^n\|^2+2\log M_{\tili},~\forall~(\tili,\tilj)\in\calD\setminus\{(i,j(s^k,\hat{\bs}_i))\}\Big\} \label{eqn:conditioning}\\
&\geq 1-\Pr\Big\{\|X^n(i,j(s^k,\hat{\bs}_i))-Y^n\|^2\leq \|X^n(i,\tilj)-Y^n\|^2,~\forall~\tilj\in[1:M_i]\setminus\{j(s^k,\hat{\bs}_i)\}\Big\}\\
&=1-\Big(1-\Pr\big\{\|\barX^n-Y^n\|^2\leq \|X^n-Y^n\|\big\}\Big)^{M_i-1}\label{barx6nagain}\\
&=1-\bigg(1-\mathbb{E}\Big[\Pr\Big\{\|\barX^n-Y^n\|^2\leq \|X^n-Y^n\|\big|X^n,Y^n\Big\}\Big]\bigg)^{M_i-1}\label{useuhni0}\\
&=\underline{h}_{\ddagger}(n,i).\label{useuhni}
\end{align}

\subsection{Proof of \eqref{ss:step4} and \eqref{sameforiid}}
\label{proofssstep4}
Similarly to \cite{scarlett2017mismatch}, we will apply the Berry-Esseen theorem for functions of independent random variables. This constitutes a special case of \cite[Proposition 1]{iri2015third} when the Markov chain is of order zero (see also \cite[Proposition 1]{molavianjazi2015second} for the i.i.d.\ case). 

We   first prove \eqref{ss:step4}. Let $\tilX^n\sim\calN(\mathbf{0},\mathbf{I}_n )$. Then a spherically-distributed channel codeword $X^n$ (see \eqref{channelspcodebook}) can be written as 
\begin{align}
X^n=\sqrt{nP}\frac{\tilX^n}{\|\tilX^n\|}.
\end{align}
For $i\in[1:n]$, let $A_{4,i}:=0$ and
\begin{align}
A_{1,i}:=Z_i^2-1,\quad
A_{2,i} :=\sqrt{P}\tilX_iZ_i,\quad
A_{3,i} :=\tilX_i^2-1,
\end{align}
and for $i\in[n+1:n+k]$, let $A_{1,i}=A_{2,i}=A_{3,i}=0$ and let
\begin{align}
A_{4,i}=S_{i-n}^2-\sigma^2.
\end{align}
Furthermore, let 
\begin{align}
\gamma_{\rm{sp}}(a_1,a_2,a_3,a_4)&:=\sigma^2\bigg(Pa_1-\frac{2a_2}{\sqrt{1+a_3}}\bigg)+(P+1)a_4.
\end{align}
Then, it can be verified that
\begin{align}
\nn&(n+k)\gamma_{\rm{sp}}\Big(\frac{1}{n+k}\sum_{i=1}^{n+k}A_{1,i},\frac{1}{n+k}\sum_{i=1}^{n+k}A_{2,i},\frac{1}{n+k}\sum_{i=1}^{n+k}A_{3,i},\frac{1}{n+k}\sum_{i=1}^{n+k}A_{4,i}\Big)\\*
&=\sigma^2\big(P\|Z^n\|^2-nP-2\langle X^n,Z^n\rangle\big)+(P+1)\big(\|S^k\|^2-k\sigma^2\big)\label{ftouse}.
\end{align}
Note that the vector of partial derivatives of the function $\gamma_{\rm{sp}}(\cdot)$ evaluated at $\mathbf{0}$ is $\mathbf{J}=[\sigma^2P,-2\sigma^2,0,P+1]$. The covariance matrix is
\begin{align}
\mathbf{V}_{n+k}&=\mathrm{Cov}\bigg(\frac{1}{\sqrt{n+k}}\sum_{i=1}^{n+k}A_{1,i},\frac{1}{\sqrt{n+k}}\sum_{i=1}^{n+k}A_{2,i},\frac{1}{\sqrt{n+k}}\sum_{i=1}^{n+k}A_{3,i},\frac{1}{\sqrt{n+k}}\sum_{i=1}^{n+k}A_{4,i}\bigg)\\
&=\mathrm{diag}\bigg(\Big[\frac{n(\zeta_\rmc-1)}{n+k},\frac{nP}{n+k},\frac{2n}{n+k},\frac{k(\zeta_\rms-\sigma^4)}{n+k}\Big]\bigg).
\end{align} 
Thus,
\begin{align}
\mathbf{J}\mathbf{V}_{n+k}\mathbf{J}^{\rmT}
&=\frac{n\sigma^4(4P+P^2(\zeta_\rmc-1))+k(P+1)^2(\zeta_\rms-\sigma^4)}{n+k}.
\end{align}
Recalling the choice of $k$ in \eqref{kachsecond} and using the definition of $\rmV_1$ in \eqref{def:rmV1}. we conclude that
\begin{align}
\lim_{n\to\infty}\mathbf{J}\mathbf{V}_{n+k}\mathbf{J}^{\rmT}
&=\frac{\rmV_1}{1+\rho^*(P,\sigma^2,D)}\label{factn1}.
\end{align}
Therefore, we obtain
\begin{align}
\nn&\Pr\Big\{\sigma^2\big(P\|Z^n\|^2-nP-2\langle X^n,Z^n\rangle\big)+(P+1)\big(\|S^k\|^2-k\sigma^2\big)\geq 2\sigma^2(P+1)\big(n\rmC(P)-k\rmR(\sigma^2,D)+O(\log n)\big)\Big\}\\*
\nn&=\Pr\Big\{\sigma^2\big(P\|Z^n\|^2-nP-2\langle X^n,Z^n\rangle\big)+(P+1)\big(\|S^k\|^2-k\sigma^2\big)\\*
&\qquad\qquad\qquad\geq 2\sigma^2(P+1)\rmR(\sigma^2,D)\big(\sqrt{n\rmV_{\rm{sp}}(\zeta_\rms,\sigma^2,\zeta_\rmc,P)}\rmQ^{-1}(\varepsilon)+O(\log n)\big)\Big\}\label{usekacha2.0}\\
\nn&=\Pr\bigg\{(n+k)\gamma_{\rm{sp}}\Big(\frac{1}{n+k}\sum_{i=1}^{n+k}A_{1,i},\frac{1}{n+k}\sum_{i=1}^{n+k}A_{2,i},\frac{1}{n+k}\sum_{i=1}^{n+k}A_{3,i},\frac{1}{n+k}\sum_{i=1}^{n+k}A_{4,i}\Big)\\
&\qquad\qquad\qquad\geq 2\sigma^2(P+1)\rmR(\sigma^2,D)\big(\sqrt{n\rmV_{\rm{sp}}(\zeta_\rms,\sigma^2,\zeta_\rmc,P)}\rmQ^{-1}(\varepsilon)+O(\log n)\big)\bigg\}\label{usekacha2}\\
&\leq \rmQ\Bigg(\frac{2\sigma^2(P+1)\rmR(\sigma^2,D)\big(\sqrt{n\rmV_{\rm{sp}}(\zeta_\rms,\sigma^2,\zeta_\rmc,P)}\rmQ^{-1}(\varepsilon)+O(\log n)\big)}{\sqrt{\frac{n+k}{1+\rho^*(P,\sigma^2,D)}\rmV_1}}\Bigg)+O\bigg(\frac{1}{\sqrt{n+k}}\bigg)\label{usenfact1}\\
&=\rmQ\Bigg(\frac{\sqrt{n\rmV_{\rm{sp}}(\zeta_\rms,\sigma^2,\zeta_\rmc,P)}\rmQ^{-1}(\varepsilon)+O(\log n)}{\sqrt{n\rmV_{\rm{sp}}(\zeta_\rms,\sigma^2,\zeta_\rmc,P)}\sqrt{\frac{1+\rho^*(P,\sigma^2,D)-\sqrt{{\rmV_{\rm{sp}}(\zeta_\rms,\sigma^2,\zeta_\rmc,P)}/{n}}\rmQ^{-1}(\varepsilon)}{1+\rho^*(P,\sigma^2,D)}}}\Bigg)+O\bigg(\frac{1}{\sqrt{n}}\bigg)\label{usekagain}\\
&=\rmQ\Bigg(\rmQ^{-1}(\varepsilon)+O\bigg(\frac{\log n}{\sqrt{n}}\bigg)\Bigg)+O\bigg(\frac{1}{\sqrt{n}}\bigg)\label{hhafinal},
\end{align}
where \eqref{usekacha2.0} follows by using the choice of $k$ in \eqref{kachsecond} and noting that $k=\Theta(n)$, \eqref{usekacha2} follows by using \eqref{ftouse}, \eqref{usenfact1} follows from invoking the Berry-Esseen theorem for functions of independent random variables~\cite[Proposition 1]{iri2015third} and  \eqref{factn1}, and \eqref{usekagain} follows from the choice of $k$ in \eqref{kachsecond} and the  definitions of $\rmV_1$ in \eqref{def:rmV1} and $\rmV_{\rm{sp}}(\zeta_\rms,\sigma^2,\zeta_\rmc,P)$ in \eqref{def:rmvdagger}. 

To prove \eqref{sameforiid}, the function we use is $\gamma_{\rm{iid}}(b_1,b_2,b_3,b_4):=\sigma^2(Pb_1-b_2-2b_3)+(P+1)b_4$ and the random variables are defined as follows: for $i\in[1:n]$, let $B_{1,i}:=Z_i^2$, $B_{2,i}:=X_i^2$, $B_{3,i}:=X_iZ_i$ and $B_{4,i}=0$ while for $i\in[n+1:n+k]$, let $B_{1,i}=B_{2,i}=B_{3,i}:=0$ and $B_{4,i}:=S_i^2-\sigma^2$. It can thus be verified that
\begin{align}
\nn&(n+k)\gamma_{\rm{iid}}\Big(\frac{1}{n+k}\sum_{i=1}^{n+k}B_{1,i},\frac{1}{n+k}\sum_{i=1}^{n+k}B_{2,i},\frac{1}{n+k}\sum_{i=1}^{n+k}B_{3,i},\frac{1}{n+k}\sum_{i=1}^{n+k}B_{4,i}\Big)\\
&=\sum_{i=1}^n \sigma^2(PZ_i^2-X_i^2-2X_iZ_i)+(P+1)(S_i^2-\sigma^2)\label{factn3}.
\end{align}
Furthermore, the vector of partial derivatives of $\gamma_{\rm{iid}}(\cdot)$ evaluated at $\mathbf{0}$ is $[\sigma^2P,-\sigma^2,-2\sigma^2,P+1]$ and the covariance matrix is
\begin{align}
\bV_{n+k}'
&=\mathrm{Cov}\bigg(\frac{1}{\sqrt{n+k}}\sum_{i=1}^{n+k}B_{1,i},\frac{1}{\sqrt{n+k}}\sum_{i=1}^{n+k}B_{2,i},\frac{1}{\sqrt{n+k}}\sum_{i=1}^{n+k}B_{3,i},\frac{1}{\sqrt{n+k}}\sum_{i=1}^{n+k}B_{4,i}\bigg)\\
&=\mathrm{diag}\bigg(\Big[\frac{n(\zeta_\rmc-1)}{n+k},\frac{2nP^2}{n+k},\frac{Pn}{n+k},\frac{k(\zeta_\rms-\sigma^4)}{n+k}\Big]\bigg).
\end{align}
Therefore, it can be verified that
\begin{align}
\lim_{n\to\infty}\bJ\rmV_{n+k}'\bJ^{\rmT}
&=\frac{\rho^*(P,\sigma^2,D)(P+1)^2(\zeta_\rms-\sigma^4)+\sigma^4\Big(P^2(\zeta_\rmc+1)+4P\Big)}{1+\rho^*(P,\sigma^2,D)}\label{factn2}.
\end{align}
The rest of the proof of \eqref{sameforiid} is omitted since it is similar to steps in \eqref{usekacha2.0} to \eqref{hhafinal} by applying the Berry-Esseen theorem for functions of random vectors and applying the facts in \eqref{factn3} and \eqref{factn2}.

\subsection{Proof of \eqref{c:step2}}
\label{proofcstep2}
For simplicity, we prove \eqref{c:step2} only when both the source and the channel codebooks are spherical codebooks since other cases can be proved similarly.

Using \eqref{c:step1}, we have that for any $i\in[1:\NT]$,
\begin{align}
\nn&\Pr\{S^k\in\calT_i,(\hatI,\hatJ)\neq (i,J)\}\\
&\geq\Pr\bigg\{S^k\in\calT_i,~\frac{(P+1)\|Z^n\|^2-\|X^n+Z^n\|^2}{2(P+1)}\geq n\rmC(P)-\log M_i+O(\log n)\bigg\}\\
&=\Pr\bigg\{S^k\in\calT_i,~\frac{(P+1)\|Z^n\|^2-\|X^n+Z^n\|^2}{2(P+1)}\geq n\rmC(P)+\log\Psi_{\rm{sp}}(k,\Upsilon(i))-\log k+O(\log n)\bigg\}\label{usemisecondagain}\\
&\geq \Pr\bigg\{S^k\in\calT_i,~\frac{(P+1)\|Z^n\|^2-\|X^n+Z^n\|^2}{2(P+1)}\geq n\rmC(P)+\log\overline{g}(k,\Upsilon(i))-\log k+O(\log n)\bigg\}\label{uselammasp}\\
&\geq \Pr\bigg\{S^k\in\calT_i,~\frac{(P+1)\|Z^n\|^2-\|X^n+Z^n\|^2}{2(P+1)}\geq n\rmC(P)+\log\overline{g}\Big(k,\frac{\|S^k\|^2}{k}\Big)-\log k+O(\log n)\bigg\}\label{uselammasp2}\\
&=\Pr\bigg\{S^k\in\calT_i,~\frac{(P+1)\|Z^n\|^2-\|X^n+Z^n\|^2}{2(P+1)}\geq n\rmC(P)-k\rmR(\sigma^2,D)-\frac{\|S^k\|^2-k\sigma^2}{2\sigma^2}+O(\log n)\bigg\}\label{tayloragain2}\\
&=\Pr\bigg\{S^k\in\calT_i,~\frac{(P+1)\|Z^n\|^2-\|X^n+Z^n\|^2}{2(P+1)\rmR(\sigma^2,D)}+\frac{\|S^k\|^2-k\sigma^2}{2\sigma^2\rmR(\sigma^2,D)}\geq \sqrt{n\rmV_{\ddagger}(\zeta_\rms,\sigma^2,\zeta_\rmc,P)}\rmQ^{-1}(\varepsilon+\tau)+O(\log n)\bigg\}\label{usekconverse},
\end{align}
where \eqref{usemisecondagain} follows from the choice of $M_i$ in \eqref{choosemi}, \eqref{uselammasp} follows since i) $\Psi_{\rm{sp}}(k,p)\leq \overline{g}(k,p)$ for $p\geq 2D-\sigma^2$ and ii) for $k$ sufficiently large, $\Upsilon(i)\geq 2D-\sigma^2$ for all $i\in[1:\NT]$, \eqref{uselammasp2} follows from the facts that i) $\overline{g}(k,p)$ is decreasing in $p$ (implied by conclusion (ii)-b) in Lemma \ref{property}) and ii) for $s^k\in\calT_i$ (see \eqref{def:type}), we have $\frac{\|s^k\|^2}{k}\leq \Upsilon(i)$, \eqref{tayloragain2} follows similarly to arguments leading to \eqref{taylorpsi}, and \eqref{usekconverse} follows from the choice of $k$ in \eqref{kconsecond}.

Using \eqref{usekconverse}, we obtain that
\begin{align}
\nn&\Pr\Big\{S^k\notin\bigcup_{i=1}^{\NT}\calT_i\Big\}+\sum_{i=1}^{\NT}\Pr\{S^k\in\calT_i,(\hatI,\hatJ)\neq (i,J)\}\\
&\geq \Pr\bigg\{\frac{(P+1)\|Z^n\|^2-\|X^n+Z^n\|^2}{2(P+1)\rmR(\sigma^2,D)}+\frac{\|S^k\|^2-k\sigma^2}{2\sigma^2\rmR(\sigma^2,D)}\geq \sqrt{n\rmV_{\ddagger}(\zeta_\rms,\sigma^2,\zeta_\rmc,P)}\rmQ^{-1}(\varepsilon+\tau)+O(\log n)\bigg\}\\
\nn&=\Pr\bigg\{\sigma^2\big(P\|Z^n\|^2-nP-2\langle X^n,Z^n\rangle\big)+(P+1)\Big(\|S^k\|^2-k\sigma^2\Big)\\
&\qquad\qquad\qquad\geq 2\sigma^2(P+1)\rmR(\sigma^2,D)\Big(\sqrt{n\rmV_{\ddagger}(\zeta_\rms,\sigma^2,\zeta_\rmc,P)}\rmQ^{-1}(\varepsilon+\tau)\Big)+O(\log n)\bigg\}\label{spchannelcb}\\
&\geq \rmQ\Bigg(\rmQ^{-1}(\varepsilon+\tau)+O\bigg(\frac{\log n}{\sqrt{n}}\bigg)\Bigg)+O\bigg(\frac{1}{\sqrt{n}}\bigg)\label{simtoachsecond},
\end{align}
where \eqref{spchannelcb} follows since each codeword is generated independently and uniformly over a sphere with radius $\sqrt{nP}$ when we use a spherical codebook as the channel codebook, and \eqref{simtoachsecond} follows by using the Berry-Esseen theorem for functions of independent random variables~\cite[Proposition 1]{iri2015third} similarly to Appendix \ref{proofssstep4} and   details are thus omitted.

\subsection{Proof of Lemma \ref{vitallowerresidual}}
\label{proofvitallower}
Recall the notation and results in Section \ref{sec:channeloutput} and the definition of $j(s^k,\hat{\bs}_i)$ in \eqref{def:jskhsi}. For brevity, given two pairs $(\hati,\hatj)\in\calD$ and $(\tili,\tilj)\in\calD$, define the event
\begin{align}
\calA_{\hati,\hatj,\tili,\tilj}&:=\bigg\{\|X^n(\hati,\hatj)-Y^n\|^2\leq \|X^n(\tili,\tilj)-Y^n\|^2+2\log\frac{M_{\tili}}{M_{\hati}}\bigg\}\label{def:cerrorevent}.
\end{align}
Then for any $i\in[1:\NT]$, given $s^k\in\calT_i$ and the subcodebook $\hat{\bs}_i$, for any $(\hati,\hatj)\in\calD\setminus\{(i,j(s^k,\hat{\bs}_i))\}$, we have
\begin{align}
\nn&\Pr\{(\hatI,\hatJ)=(\hati,\hatj)|s^k,\hat{\bs}_i\}\\
&=\Pr\Big\{\bigcap_{(\tili,\tilj)\in\calD\setminus\{(\hati,\hatj)\}}\calA_{\hati,\hatj,\tili,\tilj}\Big|s^k,\hat{\bs}_i\Big\}\\
&=\mathbb{E}\bigg[\prod_{(\tili,\tilj)\in\calD\setminus\{(\hati,\hatj)\}}1\{\calA_{\hati,\hatj,\tili,\tilj}\}\Big|s^k,\hat{\bs}_i\bigg]\\
&=\mathbb{E}\Bigg[\mathbb{E}\bigg[\bigg(\prod_{(\tili,\tilj)\in\calD\setminus\{(\hati,\hatj),(i,j(s^k,\hat{\bs}_i))\}}1\{\calA_{\hati,\hatj,\tili,\tilj}\}\bigg)\times 1\{\calA_{\hati,\hatj,i,j(s^k,\hat{\bs}_i)}\}\bigg|X^n(\hati,\hatj),X^n(i,j(s^k,\hat{\bs}_i)),Z^n\bigg]\Bigg]\\ 
&=\mathbb{E}\Bigg[\bigg(\prod_{(\tili,\tilj)\in\calD\setminus\{(\hati,\hatj),(i,j(s^k,\hat{\bs}_i))\}}\Pr\Big\{\calA_{\hati,\hatj,\tili,\tilj}\Big|X^n(\hati,\hatj),X^n(i,j(s^k,\hat{\bs}_i)),Z^n\Big\}\bigg)\times 1\{\calA_{\hati,\hatj,i,j(s^k,\hat{\bs}_i)}\}\Bigg]\label{defphatij},
\end{align}
where \eqref{defphatij} holds since each channel codeword is generated independently. Using \eqref{defphatij}, given $s^k\in\calT_i$ and $\hat{\bs}_i$ for any $i\in[1:\NT]$, for any $(\hati,\hatj)\in\calD$ such that $\hati=i$ and $\hatj\neq j(s^k,\hat{\bs}_i)$, we have
\begin{align}
\nn&\Pr\{(\hatI,\hatJ)=(\hati,\hatj)|s^k,\hat{\bs}_i\}\\
&=\mathbb{E}\Bigg[\bigg(\prod_{(\tili,\tilj)\in\calD\setminus\{(i,\hatj),(i,j(s^k,\hat{\bs}_i))\}}\Pr\Big\{\calA_{\hati,\hatj,\tili,\tilj}\Big|X^n(i,\hatj),X^n(i,j(s^k,\hat{\bs}_i)),Z^n\Big\}\bigg)\times 1\{\calA_{i,\hatj,i,j(s^k,\hat{\bs}_i)}\}\Bigg]\\
\nn&=\mathbb{E}\Bigg[\bigg(\prod_{\tili\in[1:\NT]\setminus\{i\}}\prod_{\tilj\in[1:M_i]}\Pr\Big\{\calA_{\hati,\hatj,\tili,\tilj}\Big|X^n(i,\hatj),X^n(i,j(s^k,\hat{\bs}_i)),Z^n\Big\}\bigg)\\*
&\qquad\times\bigg(\prod_{\tilj\in[1:M_i]\setminus\{\hatj,j(s^k,\hat{\bs}_i)\}}\Pr\Big\{\calA_{\hati,\hatj,i,\tilj}\Big|X^n(i,\hatj),X^n(i,j(s^k,\hat{\bs}_i)),Z^n\Big\}\bigg)\times 1\{\calA_{i,\hatj,i,j(s^k,\hat{\bs}_i)}\}\Bigg]\\
\nn&=\mathbb{E}\Bigg[\bigg\{\prod_{\substack{\tili\in[1:\NT]\setminus\{i\}}}\bigg(\Pr\Big\{\|\hatX^n-Y^n\|^2\leq \|\barX^n-Y^n\|^2+2\log\frac{M_{\tili}}{M_i}\Big|\hatX^n,X^n,Z^n\Big\}\bigg)^{M_{\tili}}\bigg\}\\*
&\quad\qquad\times \bigg(\Pr\Big\{\|\hatX^n-Y^n\|^2\leq \|\barX^n-Y^n\|^2\Big|\hatX^n,X^n,Z^n\Big\}\bigg)^{M_i-2} \times 1\Big\{\|\hatX^n-Y^n\|^2\leq \|X^n-Y^n\|^2\Big\}\bigg]\Bigg]\label{cstep4},
\end{align} 
where \eqref{cstep4} follows from the definition of $\calA_{\hati,\hatj,\tili,\tilj}$ in \eqref{def:cerrorevent} and similar arguments used to obtain \eqref{cerror1}.  Note that given any $i\in[1:\NT]$, any $s^k\in\calT_i$ and any $\hat{\bs}_i$, the right hand side of \eqref{cstep4} depends only on $i$ (i.e., it does not depend either $(s^k,\hat{\bs}_i)$ or $(\hati,\hatj)$) as long as $\hati=i$ and $\hatj\neq j(s^k,\hat{\bs}_i)$. The proof of Lemma \ref{vitallowerresidual} is now complete.

\subsection{Proof of Lemma \ref{mdc4funcofirv}}
\label{proofmdc4func}
The proof of Lemma \ref{mdc4funcofirv} is inspired by \cite[Proposition 1]{iri2015third} and makes use of the following lemma.
\begin{lemma}
\label{mdc4irvs}
Let $\{U_i\}_{i=1}^\infty$ be a sequence of independent but not necessarily identically distributed zero mean random variables satisfying the following two conditions:
\begin{enumerate}
\item  There exists some ball $\calF$ around the origin such that for all $i\in \bbN$, $\lambda \in\calF\mapsto\Lambda_{U_i}(\lambda)$ is finite.
\item The limit  $\rmV :=\lim_{n\to\infty}\frac{1}{n}\sum_{i=1}^n\mathrm{Var}(U_i)$ exists and is positive.
\end{enumerate}
For any moderate deviations sequence $\eta_n$ (see \eqref{mdc:constaint}) and any positive number $\alpha$, we have
\begin{align}
\lim_{n\to\infty} -\frac{1}{n\eta_n^2}\log \Pr\Big\{\frac{1}{n}\sum_{i=1}^n U_i>\eta_n\alpha\Big\}
&=\frac{\alpha^2}{2\rmV}.
\end{align}
\end{lemma}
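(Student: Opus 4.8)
The plan is to establish matching bounds $\liminf$ and $\limsup$ on $-\tfrac{1}{n\eta_n^2}\log\Pr\{\tfrac1n\sum_{i=1}^n U_i>\eta_n\alpha\}$, both equal to $\tfrac{\alpha^2}{2\rmV}$, by running the Cram\'er--Chernoff argument at the moderate-deviations scale, i.e.\ with a tilting parameter $\lambda=\Theta(\eta_n)$ rather than $\Theta(1)$. Write $\sigma_i^2:=\var(U_i)$ and fix $0<\delta'$ strictly below the radius of the ball $\calF$ on which all the $\Lambda_{U_i}$ are finite. The engine of the proof is a \emph{uniform} Taylor expansion of the cumulant generating functions: $\sup_i\sup_{|\lambda|\le\delta'}|\Lambda_{U_i}'''(\lambda)|\le C<\infty$, so that $\Lambda_{U_i}(\lambda)=\tfrac12\sigma_i^2\lambda^2+\rho_i(\lambda)$ with $|\rho_i(\lambda)|\le C|\lambda|^3$ and $\Lambda_{U_i}''(\lambda)=\sigma_i^2+O(|\lambda|)$, all uniformly in $i$ for $|\lambda|\le\delta'$; in the setting of Lemma~\ref{mdc4funcofirv} the $U_i$ realise only finitely many distinct laws, so this uniformity is immediate.

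For the \emph{lower bound on the exponent}, fix $\theta>0$, put $\lambda_n:=\theta\eta_n$, which lies in $(-\delta',\delta')$ for $n$ large, and apply the exponential Markov inequality with independence:
\begin{align}
\Pr\Big\{\tfrac1n\sum_{i=1}^n U_i>\eta_n\alpha\Big\}\le\exp\Big(-\theta\alpha\,n\eta_n^2+\sum_{i=1}^n\Lambda_{U_i}(\theta\eta_n)\Big).
\end{align}
By the uniform Taylor bound and $\tfrac1n\sum_{i=1}^n\sigma_i^2\to\rmV$, one has $\sum_{i=1}^n\Lambda_{U_i}(\theta\eta_n)=\tfrac{\theta^2\rmV}{2}n\eta_n^2(1+o(1))+O(n\eta_n^3)=\tfrac{\theta^2\rmV}{2}n\eta_n^2+o(n\eta_n^2)$. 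Dividing by $-n\eta_n^2$, taking $\liminf$, and optimising the free parameter $\theta$ (optimum at $\theta=\alpha/\rmV$) gives $\liminf_n-\tfrac1{n\eta_n^2}\log\Pr\{\cdot\}\ge\tfrac{\alpha^2}{2\rmV}$.

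For the \emph{upper bound on the exponent} I would use exponential tilting. Fix small $\varepsilon>0$, set $\lambda_n:=(\alpha+\varepsilon)\eta_n/\rmV$, and let $\tilde{\mathbb P}$ be the product law under which $U_i$ has density $e^{\lambda_n u-\Lambda_{U_i}(\lambda_n)}$ against the law of $U_i$. Then $\mathbb E_{\tilde{\mathbb P}}[\tfrac1n\sum_i U_i]=\tfrac1n\sum_i\Lambda_{U_i}'(\lambda_n)=\lambda_n\rmV(1+o(1))=(\alpha+\varepsilon)\eta_n(1+o(1))$ and $\var_{\tilde{\mathbb P}}(\tfrac1n\sum_i U_i)=\tfrac1{n^2}\sum_i\Lambda_{U_i}''(\lambda_n)=O(1/n)$; since $\eta_n\sqrt n\to\infty$ (implied by the assumed $\eta_n\sqrt{n/\log n}\to\infty$) the standard deviation is $o(\eta_n)$, so Chebyshev gives $\tilde{\mathbb P}\{\alpha\eta_n<\tfrac1n\sum_i U_i<(\alpha+2\varepsilon)\eta_n\}\ge\tfrac12$ for $n$ large. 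Changing back from $\tilde{\mathbb P}$ to $\Pr$, restricting to this event (on which $\sum_i U_i<(\alpha+2\varepsilon)n\eta_n$), and using $\sum_i\Lambda_{U_i}(\lambda_n)=\tfrac{(\alpha+\varepsilon)^2}{2\rmV}n\eta_n^2(1+o(1))$,
\begin{align}
\Pr\Big\{\tfrac1n\sum_i U_i>\eta_n\alpha\Big\}\ge\tfrac12\exp\Big(-\lambda_n(\alpha+2\varepsilon)n\eta_n+\sum_i\Lambda_{U_i}(\lambda_n)\Big),
\end{align}
whence $\limsup_n-\tfrac1{n\eta_n^2}\log\Pr\{\cdot\}\le\tfrac{(\alpha+\varepsilon)(\alpha+2\varepsilon)}{\rmV}-\tfrac{(\alpha+\varepsilon)^2}{2\rmV}$; letting $\varepsilon\downarrow0$ gives $\le\tfrac{\alpha^2}{2\rmV}$, and combining the two directions proves the claim.

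I expect the main obstacle to be the uniformity step. Without the ``finitely many distinct laws'' structure of the intended application, the bare hypothesis that each $\Lambda_{U_i}$ is finite on a common ball does not by itself furnish a uniform third-derivative bound, so in full generality one would need either an added uniform-integrability-type condition or a truncation argument: truncate $U_i$ at a slowly growing level $\beta_n$, control the discarded mass, and run the Chernoff and tilting steps on the truncated variables. The secondary point requiring care is the tilted-measure concentration via Chebyshev, which is precisely where the moderate-deviations scaling enters --- in fact only $\eta_n=\omega(n^{-1/2})$ is needed here, the extra $\sqrt{\log n}$ being consumed elsewhere in the paper.
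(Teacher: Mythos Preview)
Your proposal is correct and is a direct, hands-on version of what the paper obtains by citation. The paper does not actually prove Lemma~\ref{mdc4irvs}: it simply asserts that the result follows from the i.i.d.\ moderate-deviations theorem via the G\"artner--Ellis theorem and cites a prior paper. Your Cram\'er--Chernoff upper bound plus exponential-tilting lower bound is precisely what lies inside the G\"artner--Ellis machinery once one verifies that the scaled limiting cumulant $\lim_n\frac{1}{n\eta_n^2}\sum_{i=1}^n\Lambda_{U_i}(\eta_n\lambda)=\tfrac{\rmV\lambda^2}{2}$ exists and is differentiable; so the two routes differ only in packaging, and yours has the merit of being self-contained.

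Your concern about uniformity is well placed and applies equally to the G\"artner--Ellis route: verifying the limit above requires controlling the remainders $\Lambda_{U_i}(\eta_n\lambda)-\tfrac12\sigma_i^2\eta_n^2\lambda^2$ uniformly in $i$, and hypotheses (i)--(ii) alone do not supply this. The paper glosses over the point. One small correction: it is not Lemma~\ref{mdc4funcofirv} itself whose $\bU_i$ realise finitely many laws, but rather its downstream application in Appendix~\ref{proof:mdcachess}, where the vectors $(A_{1,i},\dots,A_{4,i})$ take one law for $i\le n$ and another for $n<i\le n+k$. In that concrete setting your uniformity is indeed immediate, so for the purposes of the paper the gap is harmless.
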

Lemma \ref{mdc4irvs} is a straightforward generalization of \cite[Theorem 3.7.1]{dembo2009large} to independent but not necessarily identically distributed random variables using G\"artner-Ellis Theorem (cf. \cite[Theorem 2.3.6]{dembo2009large}) and also appeared in \cite{hayashi2015asymmetric}. The proof of Lemma \ref{mdc4irvs} is thus omitted.

Define the typical set
\begin{align}
\calB&:=\Big\{\bu^n:\Big\|\frac{1}{n}\sum_{i=1}^n\bu_i\Big\|_{\infty}\leq \eta_n^{3/4}\Big\}\label{def:calb}.
\end{align}

For any $\bu^n\in\calB$, Taylor expanding $f(\cdot)$  at $\mathbf{0}$ and noting that the second-order derivatives of the function $f$ are uniformly bounded, we obtain that
\begin{align}
f\Big(\frac{1}{n}\sum_{i=1}^n\bu_i\Big)
&=f(\mathbf{0})+\frac{1}{n}\sum_{i=1}^n\langle \bJ,\bu_i\rangle+o(\eta_n)\label{taylor4mdc}.
\end{align}
Thus, using \eqref{taylor4mdc}, we obtain that 
\begin{align}
\Pr\Big\{f\Big(\frac{1}{n}\sum_{i=1}^n\bU_i\Big)\geq f(\mathbf{0})+\alpha\eta_n\Big\}
&\leq \Pr\Big\{f\Big(\frac{1}{n}\sum_{i=1}^n\bU_i\Big)\geq f(\mathbf{0})+\alpha\eta_n,~\bU^n\in\calB\Big\}+\Pr\{\bU^n\notin\calB\}\\
&=\Pr\Big\{\frac{1}{n}\sum_{i=1}^n\langle\bJ,\bU_i\rangle\geq \alpha\eta_n+o(\eta_n),~\bU^n\in\calB\Big\}+\Pr\{\bU^n\notin\calB\}\\
&\leq \Pr\Big\{\frac{1}{n}\sum_{i=1}^n\langle\bJ,\bU_i\rangle\geq \alpha\eta_n+o(\eta_n)\Big\}+\Pr\{\bU^n\notin\calB\}\label{mdcstep0}.
\end{align}
Similarly, 
\begin{align}
\Pr\Big\{f\Big(\frac{1}{n}\sum_{i=1}^n\bU_i\Big)\geq f(\mathbf{0})+\alpha\eta_n\Big\}
&\geq \Pr\Big\{\frac{1}{n}\sum_{i=1}^n\langle\bJ,\bU_i\rangle\geq \alpha\eta_n+o(\eta_n)\Big\}-\Pr\{\bU^n\notin\calB\}\label{usea&b}.
\end{align}

Note that 
\begin{align}
\bJ\rmV_n\bJ^\rmT
&=\frac{1}{n}\sum_{i=1}^n\mathrm{Var}[\langle\bJ,\bU_i\rangle]\label{mdcstep2}.
\end{align}
Therefore, using \eqref{mdcstep2} and Lemma \ref{mdc4irvs}, we conclude that under conditions (i) and (iv)  in Lemma \ref{mdc4funcofirv},
\begin{align}
\lim_{n\to\infty}-\frac{1}{n\eta_n^2}\log \Pr\Big\{\frac{1}{n}\sum_{i=1}^n\langle\bJ,\bU_i\rangle\geq \alpha\eta_n+o(\eta_n)\Big\}&=\frac{\alpha^2}{2\rmV}\label{mdcstep3}.
\end{align}
In the rest of the proof, we   upper bound $\Pr\{\bU^n\notin\calB\}$. Using the definition of $\calB$ in \eqref{def:calb}, we obtain that
\begin{align}
\Pr\{\bU^n\notin\calB\}
&\leq \sum_{t=1}^d \Pr\Big\{\Big|\frac{1}{n}\sum_{i=1}^nU_{i,t}\Big|>\eta_n^{3/4}\Big\}\label{upp14use}.
\end{align}

Recall the definition of $\rmV(t)$ in \eqref{def:rmvtlimit}. Following similar steps to prove \cite[Theorem~3.7.1]{dembo2009large} and using \eqref{upp14use}, one can show that under conditions (ii) and (iii) in Lemma \ref{mdc4funcofirv}, 
\begin{align}
\liminf_{n\to\infty}-\frac{1}{n\eta_n^{3/2}}\log \Pr\{\bU^n\notin\calB\}
&\geq \min_{t\in[1:d]}\liminf_{n\to\infty}-\frac{1}{n\eta_n^{3/2}}\log\Pr\Big\{\Big|\frac{1}{n}\sum_{i=1}^nU_{i,t}\Big|>\eta_n^{3/4}\Big\}\label{mdcstep4_} \\
&=\frac{1}{\max_{t\in[1:d]}2\rmV(t)}\label{mdcstep4}.
\end{align} 
Hence, the term $\Pr\{\bU^n\notin\calB\}$ is asymptotically negligible.
The proof of Lemma \ref{mdc4funcofirv} is completed by combining~\eqref{mdcstep0},~\eqref{usea&b},~\eqref{mdcstep3},  and~\eqref{mdcstep4}.

\subsection{Proof of Lemma \ref{mdc:achessential}}
\label{proof:mdcachess}
The proof of Lemma \ref{mdc:achessential} is similar to that of Lemma \ref{ach:essential} except that we use  Lemma \ref{mdc4funcofirv} instead of   Berry-Esseen Theorems. 

Using the definition of $\overline{h}_{\ddagger}(n,i)$ in \eqref{def:ohni}, we obtain that for any $\ddagger\in\rm\{sp,iid\}$, we obtain that
\begin{align}
\overline{h}_{\ddagger}(n,i)
&\leq \Pr\bigg\{\frac{(P+1)\|Z^n\|^2-\|X^n+Z^n\|^2}{2(P+1)}\geq n\rmC(P)-\log M_i-\log \big(\NT K_0\exp(n\eta_n^{3/2})\big)\bigg\}+\exp(-n\eta_n^{3/2})\label{toexplainoeta} .
\end{align}
This can be done similar to the steps from~\eqref{usescarlett} to \eqref{usechannel2} except that we  replace $\frac{1}{\sqrt{n}}$ by $\exp(-n\eta_n^{3/2})$.

We first consider the  case when we use spherical codebooks for both source and channel codebooks.  
Recalling the definition of $\gamma_{\rm{sp}}(\cdot)$ and the definitions of random variables $\{(A_{1,i} ,A_{2,i},A_{3,i},A_{4,i})\}_{i\in[1:n+k]}$ in Appendix  \ref{proofssstep4}, we have that
\begin{align}
\nn&\sum_{i=1}^{\NT}\int_{s^k\in\calT_i} \overline{h}_{\ddagger}(n,i) f_{S^k}(s^k) \rmd s^k-\exp(-n\eta_n^{3/2})\\
\nn&\leq \Pr\Big\{\sigma^2\big(P\|Z^n\|^2-nP-2\langle X^n,Z^n\rangle\big)+(P+1)\big(\|S^k\|^2-k\sigma^2\big)\\*
&\qquad\qquad\qquad\qquad\qquad\qquad\qquad\qquad\geq 2\sigma^2(P+1)\big(n\rmC(P)-k\rmR(\sigma^2,D)+o(n\eta_n)\Big\}\label{oetanforreasons}\\
\nn&=\Pr\bigg\{(n+k)\gamma_{\rm{sp}}\Big(\frac{1}{n+k}\sum_{i=1}^{n+k}A_{1,i},\frac{1}{n+k}\sum_{i=1}^{n+k}A_{2,i},\frac{1}{n+k}\sum_{i=1}^{n+k}A_{3,i},\frac{1}{n+k}\sum_{i=1}^{n+k}A_{4,i}\Big)\\*
&\qquad\qquad\qquad\qquad\qquad\qquad\qquad\qquad \geq 2\sigma^2(P+1)\rmR(\sigma^2,D)n\big(\eta_n+o\big(\eta_n\big)\big)\bigg\}\label{useftouse}\\
\nn&=\Pr\bigg\{\gamma_{\rm{sp}}\Big(\frac{1}{n+k}\sum_{i=1}^{n+k}A_{1,i},\frac{1}{n+k}\sum_{i=1}^{n+k}A_{2,i},\frac{1}{n+k}\sum_{i=1}^{n+k}A_{3,i},\frac{1}{n+k}\sum_{i=1}^{n+k}A_{4,i}\Big)\\*
&\qquad\qquad\qquad\qquad\qquad\qquad\qquad\qquad \geq \frac{2\sigma^2(P+1)\rmR(\sigma^2,D)}{1+\rho^*(P,\sigma^2,D)}\big(\eta_n+o(\eta_n)\big)\bigg\}\label{useconsmdc}.
\end{align}
where \eqref{oetanforreasons} follows since $\eta_n^{3/2}=o(\eta_n)$, $\log N=o(\log n)$ (see~\eqref{def:Ntypes} and \eqref{kachmdc}) and $\frac{\log n}{n}=o(\eta_n)$ (see \eqref{mdc:constaint}), \eqref{useftouse} follows from the definitions of $A_{1,i}$, $A_{2,i}$, $A_{3,i}$ and $A_{4,i}$ in Appendix  \ref{proofssstep4}, the result in \eqref{ftouse} and the choice of $k$ in \eqref{kachmdc}, and \eqref{useconsmdc} follows from the fact that
\begin{align}
\frac{n(\eta_n+o(\eta_n))}{n+k}
&=\frac{\eta_n+o(\eta_n)}{1+\rho^*(P,\sigma^2,D)}. \label{eqn:simplify_eta}
\end{align}
From the definitions of $A_{1,i}$, $A_{2,i}$, $A_{3,i}$ and $A_{4,i}$ in Appendix  \ref{proofssstep4}, we conclude that the conditions of Lemma \ref{mdc4funcofirv} are all satisfied. Thus, using \eqref{useconsmdc} and Lemma \ref{mdc4funcofirv}, we see that the proof of Lemma \ref{mdc:achessential} is completed for the case when $\dagger=\rm{sp}$ and $\dagger=\rm{sp}$.

Next, we consider the proof of Lemma \ref{mdc:achessential} when we use a spherical codebook for the source codebook and an i.i.d.\  Gaussian codebook for the channel codebook. The proof of Lemma \ref{mdc:achessential} for this case differs from the proof of Lemma \ref{ach:essential} only in the analysis of the probability term in \eqref{iidsp4use} and it can be done by using Lemma \ref{mdc4funcofirv}. The proofs for the other two cases are similar and thus omitted.

\bibliographystyle{IEEEtran}
\bibliography{IEEEfull_lin}

\end{document}